\newcommand{\T}{\rule{0pt}{3.6ex}}
\newcommand{\B}{\rule[-2.2ex]{0pt}{0pt}}
\newcommand{\M}{\rule{0pt}{3.2ex}}
\theoremstyle{plain}
\newtheorem{theorem}{Theorem}[section]
\newtheorem{proposition}[theorem]{Proposition}
\theoremstyle{definition}
\newtheorem{definition}[theorem]{Definition}
\theoremstyle{remark}
\newtheorem{example}[theorem]{Example}
\newlist{cindep}{enumerate}{1}
\setlist[cindep]{label=\textbf{I\arabic*}:, ref=\textbf{I\arabic*},
  leftmargin=*}
\newlist{qindep}{enumerate}{1}
\setlist[qindep]{label=\textbf{QI\arabic*}:, ref=\textbf{QI\arabic*},
  leftmargin=*}
\newlist{ccindep}{enumerate}{1}
\setlist[ccindep]{label=\textbf{CI\arabic*}:, ref=\textbf{CI\arabic*},
leftmargin=*}
\newlist{qcindep}{enumerate}{1}
\setlist[qcindep]{label={\normalfont\textbf{QCI\arabic*:}},
  ref=\textbf{QCI\arabic*}, leftmargin=*}
\newcommand{\Hilb}[1][]{\ensuremath{\mathcal{H}_{#1}}}
\newcommand{\Ket}[1]{\ensuremath{\left \vert #1 \right \rangle}}
\newcommand{\Bra}[1]{\ensuremath{\left \langle #1 \right \vert}}
\newcommand{\Tr}[2][]{\ensuremath{\text{Tr}_{#1} \left ( #2 \right )
  }}
\newcommand{\Lin}[1][]{\ensuremath{\mathfrak{L} \left ( \Hilb[#1] \right )}}
\newcommand{\Sprod}{\ensuremath{\star}}
\begin{document}

\title{Towards a Formulation of Quantum Theory as a Causally Neutral
  Theory of Bayesian Inference} 
\date{October 22, 2013}
\author{M. S. Leifer} 
\email{matt@mattleifer.info}
\affiliation{Department of Physics and
  Astronomy, University College London, Gower Street, London WC1E 6BT,
  United Kingdom} 
\affiliation{Perimeter Institute for Theoretical Physics, 31 Caroline
  St. N, Waterloo, Ontario, Canada, N2L 2Y5}
\author{Robert W. Spekkens}
\email{rspekkens@perimeterinstitute.ca}
\affiliation{Perimeter Institute for Theoretical Physics, 31 Caroline
  St. N, Waterloo, Ontario, Canada, N2L 2Y5}

\begin{abstract}
  Quantum theory can be viewed as a generalization of classical
  probability theory, but the analogy as it has been developed so far
  is not complete.  Whereas the manner in which inferences are
  made in classical probability theory is independent of the causal
  relation that holds between the conditioned variable and the
  conditioning variable, in the conventional quantum
  formalism, there is a significant difference between how one treats
  experiments involving two systems at a single time and those
  involving a single system at two times. In this article, we develop
  the formalism of \emph{quantum conditional states}, which provides a
  unified description of these two sorts of experiment.  
 In addition, concepts that are distinct in the conventional formalism become unified: channels, sets of states, and positive operator valued measures are all seen to be instances of conditional states; the action of a channel on a state, ensemble averaging, the Born rule, the composition of channels, and nonselective state-update rules are all seen to be instances of belief propagation.  
Using a quantum generalization of Bayes' theorem and the associated notion of Bayesian conditioning, we also show that the remote steering of quantum states can be described within our formalism as a mere updating of beliefs about one system given new information about another, and retrodictive inferences can be expressed using the same belief propagation rule as is used for predictive inferences. 
Finally, we show that previous arguments for interpreting the projection postulate as a quantum generalization of Bayesian conditioning are based on a misleading analogy and that it is best understood as a combination of belief propagation (corresponding to the nonselective state-update map) and conditioning on the measurement outcome.

\end{abstract}

\pacs{03.65.Ca, 03.65.Ta, 03.67.-a} \keywords{quantum conditional
  probability, quantum dynamics, quantum measurement, retrodiction,
  steering}

\maketitle

\section{Introduction}

\label{Intro}

Quantum theory can be understood as a non-commutative generalization
of classical probability theory wherein probability measures are
replaced by density operators.  Much of quantum information theory,
especially quantum Shannon theory, can be viewed as the systematic
application of this generalization of probability theory to
information theory.

However, despite the power of this point of view, the conventional
formalism for quantum theory is a poor analogue to classical probability
theory because, in quantum theory, the appropriate mathematical
description of an experiment depends on its causal structure.  For
example, experiments involving a pair of systems at space-like
separation are described differently from those that involve a single
system at two different times.  The former are described by a joint
state on the tensor product of two Hilbert spaces, and the latter by
an input state and a dynamical map on a single Hilbert space.
Classical probability works at a more abstract level than this.  It
specifies how to represent uncertainty prior to, and independently of,
causal structure.  For example, our uncertainty about two random
variables is always described by a joint probability distribution,
regardless of whether the variables represent two space-like separated
systems or the input and output of a classical channel.  Although
channels represent time evolution, they are described mathematically
by conditional probability distributions.  The input state specifies a
marginal distribution, and thus we have the ingredients to define a
joint probability distribution over the input and output variables.
This joint probability distribution could equally well be used to
describe two space-like separated variables.  Therefore, we do not
need to know how the variables are embedded in space-time in advance
in order to apply classical probability theory.  This has the
advantage that it cleanly separates the concept of correlation from
that of causation. The former is the proper subject of probabilistic
inference and statistics.  Within the subjective Bayesian approach to
probability, independence of inference and causality has been
emphasized by de Finetti (\cite{Finetti1974}, Preface pp.\ x--xi):

\begin{quotation}
  Probabilistic reasoning---always to be understood as
  subjective---merely stems from our being uncertain about something.
  It makes no difference whether the uncertainty relates to an
  unforeseeable future, or to an unnoticed past, or to a past
  doubtfully reported or forgotten; it may even relate to something
  more or less knowable (by means of a computation, a logical
  deduction, etc.) but for which we are not willing to make the
  effort; and so on.
\end{quotation}

Thus, in order to build a quantum theory of Bayesian inference, we
need a formalism that is even-handed in its treatment of different
causal scenarios.  There are some clues that this might be possible.
Several authors have noted that that there are close connections, and
often isomorphisms, between the statistics that can be obtained from
quantum experiments with distinct causal arrangements
\cite{Taylor2004, Leifer2006, Leifer2007, Evans2010, Wharton2011,
  Marcovitch2011, Marcovitch2011a}.  Time reversal symmetry is an
example of this, but it is also possible to relate experiments
involving two systems at the same time with those involving a single
system at two times.  The equivalence \cite{Bennett1992} of
prepare-and-measure \cite{Bennett1984} and entanglement-based
\cite{Ekert1991} quantum key distribution protocols is an example of
this, and provides the basis for proofs of the security of the former
\cite{Shor2000}.  Such equivalences suggest that it may be possible to
obtain a causally neutral formalism for quantum theory by describing
such isomorphic experiments by similar mathematical objects.

One of the main goals of this work is to provide this unification for
the case of experiments involving two distinct quantum systems at one
time and those involving a single quantum system at two times, and to
provide a framework for making probabilistic inferences that is
independent of this causal structure.  Both types of experiment can be
described by operators on a tensor product of Hilbert spaces,
differing from one another only by a partial transpose.  Probabilistic
inference is achieved using a quantum generalization of Bayesian
conditioning applied to \emph{quantum conditional states}, which are
the main objects of study of this work.

Quantum conditional states are a generalization of classical
conditional probability distributions.  Conditional probability plays
a key role in classical probability theory, not least due to its role
in Bayesian inference, and there have been attempts to generalize it
to the quantum case.  The most relevant to quantum information are
perhaps the quantum conditional expectation \cite{Umegaki1962} (see
\cite{Petz2008, R'edei2007} for a basic introduction and
\cite{Petz1988} for a review) and the Cerf-Adami conditional density
operator \cite{Cerf1997, Cerf1997a, Cerf1998}.  To date, these have
not seen widespread application in quantum information, which casts
some doubt on whether they are really the most useful generalization
of conditional probability from the point of view of practical
applications.  Quantum conditional states, which have previously
appeared in \cite{Asorey2005, Zhang2007, Leifer2007}, provide an
alternative approach to this problem.  We show that they are useful
for drawing out the analogies between classical probability and
quantum theory, they can be used to describe both space-like and
time-like correlations, and they unify concepts that look distinct in
the conventional formalism.
  
The remainder of the introduction summarizes the contents of this
article.  It is meant to provide a broad overview of the conditional
states formalism, its motivations, and its applications, while
introducing only a minimum of the technical details found in the rest
of the paper.

\subsection{Irrelevance of causal structure to the rules of inference}

Unifying the quantum description of experiments involving two distinct
systems at one time with the description of those involving a single
system at two distinct times requires some modifications to the way
that the Hilbert space formalism of quantum theory is usually set up.
Conventionally, a Hilbert space $\Hilb[A]$ describes a system,
labelled $A$, that persists through time.  Given two such systems, $A$
and $B$, the joint system is described by the tensor product
$\Hilb[AB]=\Hilb[A] \otimes \Hilb[B]$.  In the present work, a Hilbert
space and its associated label should rather be thought of as
representing a localized region of space-time.  Specifically, an
\emph{elementary region} is a small space-time region in which an
agent might possibly make a single intervention in the course of an
experiment, for example by making a measurement or by preparing a
specific state.  Each elementary region is associated with a label and
a Hilbert space, for instance, $A$ and $\Hilb[A]$.

Generally, a \emph{region} will refer to a collection of elementary
regions.  A region that is composed of a pair of disjoint regions,
labelled $A$ and $B$, is ascribed the tensor product Hilbert space
$\Hilb[AB]=\Hilb[A] \otimes \Hilb[B]$.  In contrast to the usual
formalism, this applies regardless of whether $A$ and $B$ describe
independent systems or the same system at two different times.
Because of this, if an experiment involves a system that does persist
through time, then a different label is given to each region it
inhabits, e.g., the input and output spaces for a quantum channel are
assigned different labels.

Although we have motivated our work by the distinction between spatial
and temporal separation, in fact it is \emph{not} the spatio-temporal
relation between the regions that is relevant for how they ought to be
represented in our quantum generalization of probability theory.
Rather, it is the \emph{causal} relation that holds between them which
is important.

More precisely, what is important is the distinction between two
regions that are \emph{causally-related}, which is to say that one has
a causal influence on the other (perhaps via intermediaries), and two
regions that are \emph{acausally-related}, which is to say that
neither has a causal influence on the other (although they may have a
common cause or a common effect, or be connected via intermediaries to
a common cause or a common effect).

The causal relation between a pair of regions cannot be inferred
simply from their spatio-temporal relation.  Consider a relativistic
quantum theory for instance.  Although a pair of regions that are
space-like separated are always acausally-related, a pair of regions
that are time-like separated can be related causally, for instance if
they constitute the input and the output of a channel, or they can be
related acausally, for instance if they constitute the input of one
channel and the output of another. Although time-like separation
implies that a causal connection is \emph{possible}, it is whether
such a connection \emph{actually holds} that is relevant in our
formalism.  The distinction can also be made in non-relativistic
theories, and in theories with exotic causal structure.  Indeed,
causal structure is a more primitive notion than spatio-temporal
structure, and it is all that we need here.

Typically, we shall confine our attention to two paradigmatic examples
of causal and acausal separation (which can be formulated in either a
relativistic or a non-relativistic quantum theory).  Two distinct
regions at the same time, the correlations between which are
conventionally described by a bipartite quantum state, are
acausally-related.  The regions at the input and output of a quantum
channel, the correlations between which are conventionally described
by an input state and a quantum channel, are causally-related
(although there are exceptions, such as a channel which erases the
state of the system and then re-prepares it in a fixed
state).\footnote{Although it is not required here, one can be more
  precise about this distinction as follows. A causal structure for a
  set of quantum regions is represented by a directed acyclic graph
  wherein the nodes are the regions and the directed edges are
  relations of causal dependence (the restriction to acyclic graphs
  prohibits causal loops).  Two regions are said to be
  causally-related if for all paths connecting one to the other in the
  graph, every edge along the path is directed in the same sense.  Two
  systems are said to be acausally-related if for all paths connecting
  one to the other, not every edge along the path is directed in the
  same sense.  When there exist both sorts of paths between a pair of
  nodes, the associated regions are neither purely causally nor purely
  acausally-related.  We do not consider this case in the article.}

We unify the description of Bayesian inference in the two different
causal scenarios in the sense that various formulas are shown to have
precisely the same form, in particular, the relation between joints
and conditionals, the formula for Bayesian inversion and the formula
for belief propagation.

\subsection{Basic elements of the formalism}

Without providing all the details, we summarize the analogues, within
our formalism, of the most basic elements of classical probability
theory.  These are presented in table~\ref{tbl:Intro:Basics}.

\begin{table*}[htb]
  \begin{tabular}{|p{15em}|>{\centering}p{15em}|>{\centering}p{15em}|}
    \hhline{|~|-|-|}
    \multicolumn{1}{p{15em}|}{\T\B}& {\bf Classical} & {\bf Quantum} \tabularnewline
    \hline
    \T State & $P(R)$ & $\rho_A$ \tabularnewline
    \M Joint state & $P(R,S)$ & $\sigma_{AB}$ \tabularnewline
    \M\B Marginalization & $P(S) = \sum_R P(R,S)$ & $\rho_B = \Tr[A]{\sigma_{AB}}$ \tabularnewline
    \hline
    \T Conditional state & $P(S|R)$ & $\sigma_{B|A}$ \tabularnewline
    \M\B & $\sum_S P(S|R) = 1$ & $\Tr[B]{\sigma_{B|A}} = I_A$ \tabularnewline
    \hline
    \T Relation between joint and & $P(R,S) = P(S|R)P(R)$ & $\sigma_{AB} = \sigma_{B|A} \Sprod \rho_A$ \tabularnewline
    \M\B  conditional states & $P(S|R) = P(R,S)/P(R)$ & $\sigma_{B|A} = \sigma_{AB} \Sprod
    \rho_A^{-1}$ \tabularnewline
    \hline
    \T\B Bayes' theorem & $P(R|S) = P(S|R)P(R)/P(S)$ & $\sigma_{A|B} = \sigma_{B|A} \Sprod (\rho_A \rho_B^{-1})$ \tabularnewline
    \hline
    \T\B Belief propagation & $P(S) = \sum_R P(S|R)P(R)$ & $\rho_{B} = \Tr[A]{\sigma_{B|A} \rho_A}$ \tabularnewline
    \hline
  \end{tabular}
  \caption{\label{tbl:Intro:Basics}Analogies between the classical
    theory of Bayesian inference and the conditional states formalism
    for quantum theory.}
\end{table*}

For an elementary region $A$, the quantum analogue of a normalized
probability distribution is a conventional quantum state $\rho_A$,
that is, a positive trace-one operator on $\Hilb[A]$.  For a region
$AB$, composed of two disjoint elementary regions, the analogue of a
joint probability distribution is a trace-one operator $\sigma_{AB}$
on $\Hilb[AB]$.  This operator is not always positive (but we will
nonetheless refer to it as a \emph{state}).  The marginalization
operation is replaced by the partial trace operation, $\textrm{Tr}_A$,
which corresponds to ignoring region $A$.  The role of the marginal
distribution is played by the marginal state $\rho_B =
\Tr[A]{\sigma_{AB}}$.

The quantum analogue of a conditional probability is a
\emph{conditional state} for region $B$ given region $A$.  This is an
operator on $\Hilb[AB]$, denoted $\sigma_{B|A}$, that satisfies
$\Tr[B]{\sigma_{B|A}}= I_A$.

The relation between a conditional state and a joint state is
$\sigma_{B|A} = \sigma_{AB} \Sprod \rho_A^{-1}$, where the
$\Sprod$-product is a particular noncommutative and nonassociative
product, defined by $M \Sprod N \equiv N^{1/2} M N^{1/2}$, where we
have adopted the convention of dropping identity operators and tensor
products, so that $\sigma_{AB} \Sprod \rho_A^{-1}$ is shorthand for
$\sigma_{AB} \Sprod (\rho_A^{-1} \otimes I_B) = (\rho_A^{-1/2} \otimes
I_B) \sigma_{AB} (\rho_A^{-1/2} \otimes I_B)$.

This relation implies that the quantum
analogue of Bayes' theorem, relating $\sigma_{B|A}$ and $\sigma_{A|B}$, is
$\sigma_{A|B} = \sigma_{B|A} \Sprod (\rho_A \rho_B^{-1})$.  

A standard example of inference then proceeds as follows.  Suppose a
conditional state $\sigma_{B|A}$ represents your beliefs about the
relation that holds between a pair of elementary regions.  In this
case, if you represent your beliefs about $A$ by the quantum state
$\rho_A$, then you must represent your beliefs about $B$ by the
quantum state $\rho_B$, where
\begin{equation}
\rho_B = \Tr[A]{\sigma_{B|A} \rho_A}.
\end{equation}
We refer to this map from $\rho_A$ to $\rho_B$ as \emph{belief
  propagation}\footnote{Note that the term ``belief propagation'' has
  also been used to describe message-passing algorithms for performing
  inference on Bayesian networks.  This is not the intended meaning
  here.}.

\subsection{Relevance of causal structure to the form of the state}

In the case of acausally-related regions, it is the joint state that
is easily inferred from the conventional formalism, and the
conditional state that is derived from the joint.  Specifically, if
$A$ and $B$ are acausally-related, then their joint state,
$\sigma_{AB}$, is simply the bipartite state that one would assign to
them in the conventional formalism.  Consequently $\sigma_{AB}$ is a
positive operator in this case.  The conditional state can be inferred
from the rule relating joints to conditionals, namely, $\sigma_{B|A} =
\sigma_{AB} \Sprod \rho_A$.  It follows that $\sigma_{B|A}$ is also a
positive operator.

On the other hand, if $A$ and $B$ are causally-related, then it is the
conditional state that is easily inferred from the conventional
formalism, and the joint state that is derivative.  Specifically, if
the regions are related by a quantum operation $\mathcal{E}_{B|A}$,
then $\sigma_{B|A}$ is defined as the operator on $\Hilb[A]\otimes
\Hilb[B]$ that is Jamio{\l}kowski-isomorphic to
$\mathcal{E}_{B|A}$~\cite{Jamiolkowski1972}.  The joint state is then
inferred from the rule relating joints to conditionals.  One can show
that both $\sigma_{B|A}$ and $\sigma_{AB}$ fail to be positive in
general, but they have positive partial transpose.

Because of this, the set of permissible joint and conditional states
for acausally-related regions is different from the set for
causally-related regions.  To distinguish the two cases, we use the
notation $\rho_{AB}$ and $\rho_{B|A}$ for the acausal case, and
$\varrho_{AB}$ and $\varrho_{B|A}$ for the causal case.

It is important to note that in a classical theory of Bayesian
inference, it is the rules of inference that are independent of the
causal relations that hold among the variables.  The causal relations
can still be relevant, however, for constraining the probability
distribution that is assigned to those variables.  For instance, the
causal relations among a triple of variables are significant for the
sort of probability distribution that can be assigned to them.
Specifically, if variable $R$ is a common cause of variables $S$ and
$T$, while there is no direct causal connection between $S$ and $T$,
then $S$ and $T$ should be conditionally independent given $R$, which
is to say that the joint distribution over these variables is not
arbitrary, but has the form $P(R,S,T)=P(S|R)P(T|R)P(R)$.

In the quantum case, the situation is similar.  The rules of
inference, such as the formula for belief propagation, the formula for
Bayesian inversion, and the relation between the joint and the
conditional, do not depend on the causal relations between the regions
under consideration, but causal relations \emph{do} constrain the set
of operators that can describe joint states.

In fact, the dependence is stronger in the quantum case because the
set of permissible states depends on the causal relation even for a
\emph{pair} of regions.  This is not a feature of a classical theory
of inference: if we consider all the possible joint distributions over
a pair of variables, $R$ and $S$, we find that the set of
possibilities is the same for the case where $R$ and $S$ are
\emph{causally}-related as it is for the case where $R$ and $S$ are
\emph{acausally}-related.

To reiterate: the \emph{fact} that the set of possible states that can
be assigned to a set of regions is constrained by the causal relation
between those regions is \emph{common} to the classical and quantum
theories of inference.  What is particular to the theory of quantum
inference is that even in the case of a \emph{pair} of regions, the
causal relation between the regions is relevant for the set of
possible states that can be assigned to those regions.\footnote{There
  does exist a classical analogue of this dependence on causal structure
  for pairs of regions, but it requires considering the case of a
  classical theory with an epistemic restriction \cite{Bartlett2011}.
  We do not pursue the analogy here.}

\subsection{Recasting conventional quantum notions in terms of conditional states and belief propagation}

The conditional states formalism incorporates the possibility that a
given region is associated to a classical variable rather than a
quantum system.  In this case, the classical variable is represented
by a Hilbert space with a preferred basis, where the different
elements of the basis correspond to different values of the variable,
and any state assigned to that region is diagonal in that basis.  Any
joint state or conditional state involving this region is also
restricted to have this diagonal form.  It follows that for a set of
regions that are all classical, the formalism reproduces the classical
theory of Bayesian inference.

The formalism also yields a new and unified perspective on many
notions in quantum theory.  To see this, it is useful to recall that
measurements, sets of state preparations, and transformations can all
be represented by quantum operations, that is, as completely positive
trace-preserving (CPT) linear maps.  Channels are CPT maps wherein the
input and output spaces are both quantum.  A positive operator valued
measure (POVM) is a CPT map from a quantum input to a classical output
(the measurement outcome).  A set of states is a CPT map from a
classical input (the state index) to a quantum output (the associated
state).  Finally, a quantum instrument, which is a measurement
together with a state update rule for every outcome, can be
represented as a CPT map from a quantum input to a composite output
with a quantum part (the updated state) and a classical part (the
measurement outcome).  Insofar as every CPT map defines a conditional
state, each of these notions in quantum theory is an instance of a
conditional state in our formalism.  This is summarized in the top
half of table~\ref{tbl:CCS:Conventional}.

\begin{table*}[htb]
  \begin{tabular}{|p{18em}|>{\centering}p{15em}|>{\centering}p{15em}|}
    \hhline{|~|-|-|}
    \multicolumn{1}{p{18em}|}{\T\B} & {\bf Conventional Notation} &
    {\bf Conditional States Formalism} \tabularnewline
    \hline
    \T \hspace{0.5em} {\bf Probability distribution of $X$} & $P(X)$ & $\rho_{X}$ \tabularnewline
    \M\B \hspace{0.5em} {\bf Probability that $X=x$} & $P(X=x)$ & $\rho_{X=x}$ \tabularnewline
    \hline
    \T \hspace{0.5em} {\bf Set of states on $A$} & $\{ \rho^A_x \}$ & $\varrho_{A|X}$ \tabularnewline
    \M\B \hspace{0.5em} {\bf Individual state on $A$} & $\rho^A_x$ & $\varrho_{A|X=x}$ \tabularnewline
    \hline
    \T \hspace{0.5em} {\bf POVM on $A$} & $\{ E^A_y \}$ & $\varrho_{Y|A}$ \tabularnewline
    \M\B \hspace{0.5em} {\bf Individual effect on $A$} & $E^A_y$ & $\varrho_{Y=y|A}$ \tabularnewline
    \hline
    \T\B \hspace{0.5em} {\bf Channel from $A$ to $B$} & $\mathcal{E}_{B|A}$ & $\varrho_{B|A}$ \tabularnewline
    \hline
    \M \hspace{0.5em} {\bf Instrument} & $\{ \mathcal{E}^{B|A}_y \}$ & $\varrho_{YB|A}$ \tabularnewline
    \M\B \hspace{0.5em} {\bf Individual Operation} & $\mathcal{E}^{B|A}_y$ & $\varrho_{Y=y,B|A}$ \tabularnewline
    \hhline{|=|=|=|}
    \T\B \hspace{0.5em} {\bf The Born rule}  & $\forall y: P(Y=y) = \Tr[A]{E^A_y \rho_A}$ & $\rho_Y =
    \Tr[A]{\varrho_{Y|A} \rho_A}$ \tabularnewline
    \hline
    \T\B \hspace{0.5em} {\bf Ensemble averaging} & $\rho_A = \sum_x P(X=x) \rho^A_x$ & $\rho_A = \Tr[X]{\varrho_{A|X}
      \rho_X}$ \tabularnewline
    \hline
    \T\B \hspace{0.5em} {\bf Action of a channel
      (Schr{\"o}dinger)} &  $\rho_B = \mathcal{E}_{B|A} \left ( \rho_A \right )$ & $\rho_B =
    \Tr[A]{\varrho_{B|A} \rho_A}$ \tabularnewline
    \hline
    \T\B \hspace{0.5em} {\bf Composition of channels} & $\mathcal{E}_{C|A}
    = \mathcal{E}_{C|B} \circ \mathcal{E}_{B|A}$ & $\varrho_{C|A} =
    \Tr[B]{\varrho_{C|B}\varrho_{B|A}}$ \tabularnewline
    \hline
    \T\B \hspace{0.5em} {\bf Action of a channel (Heisenberg)} & $E^A_y
    = \left (\mathcal{E}_{B|A} \right )^{\dagger} \left ( E^B_y \right
    )$ & $\varrho_{Y|A} =
    \Tr[B]{\varrho_{Y|B}\varrho_{B|A}}$ \tabularnewline
    \hline
    \T\B \hspace{0.5em} {\bf Nonselective state update rule} & $\forall y:P(Y=y) \rho^B_y = \mathcal{E}^{B|A}_y \left ( \rho_A \right )$ &
    $\rho_{YB} = \Tr[A]{\varrho_{YB|A} \rho_A}$ \tabularnewline
    \hline
  \end{tabular}
  \caption{\label{tbl:CCS:Conventional}Translation of concepts and equations from conventional notation to the conditional states formalism.}
\end{table*}

It follows that many relations that seem unrelated in the conventional
formalism all become instances of the belief propagation rule in our
formalism.  This includes the Born rule, the formula for calculating
the average state for an ensemble, the composition of channels, the
state-update rule in a measurement, and the action of a channel in
both the Heisenberg and Schr\"{o}dinger pictures.  This is summarized
in the bottom half of table~\ref{tbl:CCS:Conventional}.

\subsection{Applications of the formalism}

The formalism also accommodates forms of belief propagation that do not
fit into the standard list of the previous section.

One example is the inference made about one system based on the
outcome of a measurement made on another when the two are correlated
by virtue of a common cause.  This reproduces the remote collapse
postulate of quantum theory, which is sometimes called ``remote
steering'' of a quantum state and was made famous by the thought
experiment of Einstein, Podolsky, and Rosen.  It follows that in the
conditional states framework, the steering effect is \emph{merely}
belief propagation (updating beliefs about one system based on new
evidence about another) and does not require any causal influence
from one to the other.  This interpretation has been advocated
previously by Fuchs \cite{Fuchs2003a}.  Our formalism also provides an
elegant derivation of the formula for the set of ensembles to which a
remote system may be steered, previously obtained by conventional
methods in \cite{Verstraete2002}.

Another example of an unconventional form of belief propagation is
retrodiction, that is, inferences about a region based on beliefs
about another region in its future.  We develop a retrodictive
formalism using our quantum Bayes' theorem.  The latter is a necessary
ingredient because the ``givens'' in a retrodiction problem are
typically the descriptions of sets of state preparations, measurements
and channels, each of which corresponds to a conditional wherein the
conditioning system is to the past of the conditioned system.  We use
Bayes' theorem to invert each of these conditionals to ones wherein
the conditioning system is to the future of the conditioned system.
Then, one can use these conditionals to propagate one's beliefs
backwards in time, that is, to update one's beliefs about the past
based on new evidence in the present.  This application of our
formalism is a good example of how one can achieve causal neutrality:
belief propagation backward in time follows the same rules as belief
propagation forward in time.  The retrodictive formalism we devise
coincides with the one introduced in \cite{Barnett2000, Pegg2002,
  Pegg2002a} in the case of unbiased sources, but differs in the
general case, retaining a closer analogy with classical Bayesian
inference.

In the case where a quantum system is passed through a channel
(possibly noisy), the Bayesian inversion of the conditional associated
to this channel, when interpreted as a quantum operation itself, is
the Barnum-Knill approximate error correction map \cite{Barnum2002}.
It follows that this error correction scheme is the quantum analogue
of the following classical error correction scheme: based on a
channel's output, compute a posterior distribution over inputs
(i.e.\ classical retrodiction) and then sample from the latter.

In the case where a quantum system is prepared in one of a set of
states, the Bayesian inversion of the conditional associated to this
set of states (a ``quantum given classical'' conditional) is a
conditional associated to a measurement (a ``classical given quantum''
conditional).  Indeed, we find that in these contexts, our quantum
Bayes' theorem reproduces the well-known rule relating sets of states
to Positive Operator Valued Measures (POVMs)~\cite{Hughston1993,
  Leifer2006, Leifer2007}.  The POVM obtained as the Bayesian
inversion of an ensemble of states turns out to be the ``pretty-good''
measurement for distinguishing those states~\cite{Hausladen1994,
  Belavkin1975, Belavkin1975a}.  Therefore, the latter, like the
Barnum-Knill recovery operation, can be understood as a quantum
analogue of sampling from the posterior.

Similarly, the Bayesian inversion of the conditional associated to a
measurement is a conditional associated to a set of states.  For this
case, our quantum Bayes' theorem reproduces a rule proposed by Fuchs
as a quantum analogue of Bayes' theorem~\cite{Fuchs2003a}.

Finally, we show that our notion of conditioning does not include the
projection postulate as a special case, and that previous arguments to
the contrary (i.e.\ in favour of the projection postulate being viewed
as an instance of Bayesian conditioning) \cite{Bub1978, Bub2007a} are
based on a misleading analogy.  Within the conditional states
formalism, the projection postulate is best described as the
application of a belief propagation rule (a non-selective update map),
followed by conditioning (the selection).  This is broadly in line
with the treatment of quantum measurements advocated by Ozawa
\cite{Ozawa1997, Ozawa1998}.  In support of the argument that the
projection postulate is not a type of conditioning, we provide a
conditional state version of the argument that all informative
measurements must be disturbing, which may be of independent interest
due to its close relationship to entanglement monogamy.

\subsection{Structure of the Paper}

The remainder of this paper is structured as follows.

The relevant aspects of classical conditional probability are reviewed
in \S\ref{CCP}.  \S\ref{CS} introduces quantum conditional states and
the basic concepts of quantum Bayesian inference for a pair of
regions.  The distinction between conditional states for
causally-related and acausally-related regions is discussed here.
This section also provides a detailed discussion of the translations
from the conventional formalism to the conditional states formalism
that are highlighted in table~\ref{tbl:CCS:Conventional}.

\S\ref{Bayes} introduces our quantum version of Bayes' theorem and
discusses its applications, in particular, the connection with the
update rule proposed by Fuchs, the correspondence between POVMs and
ensemble decompositions of a density operator, the ``pretty good''
measurement, and the Barnum-Knill recovery map.  In
\S\ref{Bayes:Retro}, we develop the retrodictive formalism for quantum
theory and describe how it relates to the one introduced in
\cite{Barnett2000, Pegg2002, Pegg2002a}. Finally, in
\S\ref{Bayes:Remote}, the acausal analogue of the symmetry between
prediction and retrodiction is discussed in the context of remote
measurement.

\S\ref{Cond} discusses quantum Bayesian conditioning.  After a brief
discussion of the general problem of conditioning a quantum region on
another quantum region, we focus on conditioning a quantum region on a
classical variable.  This is the correct way to update quantum states
in light of classical data, regardless of the causal relationship
between the two.  Various examples of this are discussed in
\S\ref{Cond:Exa}, including the case of the remote steering
phenomenon.  \S\ref{Cond:Direct} concerns how to understand within our
formalism the rules for updating quantum states after a nondestructive
quantum measurement, in particular, how to understand the projection
postulate.

In \S\ref{Related}, we discuss related work.  Quantum conditional
states are compared to other proposals for quantum generalizations of
conditional probability in \S\ref{Related:Comp} and the conditional
states formalism is compared to several recently proposed operational
reformulations of quantum theory in \S\ref{Related:Op}.

\S\ref{Lim} discusses limitations of the conditional states framework.
These arise because the classical operation of taking the product of a
conditional and a marginal probability distribution to form a joint
distribution is replaced by a noncommutative and nonassociative
operation on the corresponding operators in the quantum case.  Because
of this, unlike in classical probability, an equation involving
conditional states does not necessarily remain valid when both sides
are conditionalized on an additional variable.  This is discussed in
\S\ref{Lim:Cond}.  \S\ref{Lim:CJS} discusses the reasons why causal
joint states are limited to two elementary regions.
\S\ref{Lim:CJS:MCS} discusses why they cannot be applied to mixed
causal scenarios, such as two acausally related regions with a third
region causally related to one of the other two, and
\S\ref{Lim:CJS:MTS} discusses the difficulties with generalizing the
notion to multiple time steps, where we have three or more causally
related regions.

\S\ref{Open} discusses an open question about when assignments of
conditional states are compatible with one another.  That not all
conditional assignments are compatible can be shown via the monogamy
of entanglement.  Indeed, this incompatibility seems to be a more
basic notion, of which monogamy is a consequence.  Finally, we
conclude in \S\ref{Conc}.

\section{Classical Conditional Probability}

\label{CCP}

In this section, the basic definitions and formalism of classical
conditional probability are reviewed, with a view to their quantum
generalization in \S\ref{CS}.

Let $R$ denote a (discrete) random variable, $R=r$ the event that $R$
takes the value $r$, $P(R=r)$ the probability of event $R=r$, and
$P(R)$ the probability that $R$ takes an arbitrary unspecified value.
Finally, $\sum_{R}$ denotes a sum over the possible values of $R$.

A conditional probability distribution is a function of two random
variables $P(S|R)$, such that for each value $r$ of $R$, $P(S|R = r)$
is a probability distribution over $S$.  Equivalently, it is a
positive function of $R$ and $S$ such that
\begin{equation}
  \label{eq:CCP:CP}
  \sum_S P(S|R) = 1
\end{equation}
independently of the value of $R$.

Given a probability distribution $P(R)$ and a conditional probability
distribution $P(S|R)$, a joint distribution over $R$ and $S$ can be
defined via
\begin{equation}
  \label{eq:CCP:JointCP}
  P(R,S) = P(S|R) P(R),
\end{equation}
where the multiplication is defined element-wise, i.e.\ for all values
$r, s$ of $R$ and $S$, $P(R = r, S = s) = P(S = s|R = r)P(R = r)$.

Conversely, given a joint distribution $P(R,S)$, the marginal
distribution over $R$ is defined as
\begin{equation}
  P(R) = \sum_S P(R,S),
\end{equation}
and the conditional probability of $S$ given $R$ is
\begin{equation}
  \label{eq:CCP:CPJoint}
  P(S|R)=\frac{P(R,S)}{P(R)}.
\end{equation}
Note that eq.~\eqref{eq:CCP:CPJoint} only defines a conditional
probability distribution for those values $r$ of $R$ such that $P(R =
r) \neq 0$.  The conditional probability is undefined for other values
of $R$.

The chain rule for conditional probabilities states that a joint
probability over $n$ random variables $R_1,R_2, \ldots, R_n$ can be
written as
\begin{multline}
  \label{eq:CCP:Chain}
  P(R_1,R_2, \ldots, R_n) = P(R_n|R_{1},R_2, \ldots , R_{n-1}) \\
  \times P(R_{n-1}|R_1,R_2, \ldots, R_{n-2}) \ldots P(R_2|R_1) P(R_1).
\end{multline}

Finally, note that the process of marginalizing a distribution over a
set of variables commutes with the process of conditioning on
a disjoint set of variables, as illustrated in the following
commutative diagram.
\begin{equation}
  \begin{CD}
    P(R,S,T) @>\sum_R>> P(S,T) \\
    @VV\times P(T)^{-1}V @VV\times P(T)^{-1}V \\
    P(R,S|T) @>\sum_R>> P(S|T)
  \end{CD}
\end{equation}

\section{Quantum Conditional States}

\label{CS}

In this section, the quantum analogue of conditional probability --- a
conditional state --- is introduced.  We also discuss how the states
assigned to disjoint regions are related via a quantum analogue of the
belief propagation rule $P(S) = \sum_R P(S|R)P(R)$.  There is a small
difference between conditional states for acausally-related and
causally-related regions.  The acausal case is discussed in
\S\ref{ACS}-\S\ref{ACS:ABP}.  On the other hand,
\S\ref{CCS}-\ref{CCS:QI} mainly concern the causal case, wherein we
find that quantum dynamics, ensemble averaging, the Born rule,
Heisenberg dynamics, and the transition from the initial state to the
ensemble of states resulting from a measurement can all be represented
as special cases of quantum belief propagation.  Acausal analogise of
some of these ideas are also developed in these sections.

\subsection{Acausal Conditional States}
\label{ACS}

We begin by defining conditional states for acausally-related regions.
This scenario, and its classical analogue, are depicted in
fig.~\ref{fig:ACS:Acausal}. The definition proceeds in analogy with
the classical treatment given in \S\ref{CCP}.  The convention of using
$A,B,C,\ldots$ to label quantum regions that are analogous to
classical variables $R, S, T, \ldots$ is adopted throughout.  The
labels $X, Y, Z, \ldots$ are reserved for classical variables
associated with preparations and measurements, which remain classical
when we pass from probability theory to the quantum analogue.

\begin{figure}[htb]
  \centering
  \subfigure[]{
    \includegraphics[scale=0.6]{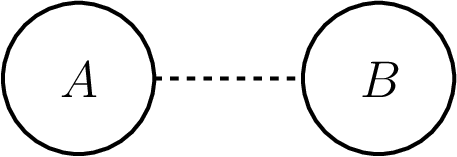}
    \label{fig:ACS:Entangled}
  }
  \subfigure[]{
    \includegraphics[scale=0.6]{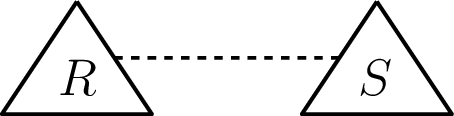}
    \label{fig:ACS:Correlated}
  }
  \caption{\label{fig:ACS:Acausal}Acausally-related quantum and
    classical regions.  Classical variables are denoted by triangles and
    quantum regions by circles (this convention is suggested by the
    shape of the convex set of states in each theory).  The dotted
    line represents acausal correlation.  \subref{fig:ACS:Entangled}
    Two quantum regions in an arbitrary joint state (possibly
    correlated).  \subref{fig:ACS:Correlated} Two classical variables
    with an arbitrary joint probability distribution (possibly
    correlated).}
\end{figure}

The analogue of a probability distribution $P(R)$ assigned to a random
variable $R$ is a quantum state (density operator) $\rho_A$ acting on
a Hilbert space $\Hilb[A]$.  When there are two disjoint regions with
Hilbert spaces $\Hilb[A]$ and $\Hilb[B]$, the tensor product
$\Hilb[AB] = \Hilb[A] \otimes \Hilb[B]$ describes the composite
region. The quantum analogue of a joint distribution $P(R,S)$ is a
density operator $\rho_{AB}$ of the composite region, defined on
$\Hilb[AB]$.  The analogue of marginalization over a variable is the
partial trace over a region. These analogies are set out in the top
half of table~\ref{tbl:ACS:Analogs}.

\begin{table}[htb]
  \begin{tabular}{|>{\centering}p{12em}|>{\centering}p{12em}|}
    \hline
    \T\B {\bf Classical Probability} & {\bf Quantum Theory} \tabularnewline
    \hhline{|=|=|}
    \T $P(R)$ & $\rho_A$ \tabularnewline
    \M $P(R,S)$ & $\rho_{AB}$ \tabularnewline
    \M\B $P(S) = \sum_R P(R,S)$ & $\rho_B = \Tr[A]{\rho_{AB}}$ \tabularnewline
    \hline
    \T $\sum_S P(S|R) = 1$ & $\Tr[B]{\rho_{B|A}} = I_A$ \tabularnewline
    \M $P(R,S) = P(S|R)P(R)$ & $\rho_{AB} = \rho_{B|A} \Sprod \rho_A$
    \tabularnewline
    \M\B $P(S|R) = P(R,S)/P(R)$ & $\rho_{B|A} = \rho_{AB} \Sprod
    \rho_A^{-1}$ \tabularnewline
    \hline
  \end{tabular}
  \caption{\label{tbl:ACS:Analogs}Analogies between classical
    probability theory for two random variables and quantum theory for
    two acausally-related regions.}
\end{table}

In analogy to the classical case, where $P(S|R)$ is a positive
function that satisfies $\sum_S P(S|R) = 1$, an acausal conditional
state for $B$ given $A$ is defined as follows.
\begin{definition}
  \label{def:ACS:ACS}
  An \emph{acausal conditional state} for $B$ given $A$ is a positive operator
  $\rho_{B|A}$ on $\Hilb[AB]=\Hilb[A]\otimes \Hilb[B]$ that satisfies
  \begin{equation}
    \label{eq:ACS:CS}
    \Tr[B]{\rho_{B|A}} = I_A,
  \end{equation}
  where $I_A$ is the identity operator on $\Hilb[A]$.
\end{definition}

To provide an analogy with eq.~\eqref{eq:CCP:JointCP}, a method of
constructing a joint state on $\Hilb[AB]$ from a reduced state on
$\Hilb[A]$ and a conditional state on $\Hilb[AB]$ is required.  This
is given by
\begin{equation}
  \label{eq:ACS:Joint}
  \rho_{AB} = (\rho_A^{\frac{1}{2}} \otimes I_B) \rho_{B|A}
  (\rho_A^{\frac{1}{2}} \otimes I_B).
\end{equation}

Eq.~\eqref{eq:ACS:Joint} involves two constructions that appear
repeatedly in what follows.  Firstly, the operators
$\rho_A^{\frac{1}{2}}$ and $\rho_{B|A}$ are combined via
multiplication, but they are defined on different spaces.  To solve
this problem, $\rho_A^{\frac{1}{2}}$ is expanded to an operator on
$\Hilb[AB]$ by tensoring it with $I_B$.  To simplify notation, the
identity operators required to equalize the Hilbert spaces of two
operators will be left implicit, so that if $M_{AB}$ is an operator on
$\Hilb[AB]$ and $N_{BC}$ is an operator on $\Hilb[BC]$ then
$M_{AB}N_{BC} = \left (M_{AB} \otimes I_C \right ) \left ( I_A \otimes
  N_{BC} \right )$ and an equation like $M_{AB} = N_{BC}$ is
interpreted as $M_{AB} \otimes I_C = I_A \otimes N_{BC}$.  This
notation allows us to omit tensor product symbols where convenient,
since $ M_A \otimes N_B= \left ( M_A \otimes I_B \right ) \left ( I_A
  \otimes N_B \right )= M_A N_B$.

Secondly, rather than simply multiplying $\rho_{B|A}$ with $\rho_A$ in
eq.~\eqref{eq:ACS:Joint}, $\rho_{B|A}$ is conjugated by
$\rho_A^{\frac{1}{2}}$.  This ensures that the resulting joint
operator is positive.  To define a notation for this conjugation, let
$M$ and $N$ be positive operators on a Hilbert space $\Hilb$.  Then
define a (non-associative and non-commutative) product $M \Sprod N$
via
\begin{equation}
  M \Sprod N= N^{\frac{1}{2}} M N^{\frac{1}{2}}.
\end{equation}
With these conventions, eq.~\eqref{eq:ACS:Joint} can be rewritten as
\begin{equation}
  \label{eq:ACS:JointStar}
  \rho_{AB} = \rho_{B|A} \Sprod \rho_A,
\end{equation}
which looks a lot closer to eq.~\eqref{eq:CCP:JointCP} than
eq.~\eqref{eq:ACS:Joint} does.

Starting with a joint state $\rho_{AB}$ and its reduced state $\rho_A
= \Tr[B]{\rho_{AB}}$, a conditional state can be defined via
\begin{equation}
  \label{eq:ACS:CPJoint}
  \rho_{B|A} = \rho_{AB} \Sprod \rho_A^{-1},
\end{equation}
which is the analogue of eq.~\eqref{eq:CCP:CPJoint}.

As with eq.~\eqref{eq:CCP:CPJoint} there are problems with this
formula if $\rho_A$ is not supported on the entire Hilbert space
$\Hilb[A]$.  In that case eq.~\eqref{eq:ACS:CPJoint} is to be
understood as an equation on the Hilbert space $\text{supp}(\rho_A)
\otimes \Hilb[B]$, where $\text{supp}(\rho_A)$ denotes the support of
$\rho_A$ (the span of the eigenvectors of $\rho_A$ having nonzero
eigenvalues).  Because of this, the resulting conditional density
operator satisfies $\Tr[B]{\rho_{B|A}} = I_{\text{supp}(\rho_A)}$
rather than eq.~\eqref{eq:ACS:CS}.

The analogies between the classical and quantum relations between
conditionals, marginals and joints are set out in the bottom half of
table~\ref{tbl:ACS:Analogs}.  As these analogies suggest, the
$\Sprod$-product notation allows equations from classical probability
to be generalized to quantum theory by replacing functions by
operators, products by $\Sprod$-products, and division by
$\Sprod$-products with the inverse.  However, whilst this is a useful
way of postulating results in the conditional states formalism, one
has to take care of the non-associativity and non-commutativity of the
$\Sprod$-product when making such generalizations.

To provide an analogy with the chain rule of eq.~\eqref{eq:CCP:Chain}
it is helpful to adopt the convention that, in the absence of
parentheses, $\Sprod$-products are evaluated right-to-left.  Then,
given $n$ disjoint acausally-related regions $A_1,A_2,\ldots,A_n$ with
Hilbert space $\Hilb[A_1A_2\ldots A_n] = \bigotimes_{j = 1}^n
\Hilb[A_j]$, the joint state can be written as
\begin{multline}
  \label{eq:ACS:Chain}
  \rho_{A_1A_2 \ldots A_n} = \rho_{A_n|A_{1}A_2 \ldots A_{n-1}} \\
  \Sprod \rho_{A_{n-1}|A_1A_2 \ldots A_{n-2}} \Sprod \ldots \Sprod
  \rho_{A_2|A_1} \Sprod \rho_{A_1}.
\end{multline}

Finally, note that the process of marginalizing a conditional state
over a region commutes with the process of conditioning on a disjoint
region, as illustrated in the following commutative diagram:
\begin{equation}
  \begin{CD}
    \rho_{ABC} @>\text{Tr}_C>> \rho_{AB} \\
    @VV\rho_A^{-\frac{1}{2}}\left (\cdot \right)\rho_A^{-\frac{1}{2}}V
    @VV\rho_A^{-\frac{1}{2}}\left (\cdot
    \right)\rho_A^{-\frac{1}{2}}V \\
    \rho_{BC|A} @>\text{Tr}_C>> \rho_{B|A}
  \end{CD}
\end{equation}

\begin{example}[Classical States]
  \label{exa:ACS:Class}
  As one might expect, classical conditional probability is a special
  case of the quantum constructions outlined above.  To see this, the
  classical variables have to be encoded in quantum regions in some
  way, and we adopt the convention of using the same letter to denote
  the classical variable and the corresponding quantum region.  Thus,
  $\Hilb[R], \Hilb[S], \Hilb[T], \ldots$ refer to quantum regions that
  encode classical random variables $R, S, T, \ldots$, as opposed to
  $\Hilb[A], \Hilb[B], \Hilb[C], \ldots$, which are general quantum
  regions.

  For classical random variables $R$ and $S$, pick Hilbert spaces
  $\Hilb[R]$ and $\Hilb[S]$ with dimension equal to the number of
  distinct values of $R$ and $S$ respectively, and choose orthonormal
  bases $\left \{ \Ket{r} \right \}$ for $\Hilb[R]$ and $\left \{
    \Ket{s} \right \}$ for $\Hilb[S]$ labelled by the possible values
  of $R$ and $S$.  Then, joint, marginal and conditional probability
  distributions are encoded as operators via
  \begin{align}
    \rho_{RS} & = \sum_{r,s} P(R = r, S = s) \Ket{r}\Bra{r}_{R}
    \otimes \Ket{s}\Bra{s}_{S} \label{eq:ACS:ClassJoint}\\
    \rho_R & = \sum_r P(R = r) \Ket{r}\Bra{r}_{R} = \Tr[S]{\rho_{RS}} \\
    \rho_{S|R} & = \sum_{r,s} P(S = s| R = r) \Ket{r}\Bra{r}_{R}
    \otimes \Ket{s}\Bra{s}_{S}. \label{eq:ACS:ClassCond}
  \end{align}
  Using eqs.~(\ref{eq:CCP:CP}-\ref{eq:CCP:CPJoint}), it is
  straightforward to check that these operators satisfy
  eqs.~\eqref{eq:ACS:CS}, \eqref{eq:ACS:JointStar} and
  \eqref{eq:ACS:CPJoint}.

  In order to unify the notation for classical variables and quantum
  regions, the operators $\rho_{RS}, \rho_R$ and $\rho_{S|R}$ are
  often used to directly represent the functions $P(R,S), P(R)$ and
  $P(S|R)$ without introducing the classical functions explicitly.
  Whenever states and conditional states have subscripts $R, S, T$ or
  $X, Y, Z$, they are implicitly assumed to be of this classical form.
  If needed, the classical functions can be read off from
  eqs. (\ref{eq:ACS:ClassJoint}-\ref{eq:ACS:ClassCond}).
\end{example}

\begin{example}[Pure Conditional States]
  \label{exa:ACS:Pure}
  A pure conditional state is one that is of the form
  $\rho_{B|A}=\Ket{\psi}\Bra{\psi}_{B|A}$ for some vector
  $\Ket{\psi}_{B|A} \in \Hilb[AB]$.  Since $\Tr[B]{\rho_{B|A}} = I_A$,
  and $I_A$ has all eigenvalues equal to $1$, the Schmidt
  decomposition of $\Ket{\psi}_{B|A}$ is of the form
  \begin{equation}
    \label{eq:ACS:PureCS}
    \Ket{\psi}_{B|A}=\sum_{k}\Ket{u_{k}}_A \otimes \Ket{v_{k}}_B,
  \end{equation}
  where $\{\Ket{u_k}\}$ is an orthonormal basis for $\Hilb[A]$ and
  $\{\Ket{v_k}\}$ is an orthonormal basis for $\Hilb[B]$.  This
  implies that a pure conditional state only exists if
  $\text{dim}(\Hilb[A]) \leq \text{dim}(\Hilb[B])$ because otherwise
  there would not be enough orthonormal vectors on the $B$ side to
  enforce $\Tr[B]{\rho_{B|A}} = I_A$\footnote{If $\Ket{\psi}_{B|A}$
    derives from a joint pure state $\Ket{\psi}_{AB}$via
    eq.~\eqref{eq:ACS:CPJoint} then this only implies that
    $\text{dim}(\text{supp}(\rho_A)) \leq \text{dim}(\Hilb[B])$, which
    is always true because the ranks of $\rho_A$ and $\rho_B$ are
    equal.}.

  Since all the Schmidt coefficients are the same, the bases
  $\{\Ket{u_k}\}$ and $\{\Ket{v_k}\}$ are highly non-unique.  The
  conditional state $\Ket{\psi}_{B|A}$ itself only determines the
  relationship between the two Schmidt bases, i.e.\ for any basis in
  $\Hilb[A]$ it determines a corresponding basis in $\Hilb[B]$.  To
  see this, fix a reference basis, $\{\Ket{j}\}$, for $\Hilb[A]$ in
  order to define a complex conjugation operation.  Next, define an
  isometry $U_{B|A} = \sum_{k} \Ket{v_k}_B\Bra{u_k^*}_A$, where $^*$
  denotes complex conjugation in the $\{\Ket{j}\}$ basis.  Then, if
  $\{\Ket{w_k}_A\}$ is any other basis for $\Hilb[A]$,
  eq.~\eqref{eq:ACS:PureCS} can be rewritten as
  \begin{equation}
    \label{eq:ACS:PureCSArb}
    \Ket{\psi}_{B|A} = \sum_{k} \Ket{w_k}_A  \otimes U_{B|A'} \Ket{w_k^*}_{A'},
  \end{equation}
  where $A'$ labels a second copy of $\Hilb[A]$.  With respect to the
  reference basis $\{\Ket{j}\}$, this simplifies to
  \begin{equation}
    \label{eq:ACS:PureCSCan}
    \Ket{\psi}_{B|A} = U_{B|A'} \Ket{\Phi^+}_{AA'},
  \end{equation}
  where $\Ket{\Phi^+}_{AA'} = \sum_j \Ket{jj}_{AA'}$.

  Let $\rho_A$ be an arbitrary density operator on $\Hilb[A]$ with
  eigendecomposition $\rho_A = \sum_k p_k \Ket{w_k}\Bra{w_k}_{A}$.
  Combining this with $\Ket{\psi}\Bra{\psi}_{B|A}$ via
  eq.~\eqref{eq:ACS:JointStar} in order to define a joint state gives
  the projector onto the pure state
  \begin{equation}
    \Ket{\psi}_{AB} = \rho_A^{\frac{1}{2}} \Ket{\psi}_{B|A}.
  \end{equation}
  Combining this with eq.~\eqref{eq:ACS:PureCSArb} gives the Schmidt
  decomposition
  \begin{equation}
    \Ket{\psi}_{AB} = \sum_k \sqrt{p_k} \Ket{w_k}_A \otimes U_{B|A'}
    \Ket{w_k^*}_{A'}. 
  \end{equation}
  Since an arbitrary pure joint state is of this form, this shows that
  pure conditional states determine pure joint states when combined
  with arbitrary reduced states and, conversely, the conditional state
  of a pure joint state is always pure.

  Note that, using eq.~\eqref{eq:ACS:PureCSCan} instead of
  eq.~\eqref{eq:ACS:PureCSArb} gives
  \begin{equation}
    \Ket{\psi}_{AB} = \rho_{A}^{1/2} U_{B|A^{\prime
      }}\Ket{\Phi^+}_{AA^{\prime}},
  \end{equation}
  which is a well known canonical decomposition of a bipartite pure
  state.
\end{example}

\subsection{Acausal Belief Propagation}

\label{ACS:ABP}

Suppose you characterize your beliefs about two classical variables,
$R$ and $S$, by specifying a marginal probability distribution $P(R)$
and a conditional probability distribution $P(S|R)$.  Then, you can
compute the probability distribution you ought to assign to $S$ via
\begin{equation}
  \label{eq:ACS:CBP}
  P(S) = \sum_R P(S|R) P(R).
\end{equation}
This is called the \emph{classical belief propagation rule} (also
known as the law of total probability).  It follows from calculating
the joint distribution $P(R,S)= P(S|R) P(R)$ and then marginalizing
over $R$.

The belief propagation rule can be thought of as specifying a linear
map $\Gamma_{S|R}$ from the space of probability distributions over
$R$ to the space of probability distributions over $S$ that preserves
positivity and normalization.  This is defined as
\begin{equation}
  \label{eq:ACS:CBPMap}
  \Gamma_{S|R} \left ( P(R) \right ) \equiv \sum_R P(S|R) P(R).
\end{equation}

Propagating beliefs about a quantum region $A$ to an acausally-related
region $B$ works in a similar way.  If you specify a reduced state
$\rho_A$ and a conditional state $\rho_{B|A}$ then your state for $B$
is determined by the \emph{acausal quantum belief propagation rule}
\begin{equation}
\label{eq:ACS:QBP}
\rho_B = \Tr[A]{\rho_{B|A} \rho_A}
\end{equation}
which follows from the fact that the joint state is $\rho_{AB} =
\rho_{B|A} \Sprod \rho_A$, so that $\rho_B = \Tr[A]{\rho_{B|A} \Sprod
  \rho_A}$, and from the cyclic property of the trace.

As in the classical case, acausal belief propagation can also be
viewed as a linear map $\mathfrak{E}_{B|A}$ from states on $A$ to
states on $B$ that preserves positivity and normalization, defined by
\begin{equation}
  \label{eq:ACS:QBPMap}
  \mathfrak{E}_{B|A} \left ( \rho_A \right ) \equiv \Tr[A]{\rho_{B|A} \rho_A}.
\end{equation}

The linear map so defined is clearly positive because it maps states
to states. It is not completely positive in general, but its
composition with a transpose on $A$ is completely positive.  The map
$\mathfrak{E}_{B|A}$ is in fact identical to the map associated to
$\rho_{B|A}$ via the Jamio{\l}kowski isomorphism
\cite{Jamiolkowski1972}, which is a familiar construction in quantum
information theory.  These facts are consequences of the following
theorem.

\begin{theorem}[Jamio{\l}kowski Isomorphism]
  \label{thm:ACS:Jamiol}
  Let $\mathfrak{E}_{B|A}:\Lin[A] \to \Lin[B]$ be a linear map and let
  $M_{AC} \in \Lin[AC]$, where $\Hilb[C]$ is a Hilbert space of
  arbitrary dimension.  Then, the action of $\mathfrak{E}_{B|A}$ on
  $\Lin[A]$ (tensored with the identity on $\Lin[C]$) is given by
  \begin{equation}
    \label{eq:ACS:reverseJamiol}
    (\mathfrak{E}_{B|A} \otimes \mathcal{I}_C) \left ( M_{AC} \right )
    = \Tr[A]{\rho_{B|A} M_{AC}},
  \end{equation}
  where $\rho_{B|A}\in \Lin[AB]$ is given by
  \begin{equation}
    \label{eq:ACS:Jamiol}
    \rho_{B|A} \equiv (\mathfrak{E}_{B|A'} \otimes \mathcal{I}_{A}) \left (
      \sum_{j,k} \Ket{j}\Bra{k}_A \otimes \Ket{k}\Bra{j}_{A'} \right ).
  \end{equation}
  Here, $A'$ labels a second copy of $A$, $\mathcal{I}_A$ is the
  identity superoperator on $\Lin[A]$, and $\{\Ket{j} \}$ is an
  orthonormal basis for $\Hilb[A]$.

  Furthermore, the operator $\rho_{B|A}$ is an acausal conditional
  state, i.e.\ it satisfies definition~\ref{def:ACS:ACS}, if and only
  if $\mathfrak{E}_{B|A}\circ T_A$ is completely-positive and
  trace-preserving (CPT), where $T_A: \Lin[A] \to \Lin[A]$ denotes the
  linear map implementing the partial transpose relative to some
  basis.
\end{theorem}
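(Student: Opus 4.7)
The plan is to prove the two parts of the theorem by first establishing the action formula eq.~\eqref{eq:ACS:reverseJamiol} and then using it to reduce the conditional-state property to the standard Choi--Jamio\l kowski characterization of completely positive maps.

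For eq.~\eqref{eq:ACS:reverseJamiol}, the key observation is that the operator appearing in eq.~\eqref{eq:ACS:Jamiol}, namely $S_{AA'}\equiv\sum_{j,k}\Ket{j}\Bra{k}_A\otimes\Ket{k}\Bra{j}_{A'}$, is the swap, and it satisfies the ``teleportation'' identity $N_A = \Tr[A']{S_{AA'}(I_A\otimes N_{A'})}$ for every operator $N\in\Lin[A]$ (a direct check on the matrix units $\Ket{j}\Bra{k}$). Tensoring with the identity on $C$, this extends to $M_{AC} = \Tr[A']{S_{AA'}\,M_{A'C}}$, where $M_{A'C}$ is the same operator with $A$ relabeled to $A'$. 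Now I would apply $\mathfrak{E}_{B|A}\otimes\mathcal{I}_C$: since $\mathfrak{E}_{B|A}$ acts nontrivially only on $A$, and $M_{A'C}$ lives on $A'C$, the map passes through to act solely on the $A$-factor of $S_{AA'}$. Pulling it inside the partial trace yields $\Tr[A']{(\mathfrak{E}_{B|A}\otimes \mathcal{I}_{A'})(S_{AA'})\,M_{A'C}}$, and the bracketed operator is exactly $\rho_{B|A'}$ by eq.~\eqref{eq:ACS:Jamiol} (up to the harmless renaming $A\leftrightarrow A'$). Relabeling back gives eq.~\eqref{eq:ACS:reverseJamiol}.

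For the characterization in the last clause, I would first rewrite $\rho_{B|A}$ in terms of the standard maximally entangled state. Using $S_{AA'} = (\mathcal{I}_A\otimes T_{A'})\left(\Ket{\Phi^+}\Bra{\Phi^+}_{AA'}\right)$, eq.~\eqref{eq:ACS:Jamiol} becomes
\begin{equation}
  \rho_{B|A} = \left(\mathcal{I}_A\otimes\left(\mathfrak{E}_{B|A'}\circ T_{A'}\right)\right)\!\Ket{\Phi^+}\Bra{\Phi^+}_{AA'}.
\end{equation}
This is precisely the Choi matrix of $\mathfrak{E}_{B|A}\circ T_A$ (after renaming $A'\to A$), so by the standard Choi theorem, $\rho_{B|A}\geq 0$ holds if and only if $\mathfrak{E}_{B|A}\circ T_A$ is completely positive.

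For the normalization condition, I would apply eq.~\eqref{eq:ACS:reverseJamiol} with trivial $C$ to get $\mathfrak{E}_{B|A}(\rho_A) = \Tr[A]{\rho_{B|A}\rho_A}$, and then take the trace over $B$:
\begin{equation}
  \Tr[B]{\mathfrak{E}_{B|A}(\rho_A)} = \Tr[A]{\left(\Tr[B]{\rho_{B|A}}\right)\rho_A}.
\end{equation}
Requiring this to equal $\Tr[A]{\rho_A}$ for all $\rho_A$ is equivalent to $\Tr[B]{\rho_{B|A}} = I_A$. Since $T_A$ preserves the trace, trace-preservation of $\mathfrak{E}_{B|A}\circ T_A$ is the same as trace-preservation of $\mathfrak{E}_{B|A}$, completing the equivalence.

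The main conceptual step, and the only point requiring care, is tracking which copy of $\Hilb[A]$ each object lives on: the swap/partial-transpose identity is what converts the non-standard definition eq.~\eqref{eq:ACS:Jamiol} (which uses the swap, so that $\rho_{B|A}$ behaves like a conditional state rather than a Choi matrix) into the standard Choi form, at the cost of composing $\mathfrak{E}_{B|A}$ with the transpose on $A$. Everything else is bookkeeping using the relations developed in \S\ref{ACS}.
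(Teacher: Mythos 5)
Your proposal is correct and follows essentially the same route as the paper's own proof: the swap-operator "teleportation" identity is the same matrix-unit computation the paper uses to derive eq.~\eqref{eq:ACS:reverseJamiol}, and the identification $S_{AA'}=(\mathcal{I}_A\otimes T_{A'})(\Ket{\Phi^+}\Bra{\Phi^+}_{AA'})$ reducing positivity of $\rho_{B|A}$ to Choi's theorem for $\mathfrak{E}_{B|A}\circ T_A$ is exactly the Choi--Jamio{\l}kowski comparison made in the appendix. The only cosmetic difference is that you handle normalization via the duality $\Tr[B]{\mathfrak{E}_{B|A}(\rho_A)}=\Tr[A]{\Tr[B]{\rho_{B|A}}\,\rho_A}$ for all $\rho_A$, where the paper verifies $\Tr[B]{\rho_{B|A}}=I_A$ by direct computation in both directions; both arguments are sound.
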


The proof is provided in appendix~\ref{Appendix:Proofs}.

\subsection{Causal Conditional States}

\label{CCS}

The analogy between conditional probabilities and conditional states
presented so far is not complete. In conventional quantum theory, the
tensor product $\Hilb[AB] = \Hilb[A] \otimes \Hilb[B]$ is used to
represent a joint system with two subsystems, so that the conditional
state $\rho_{B|A}$ refers to the state of two subsystems at a given
time.  However, for classical conditional probabilities, there is no
corresponding requirement that the two random variables $R$ and $S$
appearing in $P(S|R)$ should have any particular causal relation to
one another.  Indeed, $R$ might equally well represent the input to a
classical channel and $S$ the output, i.e.\ they may be causally
related. This is illustrated in fig.~\ref{fig:CCS:Stochastic}.  If
this is indeed the case, then the classical belief propagation rule of
eqs.~\eqref{eq:ACS:CBP} and \eqref{eq:ACS:CBPMap} can be interpreted
as stochastic dynamics.

\begin{figure}[htb]
  \begin{minipage}{5cm}
    \raggedright
    \subfigure[]{
      \includegraphics[scale=0.4]{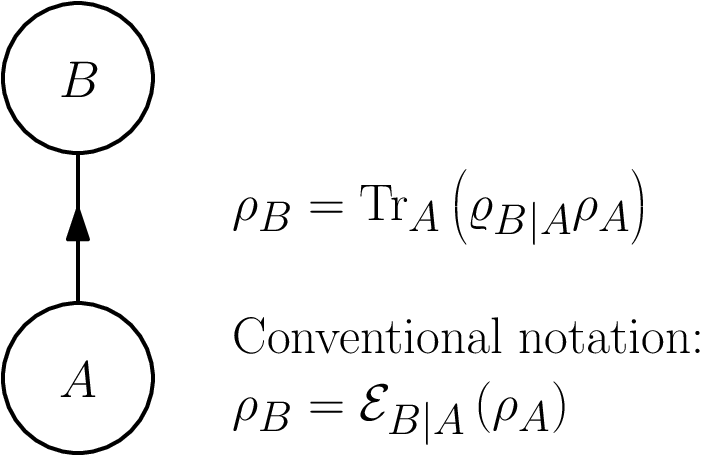}
      \label{fig:CCS:CPMap}
    }
    \subfigure[]{
      \includegraphics[scale=0.4]{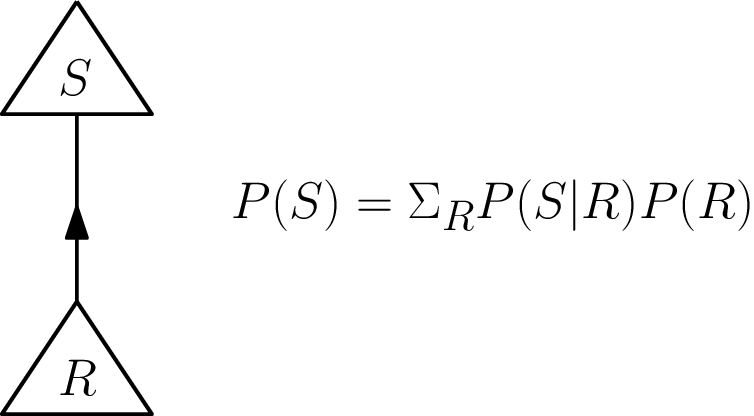}
      \label{fig:CCS:Stochastic}
    }
  \end{minipage}
  \caption{\label{fig:CCS:Timelike}Causally-related quantum and
    classical regions.  The arrows represent the direction of causal
    influence. \subref{fig:CCS:CPMap} General quantum dynamics.  $A$ is
    the input to a CPT map and $B$ is the output.
    \subref{fig:CCS:Stochastic} Classical stochastic dynamics.  $R$ is
    the input to a classical channel and $S$ is the output.}
\end{figure}

In order to formulate quantum theory as a causally neutral theory of
Bayesian inference, the same formalism should be used to describe
causally-related regions as is used to describe acausally-related
regions.  In particular, if $A$ and $B$ are two causally-related
regions, as depicted in fig.~\ref{fig:CCS:CPMap}, then it ought to be
possible to define a quantum conditional state for $B$ given $A$ as an
operator on the tensor product $\Hilb[AB] = \Hilb[A] \otimes
\Hilb[B]$. Towards this end, we make the following definition.
\begin{definition}
  \label{def:CCS:CCS}
  A \emph{causal conditional state} of $B$ given $A$ is an operator
  $\varrho_{B|A}$ on $\Hilb[AB]$ that can be written as
  \begin{equation}
    \varrho_{B|A} = \rho_{B|A}^{T_A},
  \end{equation}
  for some acausal conditional state $\rho_{B|A}$, where $^{T_A}$
  denotes the partial transpose in some basis on $\Hilb[A]$.
\end{definition}
Thus, the set of causal conditional states is just the image under a
partial transpose on the conditioning region of the set of acausal
conditional states.  Note that, although the partial transpose is
basis dependent, its image on the set of acausal conditional states is
not and therefore neither is our definition of a causal conditional
state.  Also, because the set of acausal conditional states is mapped
to itself by the full transpose (i.e.\ the transpose on $AB$), a
partial transpose over the conditioned region $B$, rather than the
conditioning region $A$, could alternatively have been used to define
a causal conditional state. Due to the partial transpose, causal
conditional states are not positive operators in general, but they are
always \emph{locally} positive, i.e. $\Bra{\psi}_A \otimes
\Bra{\phi}_B \varrho_{B|A}\Ket{\psi}_A \otimes \Ket{\phi}_B \geq 0$
for all $\Ket{\psi}_A \in \Hilb[A], \Ket{\phi}_B \in \Hilb[B]$.

In this section, we show that defining causal conditional states in
this way allows us to implement quantum belief propagation across
causally-related regions using the same formula as one uses for
quantum belief propagation across acausally-related regions, namely by
a rule of the form $\rho_B = \Tr[A]{\varrho_{B|A} \rho_A}$.  Belief
propagation for dynamics with a quantum input and a quantum output are
treated in \S\ref{CCS:CPT}.  \S\ref{CCS:CondBP} treats belief
propagation for causal conditional states themselves, which
corresponds to composition of dynamical maps.  \S\ref{CCS:TJS}
introduces the notion of a causal joint state, which is analogous to
the joint distribution of input and output variables for a classical
channel.  \S\ref{CCS:Hybrid} introduces the idea of a
quantum-classical hybrid, which is a composite of a quantum region and
a classical variable.  This allows dynamics with a classical input and
quantum output (and vice versa) to be described in terms of causal
conditional states.  These correspond to ensemble preparation
procedures and measurements, as discussed in \S\ref{CCS:States} and
\S\ref{CCS:POVM}.  In \S\ref{CCS:Heis}, the Heisenberg picture is
translated into the conditional states formalism.  In \S\ref{CCS:QI},
the most general type of state update rule that can occur after a
measurement -- a quantum instrument -- is described in terms of causal
conditional states.  Table~\ref{tbl:CCS:Conventional} summarizes the
translation of these concepts from conventional notation to the
conditional states formalism.

\subsection{Quantum Channels as Causal Belief Propagation}

\label{CCS:CPT}

Conventionally, the transition from a region $A$ to a causally-related
later region $B$ is described by a dynamical CPT map
$\mathcal{E}_{B|A}: \Lin[A] \rightarrow \Lin[B]$ such that, if
$\rho_A$ is the state of $A$ and $\rho_B$ is the state of $B$, then
$\rho_B = \mathcal{E}_{B|A} \left ( \rho_A \right )$.  However, causal
conditional states can provide an alternative representation of
quantum dynamics, as we will show.  First note the following
isomorphism.

\begin{theorem}
  \label{thm:CCS:Jamiol}
  Let $\mathcal{E}_{B|A}:\Lin[A] \to \Lin[B]$ be a linear map and let
  $\varrho_{B|A}\in \Lin[AB]$ be the Jamio{\l}kowski-isomorphic
  operator, as defined in eq.~\eqref{eq:ACS:Jamiol}.  Then,
  $\varrho_{B|A}$ is a causal conditional state, i.e.\ it satisfies
  definition~\ref{def:CCS:CCS}, if and only if $\mathcal{E}_{B|A}$ is
  CPT.
\end{theorem}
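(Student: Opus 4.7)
The plan is to reduce Theorem~\ref{thm:CCS:Jamiol} to the already-proven Theorem~\ref{thm:ACS:Jamiol} by chasing two partial transposes: one sitting in Definition~\ref{def:CCS:CCS}, and one appearing in the acausal CPT condition. These two transposes cancel, leaving $\mathcal{E}_{B|A}$ itself as the object that must be CPT.

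Concretely, I would proceed as follows. First, by Definition~\ref{def:CCS:CCS}, $\varrho_{B|A}$ is a causal conditional state if and only if the operator $\rho_{B|A} := \varrho_{B|A}^{T_A}$ is an acausal conditional state. So the task reduces to characterising, in terms of $\mathcal{E}_{B|A}$, when $\varrho_{B|A}^{T_A}$ satisfies Definition~\ref{def:ACS:ACS}. Second, I would identify the linear map $\mathfrak{E}_{B|A}$ that corresponds to $\rho_{B|A}=\varrho_{B|A}^{T_A}$ under the acausal Jamio{\l}kowski isomorphism of Theorem~\ref{thm:ACS:Jamiol}. Applying eq.~\eqref{eq:ACS:reverseJamiol} and the standard partial-transpose trick $\Tr[A]{X^{T_A}\,Y} = \Tr[A]{X\,Y^{T_A}}$ to the defining relation $\mathfrak{E}_{B|A}(\rho_A) = \Tr[A]{\rho_{B|A}\rho_A}$ gives
\begin{equation}
\mathfrak{E}_{B|A}(\rho_A) = \Tr[A]{\varrho_{B|A}\,\rho_A^{T_A}} = \mathcal{E}_{B|A}\bigl(T_A(\rho_A)\bigr),
\end{equation}
where on the right I have used that $\mathcal{E}_{B|A}$ is recovered from $\varrho_{B|A}$ by the same formula, eq.~\eqref{eq:ACS:reverseJamiol}. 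Hence $\mathfrak{E}_{B|A} = \mathcal{E}_{B|A}\circ T_A$.

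Third, I would invoke Theorem~\ref{thm:ACS:Jamiol}: $\rho_{B|A}$ is an acausal conditional state iff $\mathfrak{E}_{B|A}\circ T_A$ is CPT. Substituting the identification from the previous step,
\begin{equation}
\mathfrak{E}_{B|A}\circ T_A \;=\; \mathcal{E}_{B|A}\circ T_A\circ T_A \;=\; \mathcal{E}_{B|A},
\end{equation}
since the partial transpose is an involution. Therefore $\varrho_{B|A}$ is a causal conditional state iff $\mathcal{E}_{B|A}$ is CPT, which is the desired biconditional.

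The only nonroutine step is the identification $\mathfrak{E}_{B|A} = \mathcal{E}_{B|A}\circ T_A$. One has to be a little careful that the basis used for the partial transpose in Definition~\ref{def:CCS:CCS} agrees with the basis $\{\Ket{j}\}$ used in eq.~\eqref{eq:ACS:Jamiol}, but Definition~\ref{def:CCS:CCS} already notes basis-independence of the image, so any choice works. Everything else is a bookkeeping exercise on the partial transpose and a direct appeal to Theorem~\ref{thm:ACS:Jamiol}; no new positivity argument is needed.
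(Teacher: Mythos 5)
Your proposal is correct and follows essentially the same route as the paper's proof: both reduce the claim to Theorem~\ref{thm:ACS:Jamiol} by passing to $\rho_{B|A}=\varrho_{B|A}^{T_A}$ and using the identification $\mathcal{E}_{B|A}=\mathfrak{E}_{B|A}\circ T_A$ together with the involutivity of the partial transpose. The only difference is that you explicitly derive that identification via the trace identity $\Tr[A]{X^{T_A}Y}=\Tr[A]{XY^{T_A}}$, a step the paper states with ``it follows that'' and leaves to the reader.
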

\begin{proof}
  Define $\rho_{B|A} \equiv \varrho_{B|A}^{T_A}$ and let
  $\mathfrak{E}_{B|A}$ be the linear map that is
  Jamio{\l}kowski-isomorphic to $\rho_{B|A}$.  It follows that
  $\mathcal{E}_{B|A}=\mathfrak{E}_{B|A}\circ T_A$.  Recalling the
  relation between causal and acausal conditional states,
  $\varrho_{B|A}$ is a causal conditional state if and only if
  $\rho_{B|A}$ is an acausal conditional state.  Recalling
  Theorem~\ref{thm:ACS:Jamiol}, $\rho_{B|A}$ is an acausal conditional
  state if and only if $\mathfrak{E}_{B|A}\circ T_A$ is CPT.  It
  follows that $\varrho_{B|A}$ is a causal conditional state if and
  only if $\mathcal{E}_{B|A}$ is CPT.
\end{proof}

Together with theorem~\ref{thm:ACS:Jamiol}, this implies that the
action of a CPT map $\mathcal{E}_{B|A}$ on an operator $M_{AC}$ is
given by
\begin{equation}
  \label{eq:CCS:reverseJamiol}
  \mathcal{E}_{B|A} \left ( M_{AC} \right ) = \Tr[A]{\varrho_{B|A} M_{AC}},
\end{equation}
where $\varrho_{B|A}$ is the Jamio{\l}kowski isomorphic operator to
$\mathcal{E}_{B|A}$.

Quantum dynamics may be represented by causal conditional states as
follows.
\begin{proposition}
  \label{prop:CCS:Jamiol}
  Let $\varrho_{B|A}$ be the causal conditional state that is
  Jamio{\l}kowski-isomorphic to a CPT map $\mathcal{E}_{B|A}$ that
  describes a quantum dynamics.  If the initial state of region $A$ is
  $\rho_A$, then the state of $B$, conventionally written as
  \begin{equation}
    \label{eq:CCS:Dynamics}
    \rho_B = \mathcal{E}_{B|A} \left ( \rho_A \right ),
  \end{equation}
  can be expressed in the conditional states formalism as
  \begin{equation}
    \label{eq:CCS:Evolve}
    \rho_B = \Tr[A]{\varrho_{B|A} \rho_A},
  \end{equation}
  in analogy with the classical belief propagation rule
  eq.~(\ref{eq:ACS:CBP}).
\end{proposition}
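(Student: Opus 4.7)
The proposition is essentially a direct corollary of what has already been established, so the plan is very short. The main input is eq.~\eqref{eq:CCS:reverseJamiol}, which says that for any operator $M_{AC} \in \Lin[AC]$ the action of a CPT map $\mathcal{E}_{B|A}$ can be rewritten as $\mathcal{E}_{B|A}(M_{AC}) = \Tr[A]{\varrho_{B|A} M_{AC}}$, where $\varrho_{B|A}$ is the Jamio\l{}kowski-isomorphic operator.  Since this identity is already in hand, the work is merely to specialize it to the situation of the proposition.

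The plan is then the following. First, I would take $\Hilb[C]$ to be trivial (one-dimensional), so that $M_{AC}$ reduces to an arbitrary operator on $\Hilb[A]$, and substitute $M_{AC} = \rho_A$ into eq.~\eqref{eq:CCS:reverseJamiol}.  This immediately yields
\begin{equation*}
\mathcal{E}_{B|A}(\rho_A) = \Tr[A]{\varrho_{B|A}\, \rho_A}.
\end{equation*}
Second, I would invoke the hypothesis $\rho_B = \mathcal{E}_{B|A}(\rho_A)$ from eq.~\eqref{eq:CCS:Dynamics} to identify the left-hand side with $\rho_B$, producing eq.~\eqref{eq:CCS:Evolve}.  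The one bookkeeping point worth spelling out is that $\varrho_{B|A}$ is well-defined as a causal conditional state in the first place: this is guaranteed by Theorem~\ref{thm:CCS:Jamiol}, which tells us that $\mathcal{E}_{B|A}$ being CPT is exactly equivalent to the Jamio\l{}kowski-isomorphic operator being a causal conditional state in the sense of Definition~\ref{def:CCS:CCS}.

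There is essentially no obstacle here, since the heavy lifting was done in Theorems~\ref{thm:ACS:Jamiol} and~\ref{thm:CCS:Jamiol}, and their appendix proofs.  The only thing to be careful about is that one is applying the partial-transposed version of the Jamio\l{}kowski isomorphism (the one appropriate for causal, rather than acausal, conditional states); but this is already baked into eq.~\eqref{eq:CCS:reverseJamiol}, which was derived precisely by composing Theorem~\ref{thm:ACS:Jamiol} with the transpose $T_A$.  So the proof of the proposition will fit comfortably in one or two lines.
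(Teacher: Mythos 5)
Your proposal is correct and follows exactly the route the paper takes: the paper's entire proof is the remark that eq.~\eqref{eq:CCS:Evolve} ``follows from eq.~\eqref{eq:CCS:reverseJamiol},'' which is precisely your specialization to trivial $\Hilb[C]$ and $M_{AC}=\rho_A$. Your extra bookkeeping remarks (invoking Theorem~\ref{thm:CCS:Jamiol} to justify that $\varrho_{B|A}$ is a valid causal conditional state) just make explicit what the paper leaves implicit.
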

We call eq.~(\ref{eq:CCS:Evolve}) the \emph{causal quantum belief
  propagation rule}. It follows from eq.~\eqref{eq:CCS:reverseJamiol}.

Fig.~\ref{fig:CCS:Timelike} and the fourth and eighth lines of
table~\ref{tbl:CCS:Conventional} summarize how this representation of
quantum dynamics contrasts with the conventional representation, with
fig.~\ref{fig:CCS:Timelike} emphasizing the analogy between the
classical and quantum belief propagation rules.

\subsection{Causal Joint States}

\label{CCS:TJS}

\S\ref{ACS} showed that a joint state $\rho_{AB}$ of two acausally
related regions can be decomposed into a reduced state $\rho_A$ and an
acausal conditional state $\rho_{B|A}$.  Similarly, two causally
related regions can be described by an input state $\rho_A$ and a
causal conditional state $\varrho_{B|A}$, but so far there is no
causal analogue of a joint state.  This is addressed by making the
following definition, in analogy with eq.~\eqref{eq:ACS:JointStar},
\begin{definition}
  A \emph{causal joint state} of two causally-related regions, $A$ and
  $B$, is an operator on $\Hilb[AB]$ of the form
  \begin{equation}
    \label{eq:CPT:CCSJoint}
    \varrho_{AB} = \varrho_{B|A} \Sprod \rho_A,
  \end{equation}
  where $\rho_A$ is a state on $\Hilb[A]$ and $\varrho_{B|A}$ is a
  causal conditional state of $B$ given $A$.
\end{definition}

Note that the reduced state on $A$ of $\varrho_{AB}$ is the initial
state (input to the channel) and the reduced state on $B$ is
\begin{equation}
  \rho_B = \Tr[A]{\varrho_{B|A} \Sprod \rho_A},
\end{equation}
which, by the cyclic property of the trace and
proposition~\ref{prop:CCS:Jamiol}, is the final state (output of the
channel).

It is not too difficult to see that a causal joint state
$\varrho_{AB}$ is the partial transpose of an acausal joint state on
$\Hilb[AB]$.  Specifically, $\varrho_{AB}^{T_A} = \rho_{B|A} \Sprod
\rho_A^{T_A}$, where $\rho_{B|A} = \varrho_{B|A}^{T_A}$ is an acausal
conditional state and $\rho_A^{T_A}$ is a valid reduced state because
the transpose preserves positivity.

Thus, just as a causal conditional state for $B$ given $A$ is an
operator on $\Hilb[A]\otimes\Hilb[B]$ that can be obtained as the
partial transpose over $A$ of an acausal conditional state
$\rho_{B|A}$, a causal joint state on $AB$ is simply an operator on
$\Hilb[A]\otimes\Hilb[B]$ that can be obtained as the partial
transpose over $A$ of an acausal joint state $\rho_{AB}$.

\begin{example}[Unitary Dynamics]
  Suppose a region $A$ is assigned the state $\rho_A$ with
  eigendecomposition $\rho_A=\sum_j p_j \Ket{u_j}\Bra{u_j}$.  $A$ is
  then mapped to a region $B$, which has a Hilbert space of the same
  dimension as that of $A$, by an isometry $U_{B|A} = \sum_j
  \Ket{v_j}_B\Bra{u_j}_A$.  Since the Jamio{\l}kowski isomorphism is
  basis independent, the causal conditional state associated with the
  map $\mathcal{E}_{B|A}\left ( \cdot \right ) = U_{B|A} \left ( \cdot
  \right )\left ( U_{B|A} \right )^{\dagger}$ can be written in the
  eigenbasis of $\rho_A$ as
  \begin{align}
    \label{eq:CCS:UnitCond}
    \varrho_{B|A} & = \sum_{j,k} \Ket{u_j}\Bra{u_k}_A \otimes U_{B|A'}
    \Ket{u_k}\Bra{u_j}_{A'} \left (U_{B|A'}\right )^{\dagger} \\
    & = \sum_{j,k} \Ket{u_j}\Bra{u_k}_A \otimes \Ket{v_k}\Bra{v_j}.
  \end{align}
  It follows that the causal joint state for $AB$ is
  \begin{equation}
    \label{eq:CCS:UnitJoint}
    \varrho_{AB} = \sum_{j,k} \sqrt{p_j p_k} \Ket{u_j}\Bra{u_k}_A
    \otimes  \Ket{v_k}\Bra{v_j}_{B}.
  \end{equation}
  Note that the pure causal conditional state $\varrho_{B|A}$ is the
  partial transpose over $A$ of a pure acausal conditional state, as
  can be seen by comparison with eq.~\eqref{eq:ACS:PureCS} from
  example~\ref{exa:ACS:Pure}.  Also, the pure causal joint state
  $\varrho_{AB}$ is the partial transpose over $A$ of a pure acausal
  joint state $\rho_{AB} = \Ket{\psi}\Bra{\psi}_{AB}$, where
  \begin{equation}
    \Ket{\psi}_{AB} = \sum_j \sqrt{p_j} \Ket{u_j}_A \otimes \Ket{v_j}_B.
  \end{equation}
\end{example}

Causal joint states can be given an operational interpretation similar
to that of acausal joint states by specifying a procedure to perform
tomography on them (see \cite{Leifer2006,Leifer2007}).  The motivation
for introducing them here is that they allow quantum Bayesian
inference to be developed in a way that is blind to the distinction
between acausally and causally-related regions.  This is discussed in
\S\ref{Bayes}.

Note that whereas acausal joint states may involve more than two
acausally-related regions, causal joint states are thus far only
well-defined for \emph{two} causally-related regions.  The reasons for
this limitation are discussed in \S\ref{Lim}.

\subsection{Quantum-Classical Hybrid Regions}

\label{CCS:Hybrid}

An ensemble preparation procedure can be represented by a CPT map from
a classical variable to a quantum region and a measurement can be
represented as a CPT map from a quantum region to a classical
variable.  Therefore, these processes can be represented by
conditional states for which either the conditioned or conditioning
region is classical as a special case of
proposition~\ref{prop:CCS:Jamiol}.  In order to compare this to the
conventional formalism, we need to describe how composite regions
consisting of a classical variable and a quantum region are
represented in the conditional states formalism.  Such composites are
called quantum-classical hybrid regions.

As in example~\ref{exa:ACS:Class}, the classical variable $X$ is
associated with a Hilbert space $\Hilb[X]$ equipped with a preferred
basis $\left \{ \Ket{x} \right \}$ that represents the possible values
of $X$.  The quantum region is associated with a Hilbert space
$\Hilb[A]$ and the joint region with the tensor product $\Hilb[X A] =
\Hilb[X] \otimes \Hilb[A]$.  In order to preserve the classical nature
of $X$, states and conditional states on $\Hilb[X A]$ are restricted
to be of the following form:
\begin{definition}
  A \emph{hybrid operator} on $\Hilb[XA]$ is an operator of the form
  \begin{equation}
    \label{eq:CCS:HybridOp}
    M_{XA} = \sum_x \Ket{x}\Bra{x}_{X} \otimes M_x^A,
  \end{equation}
  where $\left \{ \Ket{x} \right \}$ is a preferred basis for
  $\Hilb[X]$ and $\{ M_x^A \}$ is a set of operators acting on
  $\Hilb[A]$, labelled by the values of $X$.  The operators $M_x^A$ are
  referred to as the \emph{components} of $M_{XA}$.
\end{definition}
When $M_{XA}$ is a state, this ensures that the reduced state on $X$
is diagonal in the preferred basis and that there can be no
entanglement between the quantum and classical regions.

Although our primary interest is in causal hybrids, since these are
relevant to preparations and measurements, one can also have acausal
hybrids.  As shown below, the set of acausal hybrid conditional states
and the set of acausal hybrid joint states are invariant under partial
transpose.  It follows that the set of acausal hybrid states and the
set of causal hybrid states are the same.  In particular, this means
that, unlike fully quantum causal states, hybrid causal states not
only have positive partial transpose but are themselves positive, and
unlike fully quantum acausal states, hybrid acausal states not only
are positive but also have positive partial transpose.

Therefore, for hybrid states, the notational distinction between
$\rho$ and $\varrho$ serves merely as a reminder of the causal
arrangement of the regions under consideration.  This in contrast with
the fully quantum case, where the distinction also has significance
for the mathematical properties of the operator.  When making claims
about the mathematical properties of hybrid conditional states that
are independent of causal structure, the notation $\sigma$ is used.
According to these conventions, any formula expressed in terms of
$\sigma$'s will yield a valid formula about hybrid states if the
$\sigma$'s are replaced by either $\rho$'s or $\varrho$'s.

For hybrid regions, there are two possible types of conditional state,
depending on whether the conditioning is done on the quantum or the
classical region.  In the case of conditioning on the classical
variable, a hybrid conditional state $\sigma_{A|X}$ is a positive
operator on $\Hilb[XA]$ satisfying
\begin{equation}
  \Tr[A]{\sigma_{A|X}} = I_X.
\end{equation}
In the case of conditioning on the quantum region, a hybrid
conditional state $\sigma_{X|A}$ is again a positive operator on
$\Hilb[XA]$, but this time satisfying
\begin{equation}
  \Tr[X]{\sigma_{X|A}} = I_A.
\end{equation}

\subsection{Ensemble Averaging as Belief Propagation}

\label{CCS:States}

A hybrid conditional state of the form $\sigma_{A|X}$ is a quantum
state conditioned on a classical variable. The causal interpretation
of such states is a process that takes a classical variable as input
and outputs a quantum state.  This is just an ensemble preparation
procedure.  In such a preparation procedure, a classical random
variable $X$ is sampled from a probability distribution $P(X)$ (by
flipping coins, rolling dice, or any other suitable method).
Depending on the value $x$ of $X$ thereby obtained, one of a set of
quantum states $\{\rho^A_x\}$ is prepared for a quantum region $A$.
If you do not know the value of $X$, then you should assign the
ensemble average state $\rho_A = \sum_x P(X=x) \rho^A_x$ to $A$. This
scenario is depicted in fig.~\ref{fig:CCS:QPreparation}. A quantum
preparation procedure has an obvious classical analogue wherein the
quantum region $A$ is replaced by a classical variable $R$ that is
prepared in one of a set of probability distributions $P(R|X=x)$
depending on the value $x$ of $X$. If you do not know the value of
$X$, then the classical belief propagation rule specifies that you
should assign the probability distribution $P(R)=\sum_X P(R|X)P(X)$ to
$R$.  This case is illustrated in fig.~\ref{fig:CCS:CPreparation}.
This section shows that a set of density operators can be represented
by a hybrid conditional state of the form $\sigma_{A|X}$ and that the
formula for the ensemble average state in a preparation procedure is a
special case of quantum belief propagation.

\begin{figure}[htb]
  \begin{minipage}{5.2cm}
    \raggedright
    \subfigure[]{
      \includegraphics[scale=0.4]{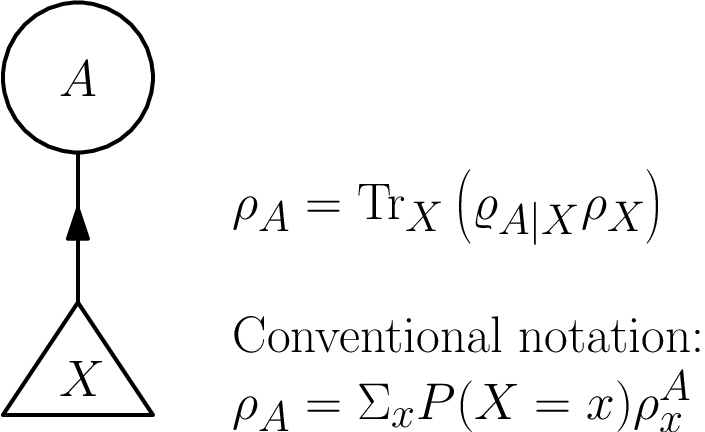}
      \label{fig:CCS:QPreparation}
    }
    \subfigure[]{
      \includegraphics[scale=0.4]{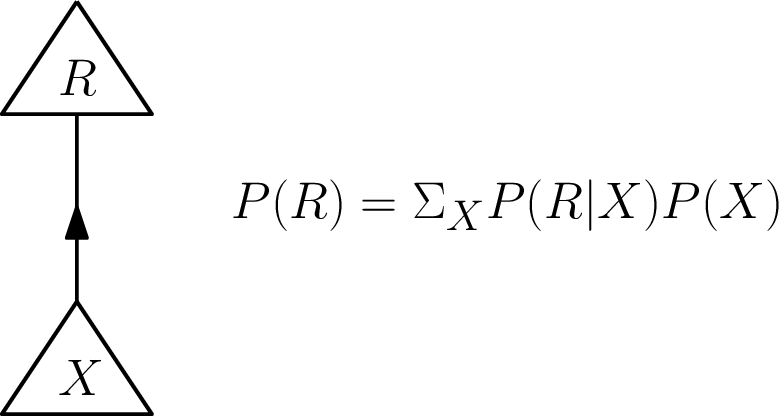}
      \label{fig:CCS:CPreparation}
    }
  \end{minipage}
  \caption{\label{fig:CCS:Preparation}Quantum and classical
    preparation procedures.  \subref{fig:CCS:QPreparation} A quantum
    preparation procedure is a process that takes a classical variable
    $X$ as input and outputs a quantum region $A$ in one of a set
    $\{\rho^A_x\}$ of states, depending on the value of $X$.  It is
    mathematically equivalent to the special case of a CPT map where
    the input is classical. \subref{fig:CCS:CPreparation} A classical
    preparation procedure is a process that takes a variable $X$ as
    input and outputs one of a set $\{P(R|X=x)\}$ of probability
    distributions over $R$, depending on the value of $X$.  It is
    mathematically equivalent to the stochastic dynamics depicted in
    fig.~\ref{fig:CCS:Stochastic}.}
\end{figure}

\begin{theorem}
  \label{thm:CCS:States}
  Let $\sigma_{A|X}$ be a hybrid operator, so that by
  eq.~(\ref{eq:CCS:HybridOp}) it can be written as
  \begin{equation}
    \label{eq:CCS:States}
    \sigma_{A|X} = \sum_x \rho^A_x \otimes \Ket{x}\Bra{x}_X,
  \end{equation}
  for some set of operators $\{ \rho^A_x\}$.  Then, $\sigma_{A|X}$
  satisfies the definition of both an acausal and a causal
  conditional state for $A$ given $X$ iff each of the components
  $\rho^A_x$ is a normalized state on $\Hilb[A]$.
\end{theorem}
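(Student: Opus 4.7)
The plan is to reduce both the acausal and causal conditional-state conditions to the same pair of componentwise requirements, namely positivity and unit trace of each $\rho^A_x$. The proof splits naturally into three short steps.

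First, I would check the acausal case directly from Definition~\ref{def:ACS:ACS}. Because the operators $\Ket{x}\Bra{x}_X$ are mutually orthogonal rank-one projectors, the hybrid operator $\sigma_{A|X} = \sum_x \rho^A_x \otimes \Ket{x}\Bra{x}_X$ is block-diagonal in the preferred basis of $\Hilb[X]$, with blocks $\rho^A_x$. Hence $\sigma_{A|X} \geq 0$ iff each $\rho^A_x \geq 0$. Then I would compute
\begin{equation}
\Tr[A]{\sigma_{A|X}} = \sum_x \Tr{\rho^A_x}\, \Ket{x}\Bra{x}_X,
\end{equation}
and note that this equals $I_X = \sum_x \Ket{x}\Bra{x}_X$ iff $\Tr{\rho^A_x} = 1$ for every $x$. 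Combining these, $\sigma_{A|X}$ is an acausal conditional state iff each component $\rho^A_x$ is a normalized density operator on $\Hilb[A]$.

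Second, I would handle the causal case by invoking Definition~\ref{def:CCS:CCS} together with the observation that partial transposition acts trivially on the classical factor. Choosing the preferred basis $\{\Ket{x}\}$ of $\Hilb[X]$ to define the partial transpose $T_X$, each $\Ket{x}\Bra{x}_X$ is invariant under $T_X$, so $\sigma_{A|X}^{T_X} = \sigma_{A|X}$. Since $X$ is the conditioning region, $\sigma_{A|X}$ is a causal conditional state iff $\sigma_{A|X}^{T_X}$ is an acausal one, and this is the same operator. Therefore the causal and acausal conditions on hybrid operators of this form coincide, and both reduce to the criterion established in the first step.

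The only subtlety worth flagging is that Definition~\ref{def:CCS:CCS} states the partial transpose is on the conditioning region, and one must verify that choosing the preferred basis to realize $T_X$ is legitimate: the remark following the definition guarantees that the set of causal conditional states is independent of the basis used for the partial transpose, so this choice is harmless. With that observation in hand, the equivalence of the acausal and causal characterizations is immediate, and the theorem follows by combining the two steps. No real obstacle arises — the essential content is simply that classical diagonal structure is invariant under transposition in the preferred basis.
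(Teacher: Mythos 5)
Your proposal is correct and follows essentially the same route as the paper's own proof: the acausal condition is reduced componentwise to positivity and unit trace of each $\rho^A_x$ (the paper extracts block positivity by evaluating on product vectors $\Ket{\phi}_A \otimes \Ket{x}_X$, which is the same block-diagonality observation you make), and the causal case is handled by noting that operators of this hybrid form are invariant under partial transpose on $X$ in the preferred basis. Your explicit remark that the basis-independence of the set of causal conditional states licenses evaluating $T_X$ in the preferred basis is a worthwhile clarification that the paper leaves implicit.
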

The proof is provided in appendix~\ref{Appendix:Proofs}.

It is often convenient to use the notation $\sigma_{A|X=x} = \Bra{x}_X
\sigma_{A|X} \Ket{x}_X = \rho^A_x$ for the components of a conditional
state of this form.

If we adopt the convention that the partial transpose on the Hilbert
space of a classical variable is performed in its preferred basis then
it has no effect on hybrid operators.  Thus, acausal conditional
states of the form $\rho_{A|X}$ are invariant under partial transpose
on $X$ and this is why acausal and causal conditional states that are
conditioned on the classical variable have the same form\footnote{Even
  if we do not adopt the convention of evaluating partial transposes
  in the preferred basis, the sets of acausal and causal conditional
  states are still isomorphic.  If $\left \{ \Ket{x}_X \right \}$ is
  the preferred basis for acausal states then this amounts to choosing
  a different preferred basis $\left \{ \Ket{x^*}_X \right \}$ for
  causal states, where $^*$ is complex conjugation in the basis used
  to define the partial transpose.  However, this is an unnecessary
  complication that is avoided by adopting the recommended
  convention.}. The remainder of this section concerns the causal
interpretation of such states in terms of preparation procedures, so
we shift to the notation $\varrho_{A|X}$.

\begin{proposition}
  \label{prop:CCS:States}
  Let $\varrho_{A|X}$ be the causal hybrid conditional state with
  components given by a set of states $\{\rho^A_x\}$.  The ensemble
  average state arising from a preparation procedure that samples a
  value $x$ of a classical variable $X$ from the distribution $P(X)$
  and prepares the state $\rho^A_x$, is given by
  \begin{equation}
  \rho_A = \sum_x  P(X=x) \rho^A_x.
  \end{equation}
  This can be expressed in the conditional states formalism via the
  quantum belief propagation rule as
  \begin{equation}
    \label{eq:CCS:Ensemble}
    \rho_A = \Tr[X]{\varrho_{A|X} \rho_X},
  \end{equation}
  where $\rho_X = \sum_x P(X=x) \Ket{x}\Bra{x}_X$.
\end{proposition}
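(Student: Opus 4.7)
The plan is to verify the formula by direct computation, substituting the explicit hybrid forms of $\varrho_{A|X}$ and $\rho_X$ into the right-hand side of eq.~\eqref{eq:CCS:Ensemble} and exploiting the orthogonality of the preferred basis $\{\Ket{x}\}$ for $\Hilb[X]$. The conceptual point to emphasize first is that, because $\varrho_{A|X}$ is by hypothesis a hybrid conditional state with components $\rho^A_x$, theorem~\ref{thm:CCS:States} allows me to write $\varrho_{A|X} = \sum_x \rho^A_x \otimes \Ket{x}\Bra{x}_X$, and since each $\rho^A_x$ is a valid density operator on $\Hilb[A]$, this is simultaneously an acausal and causal conditional state; the causal label $\varrho$ is chosen purely to signal the preparation-procedure reading.

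Next I would carry out the multiplication $\varrho_{A|X}\rho_X$. Using the implicit-identity convention introduced in \S\ref{ACS} to identify $\rho_X$ on $\Hilb[X]$ with $I_A \otimes \rho_X$ on $\Hilb[XA]$, I compute
\begin{equation}
\varrho_{A|X}\rho_X = \sum_{x,x'} P(X=x')\, \rho^A_x \otimes \Ket{x}\Braket{x|x'}\Bra{x'}_X = \sum_x P(X=x)\, \rho^A_x \otimes \Ket{x}\Bra{x}_X,
\end{equation}
where the cross terms vanish by orthonormality of $\{\Ket{x}\}$. Taking the partial trace over $X$ and using $\Tr{\Ket{x}\Bra{x}_X} = 1$ yields
\begin{equation}
\Tr[X]{\varrho_{A|X}\rho_X} = \sum_x P(X=x)\, \rho^A_x,
\end{equation}
which is precisely the classical ensemble-average formula and establishes the proposition.

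There is really no hard step here: the result is essentially a book-keeping identity that exhibits the diagonal hybrid structure in action. The only thing one needs to be slightly careful about is distinguishing the roles of the two instances of the preferred basis (one coming from the decomposition of $\varrho_{A|X}$ via theorem~\ref{thm:CCS:States}, the other from the diagonal form of the classical state $\rho_X$) and confirming that the implicit tensoring with identities is applied consistently on $\Hilb[XA]$ so that the multiplication and partial trace are well defined. Once the orthogonality of $\{\Ket{x}\}$ collapses the double sum, the remainder of the argument is immediate.
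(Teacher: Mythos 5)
Your proof is correct and is exactly the argument the paper gives: the paper's own justification is the one-line remark that the result ``follows simply from substituting the definition of $\rho_X$ and $\varrho_{A|X}$ into eq.~\eqref{eq:CCS:Ensemble}'', and your computation just carries out that substitution explicitly, using the orthonormality of the preferred basis and the partial trace over $X$. (The paper also notes an alternative derivation as a special case of proposition~\ref{prop:CCS:Jamiol} via the CQ map of eq.~\eqref{eq:Hybrid:CQ}, but the direct substitution you chose is the primary proof given.)
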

This result follows simply from substituting the definition of
$\rho_X$ and $\varrho_{A|X}$ into eq.~(\ref{eq:CCS:Ensemble}).

Fig.~\ref{fig:CCS:Preparation} and the second and seventh lines of
table~\ref{tbl:CCS:Conventional} summarize how the representation of a
preparation procedure within the conditional states formalism
contrasts with the conventional representation and how the latter
generalizes the analogous classical expression.

It should be noted that theorem~\ref{thm:CCS:States} can alternatively
be derived as a special case of theorem~\ref{thm:CCS:Jamiol} and
proposition~\ref{prop:CCS:States} as a special case of
proposition~\ref{prop:CCS:Jamiol}.  This follows from the fact that a
preparation procedure can be represented by a CPT map
$\mathcal{E}_{A|X}$ from a classical variable to a quantum region
(sometimes called a CQ map).  The map is defined on diagonal states
$\rho_X$ via
\begin{equation}
  \label{eq:Hybrid:CQ}
  \mathcal{E}_{A|X}(\rho_X) = \sum_x \Bra{x}_X \rho_X \Ket{x}_X \rho^A_x.
\end{equation}
By proposition~\ref{prop:CCS:Jamiol}, the conditional state associated
with the preparation is the Jamio{\l}kowski isomorphic operator to
this map. Eq.~\eqref{eq:CCS:Ensemble} is then obtained as a special
case of eq.~\eqref{eq:CCS:Evolve} where the input is classical.

\subsection{The Born Rule as Belief Propagation}

\label{CCS:POVM}

A hybrid conditional state of the form $\sigma_{Y|A}$ is a classical
probability distribution conditioned on a quantum region. The causal
interpretation of such states is as a process that takes a quantum
region as input and outputs a classical variable.  This is just a
measurement.  The most general kind of measurement on a quantum region
$A$ is conventionally represented by a POVM $\{E^A_y\}$ with the
classical variable $Y$ ranging over the possible outcomes.  If the
state of the region is $\rho_A$, then the probability of obtaining
outcome $y$ is given by the Born rule as $P(Y=y) = \Tr[A]{E^A_y
  \rho_A}$. This scenario is depicted in
fig.~\ref{fig:CCS:QMeasurement}.  In the classical analogue, the
quantum region $A$ is replaced by a classical variable $R$, the state
$\rho_A$ is replaced by a distribution $P(R)$ and the POVM is replaced
by a (possibly noisy) classical measurement described by a set of
response functions $\{ P(Y=y|R) \}$, i.e.\ a set of functions of $R$
labelled by the values of $Y$, where $P(Y=y|R=r)$ specifies the
probability of obtaining the outcome $y$ given that $R=r$.  The
overall probability of obtaining the outcome $y$ is given by
$P(Y=y)=\sum_R P(Y=y|R)P(R)$, which is just another instance of belief
propagation. This case is illustrated in
fig.~\ref{fig:CCS:CMeasurement}.  In analogy to this, the remainder of
this section shows that the components of a conditional state
$\sigma_{Y|A}$ form a POVM, and that the Born rule can be written as
quantum belief propagation with respect to a causal conditional state
of this form.

\begin{figure}[htb]
  \begin{minipage}{5.1cm}
    \raggedright
    \subfigure[]{
      \includegraphics[scale=0.4]{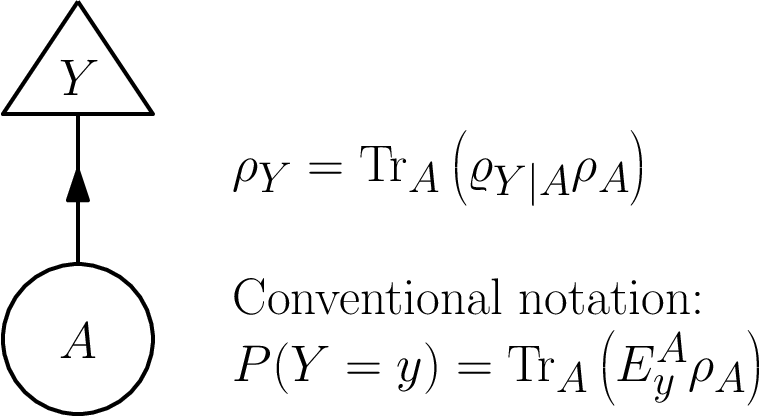}
      \label{fig:CCS:QMeasurement}
    }
    \subfigure[]{
      \includegraphics[scale=0.4]{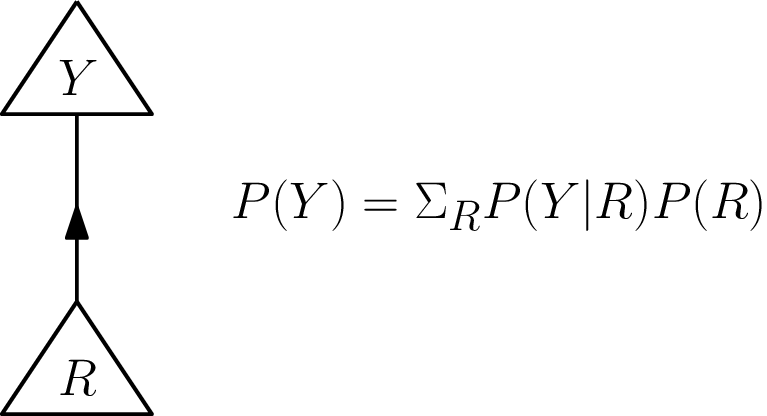}
      \label{fig:CCS:CMeasurement}
    }
  \end{minipage}
  \caption{\label{fig:CCS:Measurement}Quantum and classical
    measurements.  \subref{fig:CCS:QMeasurement} A quantum measurement
    is a process that takes a quantum region $A$ as input and outputs
    a classical variable $Y$.  It is mathematically equivalent to the
    special case of a CPT map where the output is classical.
    \subref{fig:CCS:CMeasurement} A classical (noisy) measurement is a
    process that takes a variable $R$ as input and outputs a variable
    $Y$ that depends on $R$, possibly in a coarse-grained or
    non-deterministic way.  It is mathematically equivalent to the
    stochastic dynamics depicted in fig.~\ref{fig:CCS:Stochastic}.}
\end{figure}

\begin{theorem}
  \label{thm:CCS:POVM}
  Let $\sigma_{Y|A}$ be a hybrid operator so that, by
  eq.~\eqref{eq:CCS:HybridOp}, it can be written in the form
  \begin{equation}
    \label{eq:CCS:POVM}
    \sigma_{Y|A} = \sum_y \Ket{y}\Bra{y}_Y \otimes E^A_y,
  \end{equation}
  for some set of operators $\{ E^A_y \}$.  Then, $\sigma_{Y|A}$
  satisfies the definition of both an acausal and a causal conditional
  state for $Y$ given $A$ iff the set $\{ E^A_y \}$ is a POVM on
  $\Hilb[A]$, i.e.\ each $E^A_y$ is positive and $\sum_y E^A_y = I_A$.
\end{theorem}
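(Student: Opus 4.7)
The plan is to establish both the acausal and causal characterizations by direct verification, exploiting the block-diagonal structure forced by the hybrid condition.

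First I would unpack the hybrid form~\eqref{eq:CCS:POVM}: since $\sigma_{Y|A}$ is block-diagonal in the preferred basis $\{\Ket{y}\}$ of $\Hilb[Y]$ with blocks $E^A_y$, positivity of $\sigma_{Y|A}$ is equivalent to positivity of each $E^A_y$, because $\Bra{\psi}\sigma_{Y|A}\Ket{\psi} = \sum_y \Bra{\psi_y} E^A_y \Ket{\psi_y}$ where $\Ket{\psi_y} = \Bra{y}_Y\Ket{\psi}$, and conversely one recovers each $E^A_y$ by sandwiching with $\Ket{y}_Y$. A one-line computation gives $\Tr[Y]{\sigma_{Y|A}} = \sum_y E^A_y$, so the normalization condition $\Tr[Y]{\sigma_{Y|A}} = I_A$ of definition~\ref{def:ACS:ACS} translates exactly into $\sum_y E^A_y = I_A$. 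These two equivalences together prove that $\sigma_{Y|A}$ is an acausal conditional state iff $\{E^A_y\}$ is a POVM.

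For the causal direction, I would invoke definition~\ref{def:CCS:CCS}: $\sigma_{Y|A}$ is a causal conditional state iff $\sigma_{Y|A}^{T_A}$ is an acausal conditional state. The partial transpose over $A$ preserves the hybrid form, mapping $\sum_y \Ket{y}\Bra{y}_Y \otimes E^A_y$ to $\sum_y \Ket{y}\Bra{y}_Y \otimes (E^A_y)^{T_A}$. Since the transpose preserves positivity and fixes $I_A$, the family $\{E^A_y\}$ is a POVM iff $\{(E^A_y)^{T_A}\}$ is a POVM. Combining this with the acausal case yields the causal equivalence and, in passing, confirms that the sets of acausal and causal hybrid conditional states of this form coincide, as anticipated in the preamble to \S\ref{CCS:Hybrid}.

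There is no real obstacle here: the result amounts to a dictionary between the hybrid block-diagonal structure and the two POVM axioms. The only subtle point worth flagging is the choice of basis for the partial transpose in definition~\ref{def:CCS:CCS}; by the paper's convention that transposes on classical regions are taken in the preferred basis, one could alternatively partial-transpose over $Y$, which acts trivially on hybrid operators and makes the coincidence of the acausal and causal characterizations immediate without invoking $(E^A_y)^{T_A}$.
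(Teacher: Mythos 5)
Your proof is correct and follows essentially the same route as the paper's: positivity is read off block by block from the hybrid structure, the partial trace over $Y$ gives the completeness condition, and the causal case is reduced to the acausal one via the partial transpose over $A$ together with the fact that $\{(E^A_y)^{T_A}\}$ is a POVM iff $\{E^A_y\}$ is. Your closing observation --- that defining causal conditional states by a partial transpose over the conditioned region $Y$ (which the paper's definition explicitly permits) makes the coincidence immediate --- is a nice shortcut consistent with the paper's conventions, but it is a remark rather than a different proof.
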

The proof is given in appendix~\ref{Appendix:Proofs}.

It is sometimes useful to use the notation $\sigma_{Y=y|A} = \Bra{y}_Y
\sigma_{Y|A} \Ket{y}_Y = E^A_y$ for the components of conditional
states of this form.

Unlike the case of hybrid states that are conditioned on a classical
variable, conditional states of the form $\sigma_{Y|A}$ are not
invariant under partial transpose on the conditioning region.
However, taking the partial transpose over $A$ of $\rho_{Y|A}$ yields
another valid acausal conditional state, because $\left \{ \left
    (E^A_y \right )^{T_A} \right \}$ is a POVM iff $\left \{ E^A_y
\right \}$ is.  The remainder of this section concerns the causal
interpretation of such states, so the notation $\varrho_{Y|A}$ is
adopted.

\begin{proposition}
  Consider a measurement of a POVM $\{E^A_y\}$ on a quantum region $A$
  in state $\rho_A$.  Let $\varrho_{Y|A}$ be the causal conditional
  state with components $E^A_y$.  The Born rule,
  \begin{equation}
    P(Y=y) = \Tr[A]{E^A_y \rho_A},
  \end{equation}
  can then be expressed in the conditional states formalism as the
  quantum belief propagation rule
  \begin{equation}
    \label{eq:CCS:CondBorn}
    \rho_Y = \Tr[A]{\varrho_{Y|A} \rho_A},
  \end{equation}
  where $\rho_Y = \sum_y P(Y=y) \Ket{y}\Bra{y}_Y$.
\end{proposition}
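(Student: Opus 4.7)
The plan is to unpack the left-hand side of eq.~\eqref{eq:CCS:CondBorn} using the explicit form of $\varrho_{Y|A}$ established in theorem~\ref{thm:CCS:POVM}, substitute into the belief propagation rule, and show that the result is a direct term-by-term reencoding of the Born rule.

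First I would invoke theorem~\ref{thm:CCS:POVM} to write $\varrho_{Y|A}=\sum_y \Ket{y}\Bra{y}_Y \otimes E^A_y$, where each $E^A_y$ is a POVM element on $\Hilb[A]$. Using the implicit-identity convention introduced in \S\ref{ACS}, the product $\varrho_{Y|A}\rho_A$ on $\Hilb[YA]$ decomposes as $\sum_y \Ket{y}\Bra{y}_Y \otimes (E^A_y \rho_A)$, since the classical projectors $\Ket{y}\Bra{y}_Y$ act on a disjoint factor from $\rho_A$. Taking the partial trace over $A$ is linear and acts only on the second tensor factor, producing $\sum_y \Ket{y}\Bra{y}_Y\,\Tr[A]{E^A_y \rho_A}$.

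Next, I would identify the scalar $\Tr[A]{E^A_y \rho_A}$ with $P(Y=y)$ by the ordinary Born rule, so that $\Tr[A]{\varrho_{Y|A}\rho_A} = \sum_y P(Y=y)\Ket{y}\Bra{y}_Y$. This is exactly $\rho_Y$ as defined in the statement of the proposition, establishing eq.~\eqref{eq:CCS:CondBorn}.

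There is no genuine obstacle here; the only delicate point is bookkeeping around the implicit tensoring with identity on $Y$, to ensure that the partial trace over $A$ touches only the POVM factor and leaves the classical projector $\Ket{y}\Bra{y}_Y$ intact. In that sense the proposition is, much like proposition~\ref{prop:CCS:States}, essentially a notational restatement of a familiar quantum rule. Its real significance lies in exhibiting the Born rule as one more instance of quantum belief propagation, exactly parallel to the acausal rule eq.~\eqref{eq:ACS:QBP} and the dynamical rule eq.~\eqref{eq:CCS:Evolve}, and also derivable as a special case of proposition~\ref{prop:CCS:Jamiol} applied to the CPT map from $A$ to the classical variable $Y$ whose Jamio\l{}kowski operator is $\varrho_{Y|A}$.
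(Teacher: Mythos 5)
Your proof is correct and follows essentially the same route as the paper, which simply substitutes the hybrid form $\varrho_{Y|A}=\sum_y \Ket{y}\Bra{y}_Y\otimes E^A_y$ into eq.~\eqref{eq:CCS:CondBorn} and reads off the Born rule term by term. Your closing remark that the result is also a special case of proposition~\ref{prop:CCS:Jamiol} via the QC map matches the paper's own observation.
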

This is easily verified by substituting the definition of
$\varrho_{Y|A}$ from eq.~(\ref{eq:CCS:POVM}) into
eq.~(\ref{eq:CCS:CondBorn}).

This representation of a measurement as a causal conditional state and
of the Born rule as an instance of belief propagation is summarized in
fig.~\ref{fig:CCS:Measurement} and the third and sixth lines of
table~\ref{tbl:CCS:Conventional}.

Once again, these results can be understood as a special case of
theorem~\ref{thm:CCS:Jamiol} and proposition~\ref{prop:CCS:Jamiol} by
recognizing that a POVM may be represented as a map from a quantum
region to a classical variable (sometimes called a QC map).
Specifically, if the probability distribution $P(Y)$ is represented by
a diagonal state $\rho_Y$, then the measurement can be represented by
the CPT map $\mathcal{E}_{Y|A}$ defined by
 \begin{equation}
   \label{eq:Hybrid:QC}
   \rho_Y = \mathcal{E}_{Y|A}(\rho_A) = \sum_y \Tr[A]{E^A_y
     \rho_A}\Ket{y}\Bra{y}_Y.
\end{equation}
The causal conditional state $\varrho_{Y|A}$ appearing in
eq.~\eqref{eq:CCS:CondBorn} is simply the Jamio{\l}kowski isomorphic
operator to this map.

\subsection{Belief Propagation of Conditional States}

\label{CCS:CondBP}

Consider three causally-related regions $A$, $B$ and $C$, such that
$B$ is in the future of $A$ and $C$ is in the future of $B$.  If the
dynamics is Markovian then it can be described by first applying a CPT
map $\mathcal{E}_{B|A}$ to $A$ followed by a CPT map
$\mathcal{E}_{C|B}$ to $B$.  This scenario is illustrated in
fig.~\ref{fig:CCS:QCondBP}.  If we are only interested in regions $A$
and $C$ then region $B$ can be eliminated from the description by
composing the two maps to obtain $\mathcal{E}_{C|A} =
\mathcal{E}_{C|B} \circ \mathcal{E}_{B|A}$, where $\mathcal{E} \circ
\mathcal{F} \left ( \cdot \right ) \equiv \mathcal{E} \left (
  \mathcal{F} \left ( \cdot \right ) \right )$.

\begin{figure}[htb]
  \begin{minipage}{5cm}
    \raggedright
    \subfigure[]{
      \includegraphics[scale=0.4]{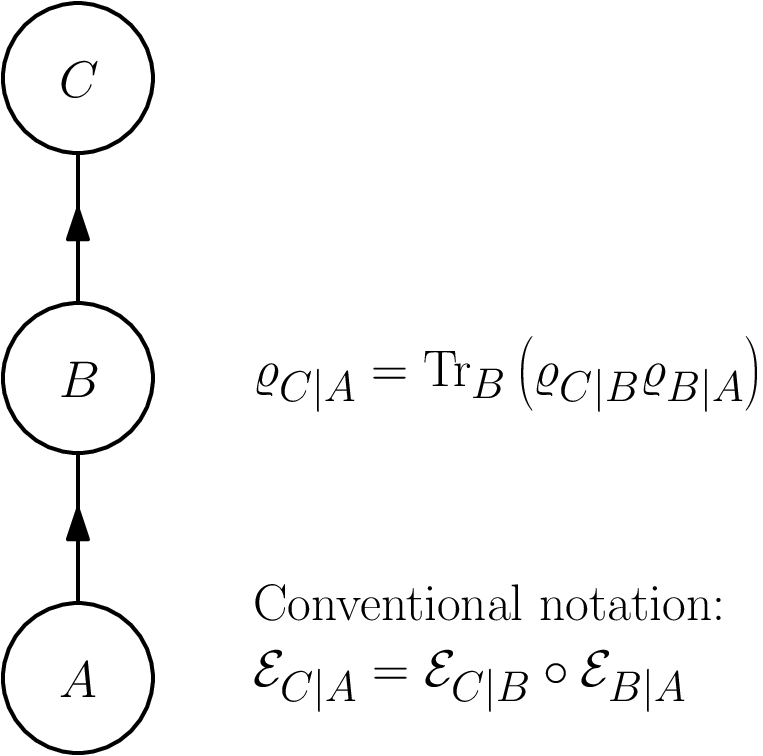}
      \label{fig:CCS:QCondBP}
    }
    \subfigure[]{
      \includegraphics[scale=0.4]{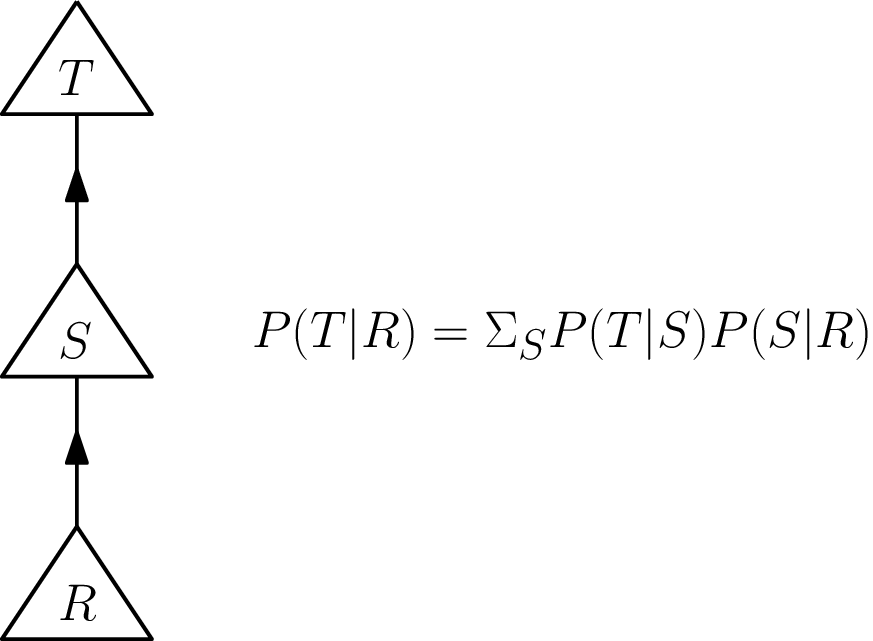}
      \label{fig:CCS:CCondBP}
    }
  \end{minipage}
  \caption{\label{fig:CCS:CondBP}Propagating causal conditional states
    and conditional probability distributions.
    \subref{fig:CCS:QCondBP} Quantum case.  \subref{fig:CCS:CCondBP}
    Classical case.}
\end{figure}

In the conditional states formalism the CPT maps are replaced by the
Jamio{\l}kowski isomorphic causal conditional states $\varrho_{B|A}$,
$\varrho_{C|B}$ and $\varrho_{C|A}$, and thus $\mathcal{E}_{C|A} =
\mathcal{E}_{C|B} \circ \mathcal{E}_{B|A}$ should be replaced by a
formula for $\varrho_{C|A}$ in terms of $\varrho_{B|A}$ and
$\varrho_{C|B}$.

As an aid to intuition, consider the classical analogue of this scenario
as depicted in fig.~\ref{fig:CCS:CCondBP}.  Here, the variable $S$ is
in the future of $R$, and $T$ is in the future of $S$.  The three
variables are related by a Markovian dynamics, described by the
conditional probability distributions $P(S|R)$ and $P(T|S)$.  An
initial probability distribution $P(R)$ can be propagated into the
future to obtain $P(T)$ in two steps.  First we propagate from $R$ to
$S$ to obtain $P(S) = \sum_R P(S|R)P(R)$, and then from $S$ to $T$ to
obtain $P(T)= \sum_{S}P(T|S)P(S)$.  Combining these steps gives $P(T)
= \sum_{R,S} P(T|S)P(S|R)P(R)$, so defining the conditional
probability distribution
\begin{equation}
  P(T|R) = \sum_S P(T|S)P(S|R),
\end{equation}
allows the belief propagation from $R$ to $T$ to be performed in a
single step via $P(T) = \sum_R P(T|R)P(R)$.

The quantum analogue of this is given by the following theorem.
\begin{theorem}
  \label{thm:CCS:CondBP}
  Let $\mathcal{E}_{B|A}$, $\mathcal{E}_{C|B}$ and $\mathcal{E}_{C|A}$
  be linear maps such that $\mathcal{E}_{C|A} = \mathcal{E}_{C|B}
  \circ \mathcal{E}_{B|A}$.  Then, the Jamio{\l}kowski isomorphic
  operators, $\varrho_{B|A}$, $\varrho_{C|B}$ and $\varrho_{C|A}$
  satisfy
  \begin{equation}
    \label{eq:CCS:CondBP}
    \varrho_{C|A} = \Tr[B]{\varrho_{C|B}\varrho_{B|A}}.
  \end{equation}
  Conversely, if three operators satisfy eq.~\eqref{eq:CCS:CondBP},
  then the Jamio{\l}kowski isomorphic maps satisfy $\mathcal{E}_{C|A}
  = \mathcal{E}_{C|B} \circ \mathcal{E}_{B|A}$.
\end{theorem}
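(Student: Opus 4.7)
The plan is to peel off the composition of maps by exploiting the Jamio\l{}kowski formula from Theorem~\ref{thm:ACS:Jamiol}/Theorem~\ref{thm:CCS:Jamiol} twice: once to turn the inner map $\mathcal{E}_{B|A}$ into the operator $\varrho_{B|A}$, and once more via the ``reverse'' direction in eq.~\eqref{eq:CCS:reverseJamiol} to rewrite the outer action of $\mathcal{E}_{C|B}$ on that operator as a partial trace against $\varrho_{C|B}$.

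Concretely, I would start by writing out $\varrho_{C|A}$ using the definition in eq.~\eqref{eq:ACS:Jamiol}, applying $\mathcal{E}_{C|A'}$ to one half of the (unnormalised) maximally entangled operator on $\Hilb[AA']$. Substituting the hypothesis $\mathcal{E}_{C|A'}=\mathcal{E}_{C|B}\circ\mathcal{E}_{B|A'}$ and using the fact that the identity superoperator $\mathcal{I}_A$ commutes with any map acting on the primed factor, I would obtain
\begin{equation*}
  \varrho_{C|A} = (\mathcal{E}_{C|B}\otimes \mathcal{I}_A)\Bigl[(\mathcal{E}_{B|A'}\otimes \mathcal{I}_A)\bigl(\textstyle\sum_{j,k}\Ket{j}\Bra{k}_A\otimes\Ket{k}\Bra{j}_{A'}\bigr)\Bigr].
\end{equation*}
The bracketed expression is precisely the Jamio\l{}kowski-isomorphic operator of $\mathcal{E}_{B|A}$, i.e.\ $\varrho_{B|A}\in\Lin[AB]$, by eq.~\eqref{eq:ACS:Jamiol}. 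Then I would invoke eq.~\eqref{eq:CCS:reverseJamiol} with $M_{AC}$ replaced by $\varrho_{B|A}$ (taking the ``spectator'' system to be $A$ and the map acting on $B$), which yields $(\mathcal{E}_{C|B}\otimes \mathcal{I}_A)(\varrho_{B|A})=\Tr[B]{\varrho_{C|B}\varrho_{B|A}}$. Chaining these two identities gives eq.~\eqref{eq:CCS:CondBP}.

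For the converse direction, I would appeal to the fact that the Jamio\l{}kowski correspondence in Theorem~\ref{thm:ACS:Jamiol} is a bijection between linear maps $\Lin[A]\to \Lin[C]$ and operators in $\Lin[AC]$. Given three operators satisfying eq.~\eqref{eq:CCS:CondBP}, let $\mathcal{E}_{C|A}$, $\mathcal{E}_{C|B}$, $\mathcal{E}_{B|A}$ be their Jamio\l{}kowski preimages. By the forward direction already proved, the composition $\mathcal{E}_{C|B}\circ\mathcal{E}_{B|A}$ has Jamio\l{}kowski operator $\Tr[B]{\varrho_{C|B}\varrho_{B|A}}=\varrho_{C|A}$, which is also the Jamio\l{}kowski operator of $\mathcal{E}_{C|A}$; bijectivity then forces $\mathcal{E}_{C|A}=\mathcal{E}_{C|B}\circ\mathcal{E}_{B|A}$.

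The main obstacle is purely notational: carefully distinguishing the primed and unprimed copies of $\Hilb[A]$ so that one recognises the inner $(\mathcal{E}_{B|A'}\otimes\mathcal{I}_A)$-image of the maximally entangled operator as $\varrho_{B|A}$ (on the correct pair of spaces) before applying the reverse Jamio\l{}kowski identity. No nontrivial analytic content is needed beyond the two forms of the isomorphism already established, plus the elementary observation that $(\mathcal{E}_{C|B}\otimes\mathcal{I}_A)\circ(\mathcal{E}_{B|A'}\otimes\mathcal{I}_A)=(\mathcal{E}_{C|B}\circ\mathcal{E}_{B|A'})\otimes\mathcal{I}_A$.
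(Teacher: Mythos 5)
Your proposal is correct and follows essentially the same route as the paper's proof: the forward direction substitutes the composition into the defining formula~\eqref{eq:ACS:Jamiol}, recognizes the inner image of the maximally entangled operator as $\varrho_{B|A}$, and then applies the reverse Jamio{\l}kowski identity~\eqref{eq:CCS:reverseJamiol} with the unprimed copy of $A$ as spectator --- the paper carries this out component-wise on the matrix units $\Ket{k}\Bra{j}_{A'}$ whereas you do it at the operator level, which is equivalent. The only minor divergence is the converse, where the paper verifies $\mathcal{E}_{C|A}(M_A)=\mathcal{E}_{C|B}\circ\mathcal{E}_{B|A}(M_A)$ by a direct trace computation while you deduce it from the forward direction together with injectivity of the Jamio{\l}kowski correspondence; both arguments are sound.
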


The proof is provided in appendix~\ref{Appendix:Proofs}.

Eq.~\eqref{eq:CCS:CondBP} can be regarded as a belief propagation rule
for causal conditional states.  It propagates beliefs about $B$,
conditional on $A$, into the future to obtain beliefs about $C$,
conditional on $A$.

A similar formalism can be developed for the propagation of acausal
conditional states across acausally-related regions.  However, for
present purposes, it is more interesting to consider a situation of
mixed causality, wherein a causal conditional state is propagated
across two acausally-related regions.  This is used in the application
to steering developed in \S\ref{Cond:Remote}.

Consider the scenario depicted in fig.~\ref{fig:CCS:AcausalCausal}.
Initially, a state $\rho_{AB}$ is assigned to regions $A$ and $B$,
which are acausally-related.  A CPT map $\mathcal{E}_{C|B}$
(alternatively represented by a causal conditional state
$\varrho_{C|B}$) is then applied to region $B$ to obtain the state of
$AC$.  In this scenario, $C$ is causally-related to $B$, but acausally
related to $A$.  By theorem~\ref{thm:ACS:Jamiol}, the state of $AC$ is
given by
\begin{equation}
  \rho_{AC} = \Tr[B]{\varrho_{C|B}\rho_{AB}}.
\end{equation}
Now, $\rho_{AB} = \rho_{B|A}\Sprod \rho_A$ and $\rho_{AC} = \rho_{C|A}
\Sprod \rho_A$, so we have
\begin{equation}
  \rho_{C|A} \Sprod \rho_A = \Tr[B]{\varrho_{C|B}\rho_{B|A}} \Sprod \rho_A,
\end{equation}
where we have used the fact that $\rho_A$ commutes with
$\varrho_{C|B}$.  Taking the $\Sprod$-product of this equation with
$\rho_A^{-1}$ then gives
\begin{equation}
  \label{eq:CCS:AcausalCausal}
  \rho_{C|A} = \Tr[B]{\varrho_{C|B}\rho_{B|A}},
\end{equation}
in analogy with eq.~\eqref{eq:CCS:CondBP}.  This can again be viewed
as a belief propagation rule for conditional states, but this time an
acausal conditional state is being propagated through a causal
conditional state.

\begin{figure}[htb]
  \includegraphics[scale=0.4]{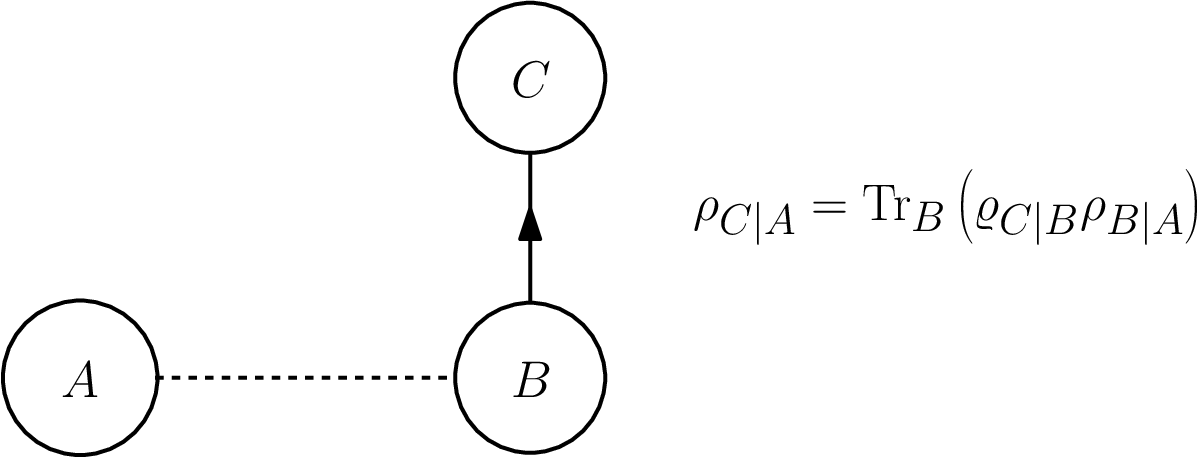}
  \caption{\label{fig:CCS:AcausalCausal}Propagating an acausal
    conditional state through a causal conditional state.}
\end{figure}

\subsection{The Heisenberg Picture}

\label{CCS:Heis}

In \S\ref{CCS:CPT}, quantum evolution from an early region $A$ to a
late region $B$ was described by a map from states on $\Hilb[A]$ to
states on $\Hilb[B]$.  However, dynamics can alternatively be
represented in terms of observables rather than states.  This is
simply the textbook distinction between the Schr\"{o}dinger picture
and the Heisenberg picture.  In the Heisenberg picture, a temporal
evolution is described by a map from the space of observables on the
late region $B$ to the space of observables on the early region $A$.
The observables are usually represented by self-adjoint operators and
the dynamics by unitary operations, but this can be generalized to
take account of generalized measurements and CPT dynamics.  In this
generalization, Heisenberg dynamics consists of a map from POVM
elements (also known as effects) on $B$ to POVM elements on $A$.  In
other words, effects are evolved backwards in time in the Heisenberg
picture.

In order to describe the Heisenberg picture for CPT maps, it is
necessary to define the notion of a dual map.
\begin{definition}
  \label{def:CCS:Dual}
  The \emph{dual map} $\left ( \mathcal{E}_{B|A} \right
  )^{\dagger}:\Lin[B] \rightarrow \Lin[A]$ of a linear map
  $\mathcal{E}_{B|A}:\Lin[A] \rightarrow \Lin[B]$ is the unique map
  that satisfies
  \begin{equation}
    \Tr[A]{\left ( \mathcal{E}_{B|A} \right )^{\dagger}
      \left (N_B \right ) M_A} = \Tr[B]{N_B \mathcal{E}_{B|A} \left (
        M_A \right )}
  \end{equation}
  for all $M_A \in \Lin[A], N_B \in \Lin[B]$.
\end{definition}

Note that the input space for the dual map is the output space for the
original map and vice versa.  The notational convention:
\begin{equation}
  \mathcal{E}^{\dagger}_{A|B} \equiv \left ( \mathcal{E}_{B|A}
  \right )^{\dagger},
\end{equation}
is adopted in order to make this clear.

If $\mathcal{E}_{B|A}$ is the map describing time evolution in the
Schr\"{o}dinger picture through the formula
$\rho_B=\mathcal{E}_{B|A}(\rho_A)$, then the same evolution is
described in the Heisenberg picture by the dual map
$\mathcal{E}^{\dagger}_{A|B}$ through the formula
$E^A_y=\mathcal{E}^{\dag}_{A|B}(E^B_y)$ where $\{ E_y^A \}$ and $\{
E_y^B \}$ are POVMs. This follows from the condition that the two
pictures should be operationally equivalent, i.e.\ they should assign
the same probabilities.  To see this, imagine that the evolution from
region $A$ to $B$ is followed by a measurement on $B$ yielding an
outcome $Y$.  This scenario is depicted in
fig.~\ref{fig:CCS:TimeStepHeis}. The probability of observing the
effect $E^B_y$ after a preparation of $\rho_A$ followed by an
evolution $\mathcal{E}_{B|A}$ is expressed in the Schr\"{o}dinger
picture as $\Tr[B]{E^B_y \rho_B}$ where
$\rho_B=\mathcal{E}_{B|A}(\rho_A)$, while it is expressed in the
Heisenberg picture as $\Tr[A]{E^A_y \rho_A}$ where
$E^A_y=\mathcal{E}^{\dagger}_{A|B}(E^B_y)$.  The definition of the
dual map ensures that the two expressions for the probability are
equivalent, i.e. $\Tr[B]{E^B_y \rho_B} = \Tr[A]{E^A_y \rho_A}$.

A CPT map $\mathcal{E}_{B|A}$ can always be written in a Kraus
decomposition
\begin{equation}
  \mathcal{E}_{B|A} \left (\cdot \right ) = \sum_{\mu} K_{\mu} \left (
    \cdot \right ) K_{\mu}^{\dagger},
\end{equation}
where $K_{\mu}:\Hilb[A] \rightarrow \Hilb[B]$.  The dual map
$\mathcal{E}_{A|B}^{\dagger}$ can then be obtained by taking the
adjoint of the operators in the Kraus decomposition, i.e.,
\begin{equation}
  \mathcal{E}_{A|B}^{\dagger} \left (\cdot \right ) = \sum_{\mu}
  K_{\mu}^{\dagger} \left ( \cdot \right ) K_{\mu}.
\end{equation}
Thus, if $\Hilb[A]$ and $\Hilb[B]$ are isomorphic and
$\mathcal{E}_{B|A}$ is a unitary operation, i.e.
$\mathcal{E}_{B|A}(\cdot)=U (\cdot) U^{\dag}$ for some unitary
operator $U$, then $\mathcal{E}^{\dag}_{A|B}(\cdot)=U^{\dag}(\cdot)
U$.  This is the familiar special case of the Heisenberg picture for
unitary dynamics.

In order to translate the Heisenberg picture into the conditional
states formalism, first represent the POVM $\{E^B_y\}$ by a
conditional state $\varrho_{Y|B}$, the CPT map $\mathcal{E}_{B|A}$ by
a causal conditional state $\varrho_{B|A}$, and the POVM $\{E^A_y\}$
by a conditional state $\varrho_{Y|A}$.  Secondly, note that
fig.~\ref{fig:CCS:TimeStepHeis} is just a special case of
fig.~\ref{fig:CCS:QCondBP} from \S\ref{CCS:CondBP} in which the final
region is classical, so the three conditional states are related by
eq.~\eqref{eq:CCS:CondBP}, i.e.
\begin{equation}
  \label{eq:CCS:CondHeis}
  \varrho_{Y|A} = \Tr[B]{\varrho_{Y|B}\varrho_{B|A}}.
\end{equation}
In \S\ref{CCS:CondBP}, this was described as a belief propagation
formula for causal conditional states because $\varrho_{Y|B}$ was
regarded as defining a map from $\varrho_{B|A}$ to $\varrho_{Y|A}$,
propagating beliefs about $B$, conditional on $A$, into the future.
However, in the context of Heisenberg dynamics, we instead regard
$\varrho_{B|A}$ as defining a map from $\varrho_{Y|B}$ to
$\varrho_{Y|A}$, in the opposite direction to the flow of time.  It
remains to show that eq.~\eqref{eq:CCS:CondHeis} is equivalent to the
conventional description of Heisenberg dynamics in terms of dual maps.

\begin{figure}[htb]
  \includegraphics[scale=0.4]{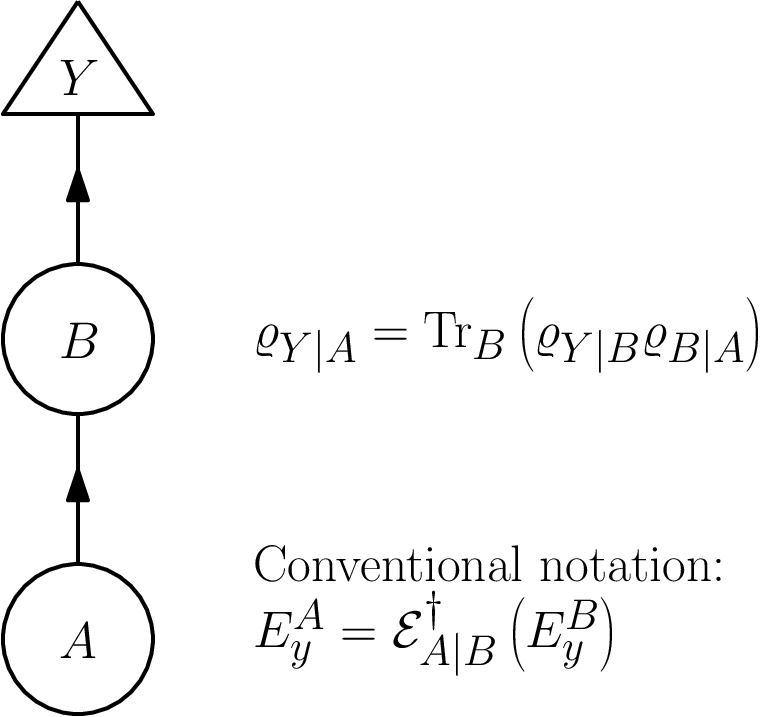}
  \caption{\label{fig:CCS:TimeStepHeis} Dynamics in the Heisenberg
    picture.}
\end{figure}

\begin{theorem}
  \label{thm:CCS:Duals}
  Let $\varrho_{B|A}$ be the causal joint state that is
  Jamio{\l}kowski isomorphic to the CPT map $\mathcal{E}_{B|A}$.
  Then, the action of the dual map $\mathcal{E}^{\dagger}_{A|B}$ on an
  operator $M_{BC}$ is given by
  \begin{equation}
    \label{eq:CCS:DualMap}
    (\mathcal{E}^{\dagger}_{A|B}\otimes \mathcal{I}_C) \left (
      M_{BC} \right ) = \Tr[B]{M_{BC} \varrho_{B|A}}
  \end{equation}
\end{theorem}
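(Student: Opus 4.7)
The plan is to verify eq.~\eqref{eq:CCS:DualMap} by pairing both sides against an arbitrary test operator via the Hilbert--Schmidt inner product, and using the defining property of the dual map (Definition~\ref{def:CCS:Dual}) together with the Jamio{\l}kowski representation of $\mathcal{E}_{B|A}$ established in Theorem~\ref{thm:CCS:Jamiol}. This avoids any direct manipulation of Kraus operators.

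First I would extend Definition~\ref{def:CCS:Dual} trivially by tensoring with $\mathcal{I}_C$ to get, for every $M_{BC}\in\Lin[BC]$ and every test operator $N_{AC}\in\Lin[AC]$,
\begin{equation*}
  \Tr[AC]{(\mathcal{E}^{\dagger}_{A|B}\otimes \mathcal{I}_C)(M_{BC})\,N_{AC}}
  = \Tr[BC]{M_{BC}\,(\mathcal{E}_{B|A}\otimes \mathcal{I}_C)(N_{AC})}.
\end{equation*}
Next I would apply eq.~\eqref{eq:CCS:reverseJamiol} of Theorem~\ref{thm:CCS:Jamiol} to the inner factor on the right, replacing $(\mathcal{E}_{B|A}\otimes \mathcal{I}_C)(N_{AC})$ by $\Tr[A]{\varrho_{B|A} N_{AC}}$. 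Using the implicit-identity convention from \S\ref{ACS} and the cyclicity of the trace on $\Hilb[ABC]$, the right-hand side collapses to the single joint trace $\Tr[ABC]{M_{BC}\,\varrho_{B|A}\,N_{AC}}$.

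On the other hand, pairing the proposed formula $\Tr[B]{M_{BC}\varrho_{B|A}}$ with $N_{AC}$ yields
\begin{equation*}
  \Tr[AC]{\Tr[B]{M_{BC}\varrho_{B|A}}\,N_{AC}}
  = \Tr[ABC]{M_{BC}\,\varrho_{B|A}\,N_{AC}},
\end{equation*}
which is the same expression. Since this equality of Hilbert--Schmidt pairings holds for every $N_{AC}\in\Lin[AC]$ and the trace inner product is non-degenerate on $\Lin[AC]$, the two operators $(\mathcal{E}^{\dagger}_{A|B}\otimes \mathcal{I}_C)(M_{BC})$ and $\Tr[B]{M_{BC}\varrho_{B|A}}$ must agree, establishing eq.~\eqref{eq:CCS:DualMap}.

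The only real subtlety is bookkeeping: $M_{BC}$, $\varrho_{B|A}$ and $N_{AC}$ live on overlapping but distinct tensor factors, so the implicit-identity convention (so that, e.g., $\varrho_{B|A}N_{AC}$ is shorthand for $(\varrho_{B|A}\otimes I_C)(I_A\otimes N_{AC})$ after an appropriate reshuffle) must be applied carefully when invoking cyclicity of the full trace. Once that is handled the proof is a one-line consequence of Theorem~\ref{thm:CCS:Jamiol}; no new analytic machinery is needed, which is why I expect the notation---not the mathematics---to be the main obstacle.
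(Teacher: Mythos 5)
Your argument is correct and is essentially the paper's own proof: both rest on the defining relation of the dual map (definition~\ref{def:CCS:Dual}), the Jamio{\l}kowski representation $\mathcal{E}_{B|A}(\cdot)=\Tr[A]{\varrho_{B|A}\,\cdot\,}$ from theorem~\ref{thm:CCS:Jamiol}, and non-degeneracy of the trace pairing. The only cosmetic difference is that the paper establishes $\mathcal{E}^{\dagger}_{A|B}(N_B)=\Tr[B]{N_B\varrho_{B|A}}$ for operators on $\Hilb[B]$ alone and then extends linearly to $\Hilb[BC]$ at the end, whereas you tensor with $\mathcal{I}_C$ at the outset and test against general $N_{AC}$ throughout.
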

\begin{proof}
  By definition~\ref{def:CCS:Dual}, the dual map to
  $\mathcal{E}_{B|A}$ is the unique linear map
  $\mathcal{E}^{\dagger}_{A|B}$ that satisfies
  \begin{equation}
    \Tr[A]{\mathcal{E}^{\dagger}_{A|B} \left ( N_B \right )M_A} =
    \Tr[B]{N_B \mathcal{E}_{B|A} \left ( M_A\right )}
  \end{equation}
  for all operators $M_A$ and $N_B$.  Using the Jamio{\l}kowski
  isomorphism, theorem~\ref{thm:CCS:Jamiol}, the right hand side can
  be written as
  \begin{align}
    \Tr[B]{N_B \mathcal{E}_{B|A} \left ( M_A\right )} & = \Tr[AB]{N_B
      \varrho_{B|A} M_A} \\
    & = \Tr[A]{\Tr[B]{N_B \varrho_{B|A}}M_A}.
  \end{align}
  The only way this can equal $\Tr[A]{ \mathcal{E}^{\dagger}_{A|B}
    \left ( N_B \right )M_A}$ for all $M_A$ is if
  $\mathcal{E}^{\dagger}_{A|B} \left ( N_B \right ) = \Tr[B]{N_B
    \varrho_{B|A}}$.  Eq.~\eqref{eq:CCS:DualMap} then follows by
  linear extension to $\Hilb[BC]$.
\end{proof}

Combining this with eq.~\eqref{eq:CCS:CondHeis} gives the following
proposition.
\begin{proposition}
  \label{prop:CCS:Heis}
  Let $\varrho_{B|A}$ be the causal conditional state associated with
  a quantum evolution described by the CPT map $\mathcal{E}_{B|A}$ and
  let $\varrho_{Y|A}$ and $\varrho_{Y|B}$ be the hybrid conditional
  states associated with the POVMs $\{ E_y^A \}$ and $\{ E_y^B \}$,
  such that $\{E^A_y\}$ is obtained from $\{ E^B_y\}$ by the
  Heisenberg picture dynamics.  The conventional description of
  evolution in the Heisenberg picture,
  \begin{equation}
    \label{eq:CCS:HeisDynamics}
    E^A_y = \mathcal{E}^{\dag}_{A|B} \left ( E^B_y \right ),
 \end{equation}
 can be expressed in the conditional states formalism as
 \begin{equation}
   \label{eq:CCS:HeisEvolve}
   \varrho_{Y|A} = \Tr[B]{\varrho_{Y|B} \varrho_{B|A}}.
 \end{equation}
\end{proposition}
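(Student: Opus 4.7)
The plan is to reduce the proposition directly to Theorem~\ref{thm:CCS:Duals} by inserting the hybrid structure of $\varrho_{Y|B}$ and $\varrho_{Y|A}$. Since the two hybrid conditional states are, by eq.~\eqref{eq:CCS:POVM}, of the form
\begin{equation}
  \varrho_{Y|B} = \sum_y \Ket{y}\Bra{y}_Y \otimes E^B_y,
  \qquad
  \varrho_{Y|A} = \sum_y \Ket{y}\Bra{y}_Y \otimes E^A_y,
\end{equation}
the components $E^A_y$ and $E^B_y$ are recovered via $\Bra{y}_Y \varrho_{Y|A} \Ket{y}_Y = E^A_y$ and similarly for $B$. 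Thus, to establish the equivalence of eq.~\eqref{eq:CCS:HeisDynamics} and eq.~\eqref{eq:CCS:HeisEvolve}, it suffices to sandwich both sides of eq.~\eqref{eq:CCS:HeisEvolve} between $\Bra{y}_Y$ and $\Ket{y}_Y$ and verify that the resulting identity on the $A$ side coincides with eq.~\eqref{eq:CCS:HeisDynamics}.

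First I would carry out this sandwiching step. Because $\varrho_{B|A}$ acts trivially on the classical register $Y$, the projector $\Ket{y}\Bra{y}_Y$ commutes through the partial trace over $B$, so
\begin{equation}
  \Bra{y}_Y \Tr[B]{\varrho_{Y|B} \varrho_{B|A}} \Ket{y}_Y
  = \Tr[B]{\bigl(\Bra{y}_Y \varrho_{Y|B} \Ket{y}_Y\bigr) \varrho_{B|A}}
  = \Tr[B]{E^B_y \varrho_{B|A}}.
\end{equation}
Second, I apply Theorem~\ref{thm:CCS:Duals}, specialized to the operator $N_B = E^B_y$ (no auxiliary region $C$ is needed, so $\mathcal{I}_C$ is dropped): the theorem gives $\mathcal{E}^{\dagger}_{A|B}(E^B_y) = \Tr[B]{E^B_y \varrho_{B|A}}$. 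Combining the two displays shows that the $y$-component of eq.~\eqref{eq:CCS:HeisEvolve} is exactly $E^A_y = \mathcal{E}^{\dagger}_{A|B}(E^B_y)$, which is eq.~\eqref{eq:CCS:HeisDynamics}. Conversely, if eq.~\eqref{eq:CCS:HeisDynamics} holds for every $y$, then multiplying by $\Ket{y}\Bra{y}_Y$ and summing reproduces eq.~\eqref{eq:CCS:HeisEvolve}.

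There is no real obstacle here: the work is essentially bookkeeping, and the only thing to be slightly careful about is that $\varrho_{Y|A}$ and $\varrho_{Y|B}$ are genuinely valid causal hybrid conditional states, which is guaranteed by Theorem~\ref{thm:CCS:POVM} applied to the POVMs $\{E^A_y\}$ and $\{E^B_y\}$, together with the fact that the Heisenberg dual map sends POVMs to POVMs (so that $\{E^A_y\}$ is indeed a POVM when $\{E^B_y\}$ is). The main conceptual content of the proposition — that Heisenberg evolution is belief propagation through $\varrho_{B|A}$ in the direction opposite to time — is already packaged into Theorem~\ref{thm:CCS:Duals}, so the proposition is basically a corollary obtained by restricting that theorem to POVM-element inputs and re-expressing the output as a hybrid conditional state.
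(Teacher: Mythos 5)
Your proof is correct and follows essentially the same route as the paper, which simply states that the proposition follows from Theorem~\ref{thm:CCS:Duals} and Theorem~\ref{thm:CCS:POVM}; you have merely filled in the bookkeeping (sandwiching with $\Bra{y}_Y$ and $\Ket{y}_Y$ and applying the dual-map formula componentwise) that the paper leaves implicit.
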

This follows straightforwardly from theorem~\ref{thm:CCS:Duals} and
theorem~\ref{thm:CCS:POVM}.

As in \S\ref{CCS:CondBP}, similar reasoning can be applied to other
causal scenarios.  Consider the special case of
fig.~\ref{fig:CCS:AcausalCausal} in which $C$ is replaced by a
classical variable $Y$.  This is depicted in
fig.~\ref{fig:CCS:Remote}.  Two acausally-related regions, $A$ and
$B$, are assigned a state $\rho_{AB}$ and then the POVM $\left \{
  E^B_y \right \}$ (alternatively represented by a conditional state
$\varrho_{Y|B}$) is measured on region $B$.  This is the type of
scenario that occurs in an EPR experiment.  By measuring the region
$B$, information is obtained about the remote region $A$ and we are
interested in how the state of $A$ is correlated with the measurement
outcome $Y$.

\begin{figure}[htb]
  \includegraphics[scale=0.4]{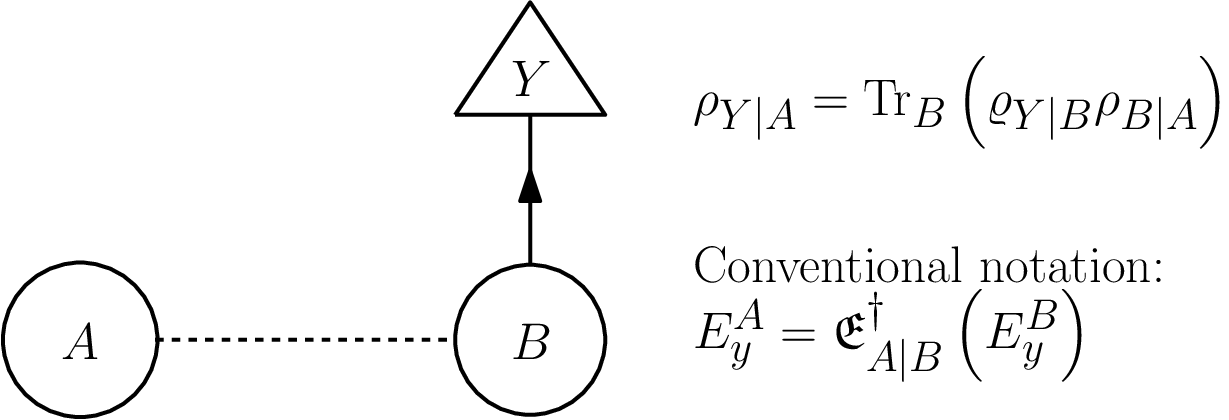}
  \caption{\label{fig:CCS:Remote}Heisenberg evolution for a remote
    measurement.}
\end{figure}

The belief propagation formula for this scenario,
eq.~\eqref{eq:CCS:AcausalCausal}, gives
\begin{equation}
  \label{eq:CCS:CondHeisAC}
  \rho_{Y|A} = \Tr[B]{\varrho_{Y|B}\rho_{B|A}}.
\end{equation}
The components of the conditional state $\rho_{Y|A}$ specify a POVM
$\left \{ E^A_y \right \}$.  This POVM can be thought of as describing
the effective measurement that gets performed on $A$ when we actually
measure region $B$.  When combined with $\rho_A$, the conditional
state $\rho_{Y|A}$ specifies the ensemble of states for region $A$
associated with the different measurement outcomes via $\rho_{AY} =
\rho_{Y|A} \Sprod \rho_A$.  In terms of components, this is
\begin{equation}
  \rho_{YA} = \sum_Y \Ket{y}\Bra{y}_Y \otimes \rho_A^{\frac{1}{2}}
  E^A_y \rho_A^{\frac{1}{2}},
\end{equation}
so the unnormalized state of $A$ corresponding to the outcome $Y=y$ is
\begin{equation}
  P(Y=y)\rho^A_y = \rho_A^{\frac{1}{2}} E^A_y \rho_A^{\frac{1}{2}},
\end{equation}
where $P(Y=y) = \Tr[A]{E^A_y\rho_A} = \Tr[B]{E^B_y \rho_B}$ is the
Born rule probability for the measurement outcome $Y=y$.

If $\rho_{B|A}$ in eq.~\eqref{eq:CCS:CondHeisAC} is thought of as
specifying a map from $\varrho_{Y|B}$ to $\rho_{Y|A}$, then this map
is analogous to a Heisenberg picture dynamics, except that the
propagation is across acausally-related rather than causally-related
regions.  If $\mathfrak{E}_{B|A}$ is the Jamio{\l}kowski isomorphic
map to $\rho_{B|A}$ then, by theorem~\ref{thm:CCS:Duals}, the POVM
elements are related by
\begin{equation}
  \label{eq:CCS:HeisAC}
  E^A_y = \mathfrak{E}_{A|B}^{\dagger} \left ( E^B_y \right ).
\end{equation}
Mathematically, the only difference between this and a Heisenberg
picture map is that $\mathfrak{E}_{B|A} \circ {T_A}$ is completely
positive rather than the map $\mathfrak{E}_{B|A}$ itself.  This is
just a reflection of the fact that we are propagating across acausally
related, rather than causally-related, regions.

A similar expression to eq.~\eqref{eq:CCS:HeisAC} has appeared in the
context of quantum steering \cite{Verstraete2002}, although there it
is written in terms of the Choi, rather than Jamio{\l}kowski,
isomorphic map so there is a transpose in the expression.  We develop
this application in \S\ref{Cond:Remote}.  To our knowledge, the
connection to Heisenberg dynamics has not previously been recognized.

\subsection{Quantum Instruments as Causal Belief Propagation}

\label{CCS:QI}

Describing a measurement by a POVM is adequate for determining the
outcome probabilities of the measurement via the Born rule.  However,
one might also wish to describe how the state of a post-measurement
region is correlated with the measurement result. In the conventional
formalism, the transformative aspect of a measurement is represented
by a quantum instrument.
\begin{definition}
  \label{def:quantuminstrument}
  Given quantum regions $A$ and $B$, and a classical variable $Y$, a
  \emph{quantum instrument} is a set $\{\mathcal{E}^{B|A}_y\}$ of CPT
  maps $\mathcal{E}^{B|A}_y:\Lin[A]\to\Lin[B]$ such that the operators
  \begin{equation}
    \label{eq:CCS:QIDef}
    E^A_y = \mathcal{E}^{\dagger}{}^{A|B}_y \left ( I_B \right )
  \end{equation}
  form a POVM.
\end{definition}

If $A$ and $B$ represent causally-related regions before and after a
measurement and $Y$ represents the measurement outcome, then a quantum
instrument can be used to determine the subnormalized state
$P(Y=y)\rho^B_y$ of $B$ when the outcome is known via
\begin{equation}
  \label{eq:CCS:QInstrument}
  P(Y=y) \rho^B_y = \mathcal{E}^{B|A}_y \left ( \rho_{A} \right ).
\end{equation}
It can also be used to compute the outcome probabilities for the
measurement by simply tracing over $B$ in
eq.~\eqref{eq:CCS:QInstrument} to obtain
\begin{equation}
  \label{eq:CCS:QInstrumentProb}
  P(Y=y) = \Tr[B]{\mathcal{E}^{B|A}_y \left ( \rho_{A} \right )}.
\end{equation}
Using eq.~\eqref{eq:CCS:QIDef}, this can be written as
\begin{align}
  P(Y=y) & = \Tr[B]{\mathcal{E}^{B|A}_y \left ( \rho_{A} \right )} \\
  & = \Tr[B]{I_B\mathcal{E}^{B|A}_y \left ( \rho_{A} \right )} \\
  & = \Tr[A]{\mathcal{E}^{\dagger}{}^{A|B}_y \left ( I_B \right )
    \rho_A} \\
  & = \Tr[A]{E^A_y \rho_A},
\end{align}
which is just the Born rule with respect to the POVM defined by the
instrument.

Whilst an instrument defines a unique POVM, each POVM corresponds to
more than one quantum instrument.  When performing a measurement of a
particular POVM, \emph{any} of the quantum instruments that correspond
to it via eq.~\eqref{eq:CCS:QIDef} may be obtained, depending on how
the measurement is implemented.  Conversely, the set of instruments
corresponding to a given POVM exhaust the possible post-measurement
transformations.  This includes, for example, the situation in which
the system being measured is absorbed by the detector, which
corresponds to choosing the trivial Hilbert space for $B$,
i.e. $\Hilb[B] = \mathbb{C}$.

Despite the freedom in choosing a quantum instrument, certain kinds of
instrument are usually considered particularly important.  For
measurements associated with a \emph{projector}-valued measure
$\{\Pi^A_y\}$, the possible quantum instruments include the
L{\"u}ders-von Neumann projection postulate as a special case by
taking $\Hilb[B]$ to have the same dimension as $\Hilb[A]$ and
$\mathcal{E}^{B|A}_y (\rho_A) = \mathcal{I}_{B|A} \left (\Pi^A_y
  \rho_A \Pi^A_y \right)$, where $\mathcal{I}_{B|A}$ is an isometry
between $\Hilb[A]$ and $\Hilb[B]$.  For general POVMs the rule
$\mathcal{E}^{B|A}_y (\rho_A) = \mathcal{I}_{B|A} \left ( \left (
    E^A_y \right )^{\frac{1}{2}} \rho_A \left ( E^A_y \right
  )^{\frac{1}{2}} \right )$, which is sometimes taken as a natural
generalization of the projection postulate, is also included as a
special case.

A general measurement procedure, where there is an initial quantum
region, a classical outcome of the measurement, and a quantum region
after the measurement, is depicted in fig.~\ref{fig:CCS:QInstrument}.
Note that the final quantum region $B$ may depend causally both on the
initial quantum region $A$ and on the outcome $Y$.

In order to understand how quantum instruments are represented in the
conditional states formalism, it is helpful to first look at the
classical analogue.  This is a scenario wherein a classical measurement
is made upon a classical system, which persists after the measurement,
and in general the measurement procedure is permitted to disturb the
state of the system.  The variable $R$ describes the system before the
measurement and the variable $S$ describes the system after the
measurement.  This is in line with the quantum treatment, in which
distinct regions are given distinct labels.  The outcome of the
measurement is again denoted by $Y$. This scenario is depicted in
fig.~\ref{fig:CCS:CInstrument}.  The measurement is then described by
a conditional probability $P(Y,S|R)$.  This can equivalently be
thought of as a set of subnormalized conditional probabilities for $S$
given $R$, $\{ P(Y=y,S|R) \}$, one for each outcome $y$, which is the
analogue of a quantum instrument. The joint distribution over $Y$ and
$S$, when the input distribution is $P(R)$, is then given by
\begin{equation}
  \label{eq:CCS:CInstrument}
  P(Y,S)=\sum_R P(Y,S|R) P(R).
\end{equation}
Furthermore, the set of response functions $\{P(Y=y|R)\}$ associated
with such a measurement is easily computed from $P(Y,S|R)$ by
marginalizing over $S$,
\begin{equation}
  \label{eq:CCS:CInstrumentRespfn}
  P(Y|R)=\sum_S P(Y,S|R).
\end{equation}

In the conditional states formalism, eqs.~\eqref{eq:CCS:QInstrument}
and \eqref{eq:CCS:QIDef} are replaced by straightforward analogise of
eqs.~\eqref{eq:CCS:CInstrument} and \eqref{eq:CCS:CInstrumentRespfn}
for a causal hybrid state $\varrho_{YB|A}$ of a classical variable $Y$
and quantum region $B$, conditioned on a quantum region $A$.

\begin{figure}[htb]
  \begin{minipage}{7.4cm}
    \raggedright
    \subfigure[]{
      \includegraphics[scale=0.4]{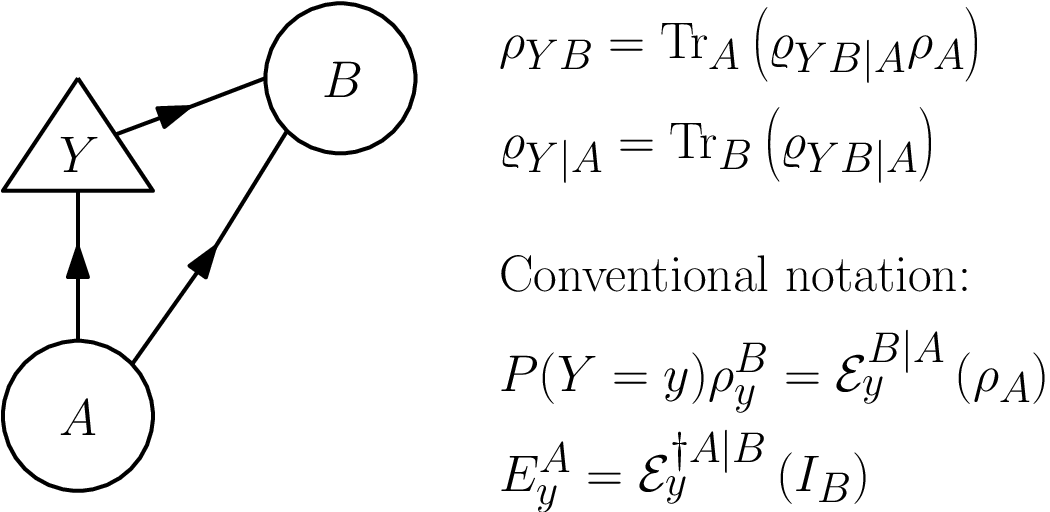}
      \label{fig:CCS:QInstrument}
    }
    \subfigure[]{
      \includegraphics[scale=0.4]{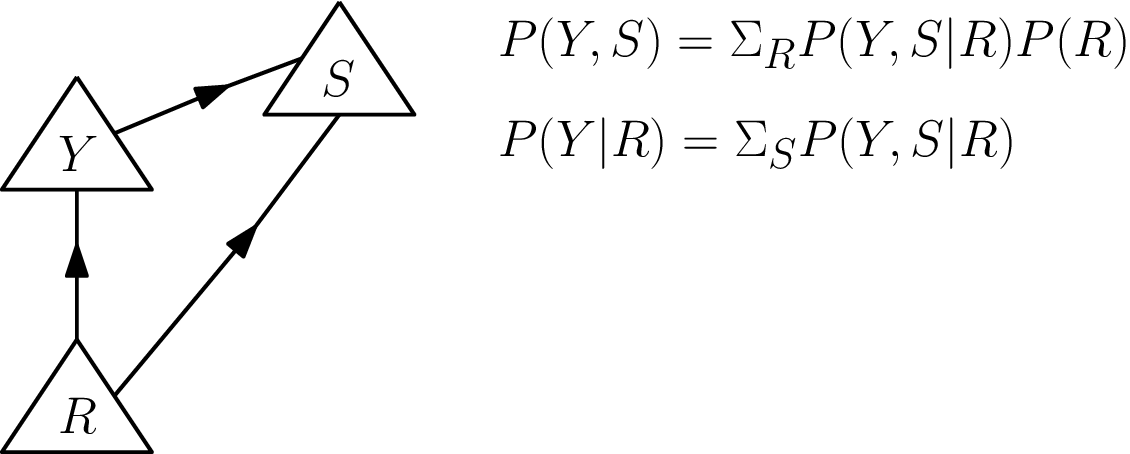}
      \label{fig:CCS:CInstrument}
    }
  \end{minipage}
  \caption{\label{fig:CCS:Instrument} Measurements and their
    associated state-update rules.  \subref{fig:CCS:QInstrument} A
    quantum instrument, representing how the state of a quantum
    persistent system changes after a measurement.
    \subref{fig:CCS:CInstrument} The classical analogue of a quantum
    instrument, representing how the state of a classical persistent
    system changes after a general (possibly disturbing) measurement.}
\end{figure}

\begin{theorem}
  \label{thm:CCS:Instrument}
  Let $\varrho_{YB|A}$ be an operator on $\Hilb[YAB]$ of the form
  \begin{equation}
    \varrho_{YB|A} \equiv \sum_y \Ket{y}\Bra{y}_{Y} \otimes \varrho_{Y=y,B|A},
  \end{equation}
  where
  \begin{equation}
    \varrho_{Y=y,B|A}\equiv
    \mathcal{E}_y^{B|A'} \left ( \sum_{j,k}
      \Ket{j}\Bra{k}_A \otimes \Ket{k}\Bra{j}_{A'} \right )
  \end{equation}
  are the Jamio{\l}kowski isomorphic operators to maps
  $\mathcal{E}^{B|A}_y: \Lin[A] \to \Lin[B]$.  Then $\varrho_{YB|A}$
  is a causal conditional state if and only if $\{
  \mathcal{E}^{B|A}_y \}$ is a quantum instrument.
\end{theorem}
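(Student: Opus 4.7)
The plan is to realize $\varrho_{YB|A}$ as the Jamio\l{}kowski-isomorphic operator of a single map from $A$ to the joint region $YB$, and then reduce the theorem to an instance of Theorem~\ref{thm:CCS:Jamiol}. Specifically, I would define the linear map $\mathcal{E}^{YB|A}:\Lin[A]\to\Lin[YB]$ by
\begin{equation}
\mathcal{E}^{YB|A}(\cdot)\;=\;\sum_{y}\Ket{y}\Bra{y}_{Y}\otimes\mathcal{E}^{B|A}_{y}(\cdot),
\end{equation}
and verify by direct substitution into the Jamio\l{}kowski formula \eqref{eq:ACS:Jamiol} that its isomorphic operator is exactly the $\varrho_{YB|A}$ written in the statement. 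This is immediate from the linearity of the isomorphism, together with the fact that the $Y$-block commutes out of the partial action on $A'$.

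Given this identification, Theorem~\ref{thm:CCS:Jamiol} tells us that $\varrho_{YB|A}$ is a causal conditional state if and only if $\mathcal{E}^{YB|A}$ is CPT. So the remaining task is to show that $\mathcal{E}^{YB|A}$ is CPT if and only if $\{\mathcal{E}^{B|A}_{y}\}$ satisfies Definition~\ref{def:quantuminstrument}. For complete positivity, I would note that because the $Y$ register is diagonal in the preferred basis, the Choi operator of $\mathcal{E}^{YB|A}$ is block-diagonal in $y$ with $y$-th block equal to the Choi operator of $\mathcal{E}^{B|A}_{y}$; positivity of the whole is therefore equivalent to positivity of each block, i.e.\ to complete positivity of each $\mathcal{E}^{B|A}_{y}$. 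For trace preservation, tracing out $YB$ gives $\operatorname{Tr}_{YB}[\mathcal{E}^{YB|A}(\rho_{A})]=\operatorname{Tr}_{B}[(\sum_{y}\mathcal{E}^{B|A}_{y})(\rho_{A})]$, so $\mathcal{E}^{YB|A}$ is TP iff $\sum_{y}\mathcal{E}^{B|A}_{y}$ is TP, which by dualization is equivalent to $\sum_{y}(\mathcal{E}^{\dagger})^{A|B}_{y}(I_{B})=\sum_{y}E^{A}_{y}=I_{A}$. Combined with the positivity of the $E^{A}_{y}$ (which follows from complete positivity of each $\mathcal{E}^{B|A}_{y}$), this is precisely the POVM condition in Definition~\ref{def:quantuminstrument}.

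The main subtlety — and the only place where care is needed — is the matching of conditions. One must recognize that the CP condition on each $\mathcal{E}^{B|A}_{y}$ corresponds to the block-wise positivity of the Choi operator, whereas trace preservation is a \emph{joint} condition on the \emph{sum}, not on the individual maps; this is exactly what the POVM normalization $\sum_{y}E^{A}_{y}=I_{A}$ encodes through the duality between trace preservation of a map and unitality of its dual. Once this correspondence is stated cleanly, the two directions of the iff follow at once, and the theorem reduces entirely to Theorem~\ref{thm:CCS:Jamiol} applied to the lifted map $\mathcal{E}^{YB|A}$.
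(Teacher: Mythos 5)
Your proof is correct, and since the paper leaves this proof to the reader with only the remark that it parallels Theorems~\ref{thm:CCS:Jamiol}, \ref{thm:CCS:States} and \ref{thm:CCS:POVM}, your reduction --- packaging the instrument into the single map $\mathcal{E}^{YB|A}(\cdot)=\sum_y\Ket{y}\Bra{y}_Y\otimes\mathcal{E}^{B|A}_y(\cdot)$, invoking Theorem~\ref{thm:CCS:Jamiol} for the composite output region $YB$, and then matching complete positivity to block-wise positivity of the Choi operator and trace preservation of the sum to the POVM normalization via duality --- is precisely the intended argument. The one caveat worth flagging is that Definition~\ref{def:quantuminstrument} as printed says ``CPT maps'' where it must mean ``CP maps'' (individually trace-preserving elements would force each $E^A_y=I_A$, which cannot form a POVM for more than one outcome); you have correctly worked with the intended reading, in which complete positivity is a condition on each element while trace preservation is a joint condition on the sum.
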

The proof is similar to those of theorems~\ref{thm:CCS:Jamiol},
\ref{thm:CCS:States} and \ref{thm:CCS:POVM} and is left to the reader.

Theorem~\ref{thm:CCS:Instrument} allows the transformative aspect of a
quantum measurement to be represented by conditional states as
follows.
\begin{proposition}
  \label{prop:CCS:Instrument}
  Let $\varrho_{YB|A}$ be the causal conditional state associated with
  the instrument $\{ \mathcal{E}^{B|A}_y \}$.  Then, when the
  measurement corresponding to this instrument is made with input
  state $\rho_A$, the state-update rule in conventional notation is
  given by
  \begin{equation}
    \label{eq:CCS:QInstrumentConv}
    P(Y=y) \rho^B_y = \mathcal{E}^{B|A}_y \left ( \rho_{A} \right ).
  \end{equation}
  In analogy with the classical expression in
  eq.~\eqref{eq:CCS:CInstrument}, this can be expressed in the
  conditional states framework as
  \begin{equation}
    \label{eq:CCS:QInstrumentCS}
    \rho_{YB} = \Tr[A]{\varrho_{YB|A} \rho_A},
  \end{equation}
  where $\rho_{YB} = \sum_y P(Y=y) \Ket{y}\Bra{y}_Y \otimes \rho^B_y$.
  Furthermore, the conventional expression for the relation between a
  POVM and a quantum instrument,
  \begin{equation}
    E^A_y = (\mathcal{E}^{\dagger}_y)^{A|B} \left ( I_B \right ),
  \end{equation}
  can be expressed simply as
  \begin{equation}
    \label{eq:CCS:QInstrumentPOVM}
    \varrho_{Y|A} =  \Tr[B]{\varrho_{YB|A}},
  \end{equation}
  in analogy with the classical expression in
  eq.~\eqref{eq:CCS:CInstrumentRespfn}.
\end{proposition}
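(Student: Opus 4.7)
The plan is to reduce both equations to results already established in the excerpt, using the hybrid block-diagonal structure of $\varrho_{YB|A}$ from theorem~\ref{thm:CCS:Instrument} to split the problem into one claim per outcome $y$.

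For eq.~\eqref{eq:CCS:QInstrumentCS}, I would first substitute the hybrid decomposition $\varrho_{YB|A} = \sum_y \Ket{y}\Bra{y}_Y \otimes \varrho_{Y=y,B|A}$ into the right-hand side. Since $\rho_A$ acts only on $\Hilb[A]$, the partial trace over $A$ passes through the classical component to give
\begin{equation}
  \Tr[A]{\varrho_{YB|A}\rho_A} = \sum_y \Ket{y}\Bra{y}_Y \otimes \Tr[A]{\varrho_{Y=y,B|A} \rho_A}.
\end{equation}
Because $\varrho_{Y=y,B|A}$ is by hypothesis the Jamio\l kowski-isomorphic operator to $\mathcal{E}^{B|A}_y$, proposition~\ref{prop:CCS:Jamiol} (equivalently, eq.~\eqref{eq:CCS:reverseJamiol}) gives $\Tr[A]{\varrho_{Y=y,B|A}\rho_A} = \mathcal{E}^{B|A}_y(\rho_A)$, which by eq.~\eqref{eq:CCS:QInstrumentConv} equals $P(Y=y)\rho^B_y$. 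Summing over $y$ recovers $\rho_{YB}$.

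For eq.~\eqref{eq:CCS:QInstrumentPOVM}, I would again exploit the hybrid decomposition to write
\begin{equation}
  \Tr[B]{\varrho_{YB|A}} = \sum_y \Ket{y}\Bra{y}_Y \otimes \Tr[B]{\varrho_{Y=y,B|A}}.
\end{equation}
Here I would invoke theorem~\ref{thm:CCS:Duals} with $N_B = I_B$, which gives $\Tr[B]{\varrho_{Y=y,B|A}} = \mathcal{E}^{\dagger\,A|B}_y(I_B)$; by the definition of a quantum instrument (eq.~\eqref{eq:CCS:QIDef}) this is precisely $E^A_y$. Finally, theorem~\ref{thm:CCS:POVM} identifies $\sum_y \Ket{y}\Bra{y}_Y \otimes E^A_y$ as the hybrid causal conditional state $\varrho_{Y|A}$, closing the argument.

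There is no real obstacle here: the proposition is essentially the assertion that the two parts of a quantum instrument (the Schr\"odinger update rule and the associated POVM) emerge, respectively, as belief propagation through $\varrho_{YB|A}$ and as marginalization of $\varrho_{YB|A}$ over $B$. The only mild bookkeeping is to keep the classical index $y$ outside the partial traces and to make sure one appeals to the dual-map form of theorem~\ref{thm:CCS:Duals} (rather than the forward Jamio\l kowski formula) when evaluating $\Tr[B]{\varrho_{Y=y,B|A}}$, since this is what produces the adjoint evaluated at $I_B$.
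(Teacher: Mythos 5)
Your proposal is correct and follows essentially the same route as the paper, which proves eq.~\eqref{eq:CCS:QInstrumentCS} by applying proposition~\ref{prop:CCS:Jamiol} to eq.~\eqref{eq:CCS:QInstrumentConv} for each value of $Y$, and eq.~\eqref{eq:CCS:QInstrumentPOVM} by applying theorem~\ref{thm:CCS:Duals} to each element of the instrument. Your version merely makes the component-wise bookkeeping through the hybrid decomposition explicit.
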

The proof of eq.~\eqref{eq:CCS:QInstrumentCS} consists of applying
proposition~\ref{prop:CCS:Jamiol}, in particular
eq.~\eqref{eq:CCS:Evolve}, to eq.~\eqref{eq:CCS:QInstrumentConv} for
every value of $Y$.  Eq.~(\ref{eq:CCS:QInstrumentPOVM}) follows from
applying theorem~\ref{thm:CCS:Duals} to each element of the
instrument.

Note that, from the perspective of the conditional states framework,
the fact that there are many quantum instruments consistent with a
given POVM is no more surprising than the fact that in classical
probability theory there are many joint distributions consistent with
a given marginal distribution.

Finally, note that for a quantum instrument, the map
$\mathcal{E}_{B|A}(\rho_A) = \sum_y \mathcal{E}^{B|A}_y(\rho_A)$ is
CPT and represents the \emph{non-selective} state-update rule, i.e.\
the one that you should apply if you know that the measurement has
been made but do not know its outcome.  In the conditional states
framework, if you know that a measurement associated with the causal
conditional state $\varrho_{YB|A}$ has been performed, but you do not
know the outcome, then you simply marginalize over $Y$ to obtain the
causal conditional state $\varrho_{B|A}$.  Quantum belief propagation
from $A$ to $B$ using $\varrho_{B|A}$ is the non-selective
state-update rule.

Table~\ref{tbl:CCS:Conventional} provides a summary of how dynamics,
ensemble preparations and measurements are represented in the
conditional states formalism as compared to the conventional
formalism.

\section{Quantum Bayes' Theorem}
\label{Bayes}

This section develops a quantum generalization of Bayes' theorem that
relates the conditional states $\rho_{B|A}$ ($\varrho_{B|A}$) and
$\rho_{A|B}$ ($\varrho_{A|B}$).  Formally, the quantum Bayes' theorem
is the same for acausal and causal conditional states, so this
represents a success in our project to develop a causally neutral
theory of quantum Bayesian inference.  In \S\ref{Bayes:Gen}, the
quantum Bayes' theorem is introduced for two quantum regions.  When
written in terms of conventional notation, it reproduces the
Barnum-Knill approximate error correction map \cite{Barnum2002}.
\S\ref{Bayes:Hybrid} specializes to the hybrid case, which provides a
rule for relating sets of states to POVMs.  In conventional notation,
this reproduces the definition of the ``pretty-good'' measurement
\cite{Hausladen1994, Belavkin1975, Belavkin1975a} and a quantum analogue
of Bayes' theorem previously advocated by Fuchs \cite{Fuchs2003a}.  As
an application of the quantum Bayes' theorem, we develop a
retrodictive formalism for quantum theory in \S\ref{Bayes:Retro} in
which states are evolved backwards in time.  This demonstrates that
the conditional states formalism is causally neutral with respect to
the direction of time.  Finally, in \S\ref{Bayes:Remote}, the acausal
analogue of the symmetry between prediction and retrodiction is
discussed in the context of remote measurement.

\subsection{General Quantum Bayes' Theorem}

\label{Bayes:Gen}

Recall that the classical Bayes' theorem is
\begin{equation}
  P(R|S)=\frac{P(S|R)P(R)}{P(S)},
\end{equation}
which is derived by noting two expressions for the joint probability
in terms of conditionals and marginals
\begin{align}
  P(R,S)  &  =P(R|S)P(S) \label{eq:BR1}\\
  & = P(S|R)P(R). \label{eq:BR2}
\end{align}

Quantum conditional states can be used to derive a quantum analogue of
Bayes' theorem.  For acausal conditional states, the two analogous
expressions to eqs.~\eqref{eq:BR1} and \eqref{eq:BR2} are
\begin{align}
  \rho_{AB}  &  = \rho_{A|B} \Sprod \rho_B \\
  & =\rho_{B|A} \Sprod \rho_A.
\end{align}
Combining these gives
\begin{equation}
  \label{eq:Bayes:Bayes}
  \rho_{A|B}= \rho_{B|A} \Sprod \left ( \rho_A \rho_B^{-1} \right ),
\end{equation}
which is a quantum analogue of Bayes' theorem for acausal conditional
states.

Classically, the distribution $P(S)$ that appears in the denominator
of Bayes' theorem is usually computed via belief propagation as $P(S)
= \sum_R P(S|R)P(R)$.  This gives the alternative form of Bayes'
theorem
\begin{equation}
  \label{eq:LaplacianBayes}
  P(R|S)=\frac{P(S|R)P(R)}{\sum_R P(S|R)P(R)}.
\end{equation}
Similarly, noting that $\rho_B = \Tr[A]{\rho_{B|A} \Sprod \rho_A} =
\Tr[A]{\rho_{B|A} \rho_A}$, the quantum Bayes' theorem for acausal
conditionals can be written as
\begin{equation}
  \label{eq:Bayes:BayesLH}
  \rho_{A|B}= \rho_{B|A} \Sprod \left \{ \rho_A \left [ \Tr[A]{\rho_{B|A}
        \rho_A}\right ]^{-1} \right \}.
\end{equation}

Now consider the case of two causally-related regions.  Suppose that a
region $A$, described by the state $\rho_A$, is mapped to $B$ by a CPT
map $\mathcal{E}_{B|A}$ that is Jamio{\l}kowski-isomorphic to the
causal conditional state $\varrho_{B|A}$.  Recall from \S\ref{CCS:TJS}
that the conditional state and input state can be used to define a
causal joint state $\varrho_{AB} = \varrho_{B|A} \Sprod \rho_A$.  Now,
we can try to define a new causal conditional state $\varrho_{A|B}$
via an analogous decomposition of the causal joint state, namely,
$\varrho_{AB} = \varrho_{A|B} \Sprod \rho_B$, where $\rho_B =
\Tr[A]{\varrho_{B|A} \rho_A}$ is the output state of the channel.
Equating the two expressions for the joint state $\varrho_{AB}$, we
obtain an expression for $\varrho_{A|B}$, which can be regarded as the
causal version of the quantum Bayes' theorem,
\begin{equation}
  \label{eq:Bayes:Temporal1}
  \varrho_{A|B} = \varrho_{B|A} \Sprod \left ( \rho_A \rho_B^{-1} \right ),
\end{equation}
and which can also be written as
\begin{equation}
 \varrho_{A|B} = \varrho_{B|A} \Sprod \left \{ \rho_A \left [ \Tr[A]{\varrho_{B|A}
      \rho_A} \right ]^{-1} \right \}. \label{eq:Bayes:Temporal2}
\end{equation}

In order for this to make sense, it must be checked that
$\varrho_{A|B}$ is indeed a valid causal conditional state.  Taking
the partial transpose over $B$ of eq.~(\ref{eq:Bayes:Temporal1}), we
have $\varrho_{A|B}^{T_B} = \varrho_{B|A}^{T_B} \Sprod \left ( \rho_A
  (\rho_B^{T_B})^{-1} \right )$.  Given that $\rho_B$ is a valid state
(positive and normalized), and given that the set of such states is
mapped to itself by the transpose, $\rho_B^{T_B}$ is also a valid
state.  Furthermore, given that $\varrho_{B|A}$ is a valid causal
conditional state, and the fact that the set of such states are mapped
to the set of valid acausal conditional states by the partial
transpose, it follows that $\varrho_{B|A}^{T_B}$ is a valid acausal
conditional state. But then, by the acausal quantum Bayes' theorem
given in eq.~\eqref{eq:Bayes:Bayes}, $\varrho_{A|B}^{T_B}$ is a valid
acausal conditional state, which implies that $\varrho_{B|A}$ is a
valid causal conditional state.

It is instructive to see how the causal and acausal versions of Bayes'
theorem appear in conventional notation.

For the causal version, suppose that the causal conditional state
$\varrho_{B|A}$ is associated, via the Jamio{\l}kowski-isomorphism,
with a quantum channel $\mathcal{E}_{B|A}$ and its Bayesian inversion
$\varrho_{A|B}$ is associated with a quantum channel
$\mathcal{F}_{A|B}$.  Then, eq.~\eqref{eq:Bayes:Temporal1} is
equivalent to
\begin{equation}
  \label{eq:Bayesianinversionchannel}
  \mathcal{F}_{A|B}(\cdot) = \rho_A^{\frac{1}{2}}\left \{
    \mathcal{E}^{\dag}_{A|B}\left [\rho_B^{-\frac{1}{2}} (\cdot)
      \rho_B^{-\frac{1}{2}}\right ] \right \} \rho_A^{\frac{1}{2}}.
\end{equation}
where
\begin{equation}
  \rho_B = \mathcal{E}_{B|A}(\rho_A).
\end{equation}
The converse of this relation, wherein $\mathcal{E}_{B|A}$ is
expressed in terms of $\mathcal{F}_{A|B}$, is obtained by simply
exchanging $A$ and $B$ as well as $\mathcal{E}$ and $\mathcal{F}$.

The proof that eq.~\eqref{eq:Bayes:Temporal1} translates into
eq.~\eqref{eq:Bayesianinversionchannel} is straightforward.  From the
Jamio{\l}kowski isomorphism (theorem~\ref{thm:ACS:Jamiol}),
$\mathcal{F}_{A|B}(\cdot) = \Tr[B]{\varrho_{A|B} (\cdot)}$, which
implies that
\begin{align}
  \mathcal{F}_{A|B}(\cdot) & = \Tr[B]{\rho_A^{\frac{1}{2}}
    \rho_B^{-\frac{1}{2}} \varrho_{B|A} \rho_B^{-\frac{1}{2}}
    \rho_A^{\frac{1}{2}} (\cdot)} \\
  & = \rho_A^{\frac{1}{2}} \Tr[B]{\rho_B^{-\frac{1}{2}} (\cdot)
    \rho_B^{-\frac{1}{2}} \varrho_{B|A}} \rho_A^{\frac{1}{2}},
\end{align}
where the first step follows from eq.~\eqref{eq:Bayes:Temporal1} and
expanding the $\Sprod$-product, and the second step uses the cyclic
property of the trace. Eq.~\eqref{eq:Bayesianinversionchannel} then
follows from the representation of dual maps in terms of conditional
states given in theorem~\ref{thm:CCS:Duals}.

The map $\mathcal{F}_{A|B}$ is recognizable as the Barnum-Knill
recovery map for the channel $\mathcal{E}_{B|A}$ \cite{Barnum2002}.
This map is known to achieve near-optimal quantum error correction in
situations where the input state and channel are known.  To our
knowledge, its connection with Bayesian inversion has not previously
been noted.  It suggests that the best way of thinking about
$\mathcal{F}_{A|B}$ is not as an error correction map, but rather as a
means for accurately capturing your beliefs about region $A$ given
your beliefs about region $B$.

A similar result holds for the acausal case. Suppose that the linear
map that is Jamio{\l}kowski-isomorphic to the acausal conditional
state $\rho_{B|A}$ is $\mathfrak{E}_{B|A}$ and the one associated to
its Bayesian inversion $\rho_{A|B}$ is $\mathfrak{F}_{A|B}$ (recall
that only $\mathfrak{E}_{B|A} \circ T_A$ and $\mathfrak{F}_{A|B} \circ
T_A$ are CPT maps).  Following the same reasoning used above,
eq.~\eqref{eq:Bayes:Bayes} can be rewritten as
\begin{equation}
  \mathfrak{F}_{A|B}(\cdot) = \rho_A^{\frac{1}{2}}\left \{
    \mathfrak{E}^{\dag}_{A|B}\left [\rho_B^{-\frac{1}{2}} (\cdot)
      \rho_B^{-\frac{1}{2}}\right ] \right \} \rho_A^{\frac{1}{2}},
\end{equation}
where
\begin{equation}
  \rho_B = \mathfrak{E}_{B|A}(\rho_A).
\end{equation}

\subsection{Bayes' Theorem for Quantum-Classical Hybrids}

\label{Bayes:Hybrid}

For quantum-classical hybrids, there are two versions of Bayes'
theorem, depending on whether it is the conditioned region or the
conditioning region that is classical. Recall that the mathematical
form of hybrid conditional states does not depend on whether they are
causal or acausal, and $\sigma$ is the notation used for results that
do not depend on the causal interpretation.  The two versions of
Bayes' theorem are then:
\begin{equation}
  \label{eq:Bayes:HybridLH}
  \sigma_{X|A} = \sigma_{A|X} \Sprod \left ( \rho_X
    \rho_A^{-1} \right ),
\end{equation}
and
\begin{equation}
  \label{eq:Bayes:HybridRH}
  \sigma_{A|X} = \sigma_{X|A} \Sprod \left ( \rho_A \rho_X^{-1} \right ),
\end{equation}
where $\rho_A = \Tr[X]{\sigma_{A|X} \rho_X}$ and $\rho_X =
\Tr[A]{\sigma_{X|A} \rho_A}$.

A hybrid joint state $\sigma_{XA}$ may be decomposed into a hybrid
conditional state and a reduced state via eq.~\eqref{eq:ACS:JointStar}
in two distinct ways: either in terms of a classical reduced state and
a conditional state that is conditioned on the classical part,
\begin{equation}
  \label{eq:Bayes:HybridStates}
  \sigma_{XA} = \sigma_{A|X} \Sprod \rho_X,
\end{equation}
or in terms of a quantum reduced state and a conditional state that is
conditioned on the quantum part,
\begin{equation}
  \label{eq:Bayes:HybridPOVM}
  \sigma_{XA} = \sigma_{X|A} \Sprod \rho_A.
\end{equation}
The hybrid Bayes' theorems of eqs.~\eqref{eq:Bayes:HybridLH} and
\eqref{eq:Bayes:HybridRH} give the rules for converting between these
two decompositions.

To see how these two decompositions appear in conventional notation,
recall from theorem~\ref{thm:CCS:States} that a general hybrid
conditional state $\sigma_{A|X}$ is of the form
\begin{equation}
  \sigma_{A|X} = \sum_x \Ket{x}\Bra{x}_X \otimes \rho_x^A,
\end{equation}
where each $\rho_x^A$ is a normalized density operator on $\Hilb[A]$,
and from theorem~\ref{thm:CCS:POVM} that $\sigma_{X|A}$ is of the form
\begin{equation}
  \sigma_{X|A} = \sum_x \Ket{x}\Bra{x}_X \otimes E_x^A,
\end{equation}
where $\{E_x^A\}$ is a POVM on $A$.  Finally, recall that the
classical state $\rho_X$ is of the form $\rho_X = \sum_x P(X=x)
\Ket{x}\Bra{x}_X$ where $P(X)$ is a classical probability
distribution.

Eq.~\eqref{eq:Bayes:HybridStates} therefore gives a decomposition of a
joint state in terms of a set of states and a probability distribution
via
\begin{equation}
  \sigma_{XA} = \sum_x P(X=x) \Ket{x}\Bra{x}_X \otimes \rho^A_x,
\end{equation}
and eq.~\eqref{eq:Bayes:HybridPOVM} gives a decomposition in terms of
a POVM and a state for $A$ via
\begin{equation}
  \sigma_{XA} = \sum_x \Ket{x}\Bra{x}_X \otimes \rho_A^{\frac{1}{2}} E^A_x
  \rho_A^{\frac{1}{2}}
\end{equation}

In terms of components, the Bayes' theorem of
eq.~\eqref{eq:Bayes:HybridLH} is a rule for determining a POVM from a
probability distribution and a set of states, whilst
eq.~\eqref{eq:Bayes:HybridRH} is a rule for determining a set of
states from a POVM and a state on $A$.  These rules are:
\begin{equation}
  \label{eq:Bayes:HybridLHCon}
  \rho_x^A =  \frac{\rho_A^{\frac{1}{2}} E_x^A
    \rho_A^{\frac{1}{2}}}{\Tr[A]{E_x^A \rho_A}}.
\end{equation}
and
\begin{equation}
  \label{eq:Bayes:HybridRHCon}
  E_x^A =  P(X=x) \rho_A^{-\frac{1}{2}} \rho_x^A
  \rho_A^{-\frac{1}{2}},
\end{equation}
where $\rho_A = \sum_x P(X=x) \rho_x^A$.

These rules have appeared numerous times in the literature,
e.g. \cite{Hughston1993, Leifer2006, Leifer2007}.  In the context of
distinguishing the states in an ensemble, the POVM defined by
eq.~\eqref{eq:Bayes:HybridRHCon} is known as the ``pretty good''
measurement \cite{Hausladen1994, Belavkin1975, Belavkin1975a}.
Eq.~\eqref{eq:Bayes:HybridLHCon} is a rule previously advocated by
Fuchs as a quantum analogue of Bayes' theorem \cite{Fuchs2003a}.  The
fact that these rules are all special cases of a more general quantum
Bayes' theorem goes some way to explaining their utility.

\subsection{Retrodiction and Time Symmetry}

\label{Bayes:Retro}

As an application of the quantum Bayes' theorem, we use it to develop
a retrodictive formalism for quantum theory, in which states are
propagated backwards in time from late regions to early regions.  This
is operationally equivalent to the usual predictive formalism, in
which states are propagated forward in time from early regions to late
regions. The retrodictive description is particularly useful if you
acquire new information about the late region and wish to update your
beliefs about the early region, for instance, when you learn about the
output of a noisy channel and wish to make inferences about its input.
This situation will be considered in \S\ref{Cond:Exa}.

Barnett \textit{et al.} have previously proposed a formalism for
retrodictive inference in quantum theory\cite{Barnett2000, Pegg2002,
  Pegg2002a}.  For unbiased sources --- for which the ensemble average
of the prepared states is the maximally mixed state --- their
formalism is identical to the one presented here, and the quantum
Bayes' theorem provides it with an intuitive derivation.  For biased
sources, their formalism differs from ours. The one we propose has the
advantage that it can be derived as a special case of our general
formalism for quantum Bayesian inference and thereby retains a closer
analogy with classical Bayesian inference.

As emphasized by the de Finetti quote in the introduction, the rules
for making classical probabilistic inferences about the past are the
same as those for making inferences about the future.  By analogy, in
the conditional states formalism, we would expect to be able to
propagate quantum states from future regions to past regions via the
same rules used to propagate them from past regions to future regions.

If the state $\rho_A$ of an early region is mapped to the state
$\rho_B$ of a later region by a CPT map $\mathcal{E}_{B|A}$ then
\begin{equation}
  \rho_{B} = \Tr[A]{\varrho_{B|A} \rho_A},
\end{equation}
where $\varrho_{B|A}$ is the Jamio{\l}kowski isomorphic causal
conditional state to $\mathcal{E}_{B|A}$.  By construction, the causal
conditional state $\varrho_{A|B}$ defined by Bayes' theorem in
eq.~\eqref{eq:Bayes:Temporal1} satisfies $\varrho_{AB} = \varrho_{B|A}
\Sprod \rho_A = \varrho_{A|B} \Sprod \rho_B$, and this causal joint
state has $\rho_A$ and $\rho_B$ as its marginals, so we have
\begin{equation}
  \rho_A = \Tr[B]{\varrho_{A|B} \rho_B}
\end{equation}

In conventional notation, this is equivalent to
\begin{equation}
  \rho_A = \mathcal{E}_{A|B}^{\text{retr}} \left ( \rho_B \right ),
\end{equation}
where
\begin{equation}
  \mathcal{E}_{A|B}^{\text{retr}} \left ( \cdot \right ) \equiv
  \rho_A^{\frac{1}{2}}\mathcal{E}^{\dagger}_{A|B} \left ( \rho_B^{-\frac{1}{2}} \left (
      \cdot \right ) \rho_B^{-\frac{1}{2}}\right ) \rho_A^{\frac{1}{2}}.
\end{equation}
is the map that is Jamio{\l}kowski isomorphic to $\varrho_{A|B}$ and
the superscript $^{\text{retr}}$ is used to indicate that this is a
retrodictive map that propagates states from future to past regions.

For comparison with the retrodictive formalism of \cite{Barnett2000},
consider a simple prepare-and-measure experiment, as depicted in
fig.~\ref{fig:Bayes:PrepMeas}.

\begin{figure}[htb]
  \includegraphics[scale=0.4]{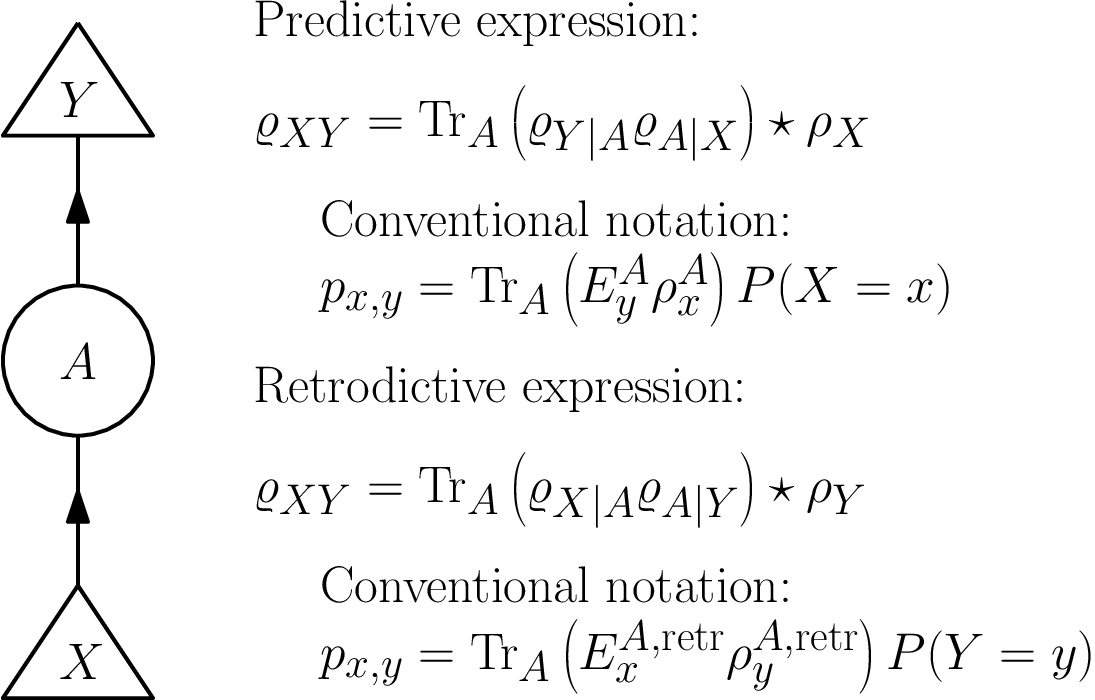}
  \caption{\label{fig:Bayes:PrepMeas}A prepare-and-measure
    experiment in which a preparation procedure is followed by a
    measurement.  We are interested in computing the joint
    probability distribution $P(X,Y)$ of the preparation variable
    and the measurement outcome.  For compactness, $p_{x,y} =
    P(X=x,Y=y)$ is used for the expressions in conventional
    notation.  }
\end{figure}

In the predictive description, the preparation procedure is
characterized by a probability distribution $P(X)$, which can
alternatively be represented by a diagonal state $\rho_X$, and by a
set of states $\rho^A_x$, which can alternatively be represented by a
causal conditional state $\varrho_{A|X}$.  The measurement is
characterized by a POVM $\left \{ E^A_y \right \}$, or alternatively a
causal conditional state $\varrho_{Y|A}$.  This is predictive because
the conditional states, $\varrho_{A|X}$ and $\varrho_{Y|A}$, are
conditioned on regions in their immediate past.

In order to calculate the joint distribution $P(X,Y)$, or equivalently
the causal joint state $\varrho_{XY}$, we need to use the belief
propagation formula for causal conditional states as given in
eq.~\eqref{eq:CCS:CondBP}.  This proceeds as follows:
\begin{enumerate}
\item Propagate the causal conditional state of $A$ given $X$ into the
  future to obtain $\varrho_{Y|X} = \Tr[A]{\varrho_{Y|A}
    \varrho_{A|X}}$.
\item Combine the causal conditional state for $Y$ given $X$ with the
  state for $X$ to obtain $\varrho_{XY} = \rho_{Y|X} \Sprod \rho_X$.
\end{enumerate}

Combining these steps, the predictive expression for $\varrho_{XY}$ is
\begin{equation}
  \label{eq:Bayes:PredictiveTemp}
  \varrho_{XY} = \Tr[A]{\varrho_{Y|A} \varrho_{A|X}} \Sprod \rho_X.
\end{equation}

In conventional notation, if the states $\left \{ \rho^A_x \right \}$
are the components of $\varrho_{A|X}$ and the elements of the POVM
$\{E^A_y\}$ are the components of $\varrho_{Y|A}$, then
eq.~\eqref{eq:Bayes:PredictiveTemp} is equivalent to
\begin{equation}
  P(X=x,Y=y) = \Tr[A]{E^A_y \rho^A_x} P(X=x).
\end{equation}

A retrodictive description of the same experiment can be given,
involving states of regions conditioned on regions in their immediate
future, i.e. $\rho_Y$, $\varrho_{A|Y}$ and $\varrho_{X|A}$.  These
correspond, respectively, to a probability distribution $P(Y)$, a set
of states $\left \{ \rho^{A,\text{retr}}_y \right \}$, and a POVM
$\left \{ E^{A,\text{retr}}_x \right \}$. Note that, in contrast to
the predictive description, the measurement is now being described by
a classical probability distribution and a set of states, which we
call retrodictive states, whilst the preparation is being described by
a POVM, which we call the retrodictive POVM.  The retrodictive
calculation of $\varrho_{XY}$ proceeds as follows:
\begin{enumerate}
\item Propagate the causal conditional state of $A$ given $Y$ into the
  past to obtain $\varrho_{X|Y} = \Tr[A]{\varrho_{X|A}
    \varrho_{A|Y}}$.
\item Combine the causal conditional state for $X$ given $Y$ with the
  state for $Y$ to obtain $\varrho_{XY} = \rho_{X|Y} \Sprod \rho_Y$.
\end{enumerate}

Combining these steps, the retrodictive expression for $\varrho_{XY}$
is
\begin{equation}
  \label{eq:Bayes:RetrodictiveTemp}
  \varrho_{XY} = \Tr[A]{\varrho_{X|A} \varrho_{A|Y}} \Sprod \rho_Y.
\end{equation}

In conventional notation, this is equivalent to
\begin{multline}
  P(X=x,Y=y) = \\ \Tr[A]{E^{A,\text{retr}}_x \rho^{A,\text{retr}}_y}
  P(Y=y).
\end{multline}

The retrodictive and predictive descriptions of the experiment are
related by the quantum Bayes' and quantum belief propagation via
\begin{align}
  \varrho_{X|A} & = \varrho_{A|X} \Sprod \left ( \rho_X \rho_A^{-1}
  \right ), \label{eq:Bayes:RetroPOVMCond} \\
  \text{where} \qquad \rho_A & = \Tr[X]{\rho_{A|X} \rho_X},
\end{align}
and
\begin{align}
  \varrho_{A|Y} & = \varrho_{Y|A} \Sprod \left ( \rho_A \rho_Y^{-1}
  \right ), \label{eq:Bayes:RetroStatesCond}\\
  \text{where} \qquad \rho_Y & = \Tr[A]{\rho_{Y|A} \rho_A}.
\end{align}

In conventional notation, these equations are equivalent to
\begin{align}
  & E^{A,\text{retr}}_x = P(X=x)
  \rho_A^{-\frac{1}{2}}\rho^A_x\rho_A^{-\frac{1}{2}}, \label{eq:Bayes:RetroPOVM}
  \\
  \text{where} \qquad & \rho_A  = \sum_x P(X=x)\rho^A_x,
\end{align}
and
\begin{align}
  & \rho^{A,\text{retr}}_y = \frac{\rho_A^{\frac{1}{2}}E^A_y
    \rho_A^{\frac{1}{2}}}{P(Y=y)}, \label{eq:Bayes:RetroStates} \\
  \text{where} \qquad & P(Y=y) = \Tr[B]{E^A_y \rho_A}.
\end{align}

Eqs.~\eqref{eq:Bayes:RetroPOVMCond} and
\eqref{eq:Bayes:RetroStatesCond} can be used to prove that the
predictive and retrodictive expressions for $\varrho_{XY}$ do indeed
give the same causal joint state.  Starting from
eq.~\eqref{eq:Bayes:RetrodictiveTemp}, we have
\begin{align}
  \varrho_{XY} = & \Tr[A]{\varrho_{X|A} \varrho_{A|Y}} \Sprod \rho_Y
  \\
  = & \text{Tr}_A \left \{ \left ( \varrho_{A|X} \Sprod \left [ \rho_X
        \rho_A^{-1}
      \right ] \right ) \right . \\
  & \left . \left ( \varrho_{Y|A} \Sprod \left [ \rho_A \rho_Y^{-1}
      \right ]
    \right ) \right \} \Sprod \rho_Y \\
  = & \text{Tr}_A \left ( \rho_Y^{\frac{1}{2}}
    \rho_X^{\frac{1}{2}}\rho_A^{-\frac{1}{2}}
    \varrho_{A|X}\rho_X^{\frac{1}{2}} \rho_A^{-\frac{1}{2}} \right ) \\
  & \left . \rho_A^{\frac{1}{2}} \rho_Y^{-\frac{1}{2}} \varrho_{Y|A}
    \rho_A^{\frac{1}{2}} \rho_Y^{-\frac{1}{2}} \rho_Y^{\frac{1}{2}}
  \right )
  \\
  = & \Tr[A]{\rho_Y^{\frac{1}{2}}
    \rho_X^{\frac{1}{2}}\rho_A^{-\frac{1}{2}}
    \varrho_{A|X}\rho_X^{\frac{1}{2}} \rho_Y^{-\frac{1}{2}}
    \varrho_{Y|A} \rho_A^{\frac{1}{2}}}
\end{align}
Since $\rho_Y$ commutes with $\rho_X$, $\rho_A$ and $\varrho_{A|X}$,
the $\rho_Y^{\frac{1}{2}}$ term can be moved forward to cancel with
$\rho_Y^{-\frac{1}{2}}$ term.  The $\rho_A^{-\frac{1}{2}}$ term can be
made to cancel with the last $\rho_A^{\frac{1}{2}}$ term via the
cyclic property of the trace.  This yields
\begin{align}
  \varrho_{XY} & = \Tr[A]{\rho_X^{\frac{1}{2}}
    \varrho_{A|X}\rho_X^{\frac{1}{2}} \varrho_{Y|A}} \\
  & = \rho_X^{\frac{1}{2}} \Tr[A]{\varrho_{A|X} \varrho_{Y|A}}
  \rho_X^{\frac{1}{2}} \\
  & = \Tr[A]{\varrho_{A|X} \varrho_{Y|A}} \Sprod \rho_X,
\end{align}
where, in the second line, we have used the fact that $\rho_X$
commutes with $\rho_{Y|A}$.  Finally, applying the cyclic property of
the trace gives
\begin{equation}
  \varrho_{XY} = \Tr[A]{\varrho_{Y|A} \varrho_{A|X}} \Sprod \rho_X,
\end{equation}
which is the predictive expression for $\varrho_{XY}$.

We are now in a position to show that our formalism coincides with
that of Ref.~\cite{Barnett2000} for the case of unbiased sources,
where $\rho_A = I_A/d$ with $d$ the dimension of $\Hilb[A]$. In this
case, the definitions of retrodictive states and retrodictive POVMs in
\cite{Barnett2000} were
\begin{align}
  E^{A,\text{retr}}_x & = d P(X=x) \rho^A_x,
\end{align}
and
\begin{align}
  \rho^{A,\text{retr}}_y & = \frac{E^A_y}{d P(Y=y)},
\end{align}
which are easily seen to be special cases of
eqs.~\eqref{eq:Bayes:RetroPOVM} and \eqref{eq:Bayes:RetroStates}.

Finally, note that the analysis above can be extended to deal with the
scenario depicted in fig.~\ref{fig:Bayes:Intervene}, in which there is
an intervening channel $\mathcal{E}_{B|A}$ between the preparation and
measurement so that the measurement is now made on region $B$.  In
fact, by making use of the rule for propagating conditional states
given in eq.~\eqref{eq:CCS:CondBP} (or equivalently the Heisenberg
dynamics given in eq.~\eqref{eq:CCS:CondHeis}), the extra region $B$
can be eliminated from the description by defining
\begin{equation}
  \label{eq:Bayes:CondHeis}
  \varrho_{Y|A} = \Tr[A]{\varrho_{Y|B}\varrho_{B|A}},
\end{equation}
which determines the effective measurement on $A$ that is performed by
actually measuring region $B$.  The three operators $\rho_X$,
$\varrho_{A|X}$ and $\varrho_{Y|A}$ then provide a predictive
description of a simple prepare-and-measure experiment, and so the
previous analysis applies.

Specifically, substituting eq.~\eqref{eq:Bayes:CondHeis} into
eq.~\eqref{eq:Bayes:PredictiveTemp} gives the predictive expression
for $\varrho_{XY}$ as
\begin{equation}
  \label{eq:Bayes:PredictiveInter}
  \varrho_{XY} = \Tr[AB]{\varrho_{Y|B}\varrho_{B|A}
    \varrho_{A|X}} \Sprod \rho_X,
\end{equation}
or in conventional notation
\begin{multline}
  P(X=x, Y=y) = \\ \Tr[B]{E^B_y \mathcal{E}_{B|A} \left ( \rho^A_x \right
    )} P(X=x).
\end{multline}

\begin{figure}[htb]
  \includegraphics[scale=0.4]{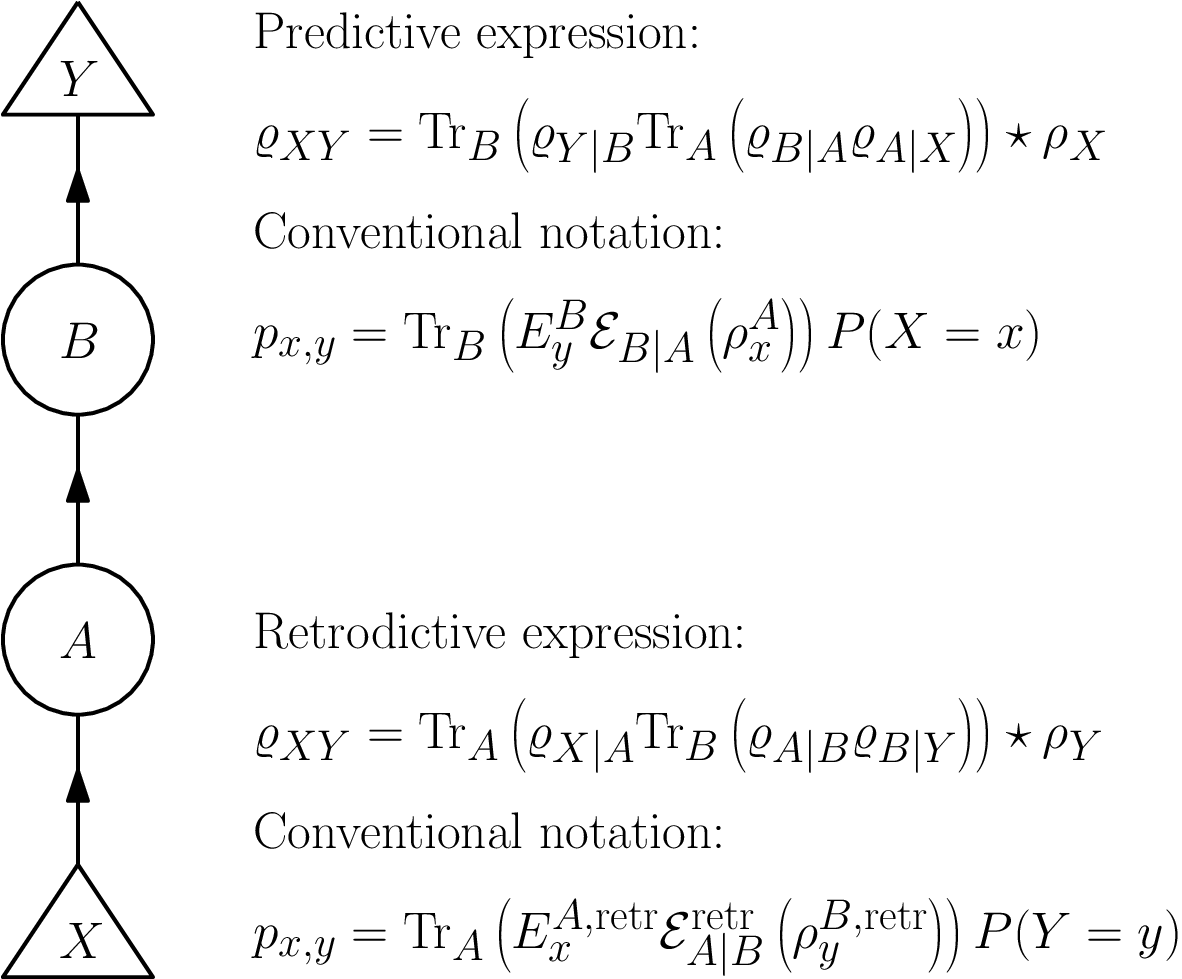}
  \caption{\label{fig:Bayes:Intervene}A prepare-and-measure experiment
    with an intervening channel.  We are interested in computing the
    joint probability distribution $P(X,Y)$ of the preparation
    variable and the measurement outcome. For compactness, $p_{x,y} =
    P(X=x,Y=y)$ is used for the expressions in conventional notation.}
\end{figure}

Similarly, the retrodictive expression is obtained from
eq~\eqref{eq:Bayes:RetrodictiveTemp} by substituting $\varrho_{A|Y} =
\Tr[B]{\varrho_{A|B}\varrho_{B|Y}}$, where $\varrho_{A|B}$ and
$\varrho_{B|Y}$ are obtained from $\varrho_{B|A}$ and $\varrho_{Y|B}$
by Bayes' theorem.  This gives
\begin{equation}
  \label{eq:Bayes:RetrodictiveInter}
  \varrho_{XY} = \Tr[AB]{\varrho_{X|A}\varrho_{A|B}
      \varrho_{B|Y}} \Sprod \rho_Y,
\end{equation}
or in conventional notation
\begin{multline}
  P(X=x,Y=y) = \\ \Tr[A]{E^{A,\text{retr}}_x
    \mathcal{E}^{\text{retr}}_{A|B} \left ( \rho^{B,\text{retr}}_y
    \right )} P(Y=y).
\end{multline}

To sum up, the formalism of quantum conditional states shows that,
just as in the classical case, the rules of quantum Bayesian inference
do not discriminate between prediction and retrodiction.
Specifically, the rules are the same regardless of whether the
propagation is in the same or the opposite direction to the causal
arrows. This reveals an important kind of time-symmetry that is not
apparent in the normal quantum formalism.  More importantly, it shows
that a formalism for quantum Bayesian inference can be found that is
blind to at least this aspect of the causal structure.

\subsection{Remote Measurements and Spatial Symmetry}

\label{Bayes:Remote}

Many of the novel features of quantum theory exhibit themselves in the
correlations that can be obtained between local measurements on a pair
of acausally-related regions.  These include Einstein-Podolsky-Rosen
correlations and Bell correlations.  Inferences about such
measurements can be treated using a formalism that is almost identical
to the predictive and retrodictive expressions for prepare-and-measure
experiments given above.  One simply has to substitute the formula for
propagating a causal conditional state across acausally-related
regions, eq.~\eqref{eq:CCS:AcausalCausal}, for the formula for
propagating them across causally-related regions,
eq.~\eqref{eq:CCS:CondBP}, used above.

For two acausally-related regions, $A$ and $B$, it is self evident
that there is complete symmetry between propagation from one region to
another and back again, i.e.\ if
\begin{equation}
  \rho_B = \Tr[A]{\rho_{B|A} \rho_A}
\end{equation}
then
\begin{equation}
  \rho_A = \Tr[B]{\rho_{A|B} \rho_B}.
\end{equation}
The two conditional states, $\rho_{B|A}$ and $\rho_{A|B}$ are related
by the quantum Bayes' theorem, so Bayes' theorem allows the direction
of belief propagation to be reversed.

To see this spatial symmetry at work, consider the scenario of
measurements being implemented on a pair of acausally-related regions,
$A$ and $B$, as depicted in fig.~\ref{fig:Bayes:EPR}.
\begin{figure}[htb]
  \includegraphics[scale=0.4]{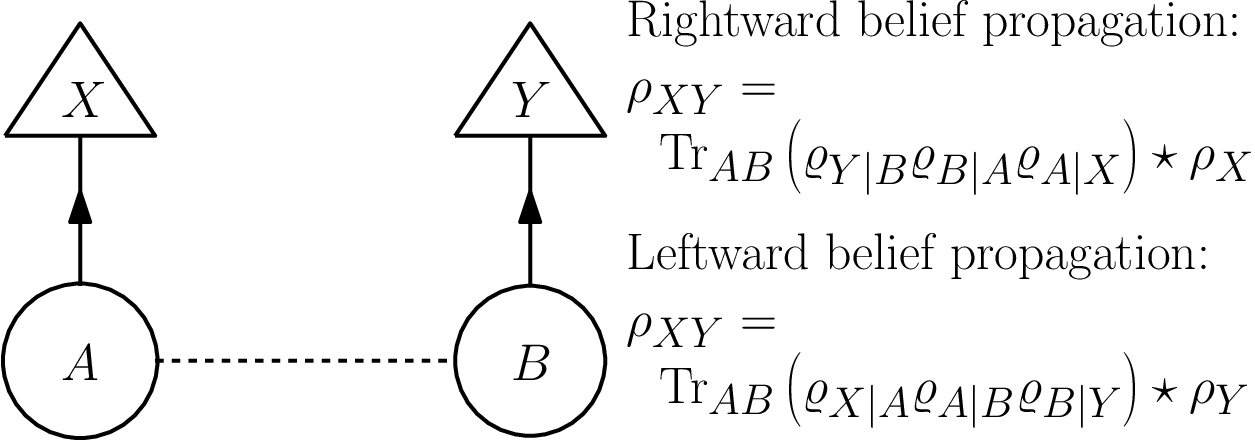}
  \caption{\label{fig:Bayes:EPR} Measurements on a pair of acausally
    related regions, rightward and leftward belief propagation.}
\end{figure}

In the conditional states formalism, the region $AB$ is assigned a
joint state $\rho_{AB}$, and the measurements on $A$ and $B$ are
represented by causal conditional states $\varrho_{X|A}$ and
$\varrho_{Y|B}$.  The joint distribution $P(X,Y)$ over outcomes is
given by the components of the acausal joint state $\rho_{XY}$ via
\begin{equation}
  \label{eq:Bayes:XYcorrelation}
  \rho_{XY} = \Tr[AB]{\left ( \varrho_{X|A}\varrho_{Y|B} \right )
     \rho_{AB}}.
\end{equation}

These correlations can alternatively be calculated by propagating
beliefs about $A$ conditioned on $X$ to beliefs about $Y$ conditioned
on $X$, and also by propagating beliefs about $B$ conditioned on $Y$
to beliefs about $X$ conditioned on $Y$.  These representations are
analogous to the predictive and retrodictive expressions for
prepare-and-measure experiments discussed in the previous section.  We
shall refer to these as \emph{rightward} and \emph{leftward} belief
propagation respectively.

It may seem convoluted to calculate $\rho_{XY}$ via rightward or
leftward belief propagation, when eq.~\eqref{eq:Bayes:XYcorrelation}
already gives a simple expression for it.  However, it is important to
understand how to propagate beliefs across acausally-related regions
in order to deal with the situation in which you obtain new
information about one region and wish to make inferences about the
other.  This is exactly what happens in the analysis of an EPR
experiment.  This problem is known as quantum steering and will be
considered in \S\ref{Cond:Remote}.

First, consider rightward belief propagation.  The aim is to rewrite
eq.~\eqref{eq:Bayes:XYcorrelation} in terms of $\rho_{B|A}$, the state
of $B$ conditioned on the region to its left.  For greater symmetry
with the prepare-and-measure case, we also use $\rho_X$ and
$\varrho_{A|X}$ to describe the left hand wing of the experiment,
whilst retaining $\rho_{Y|B}$ for the right.  Then, we write
$\rho_{AB} = \rho_{B|A} \Sprod \rho_A$ and note that, by Bayes'
theorem,
\begin{equation}
  \varrho_{X|A} = \varrho_{A|X} \Sprod \left ( \rho_X \rho_A^{-1} \right ).
\end{equation}
Substituting these into eq.~\eqref{eq:Bayes:XYcorrelation} gives
\begin{multline}
  \rho_{XY} \\ = \Tr[AB]{\left [\varrho_{A|X} \Sprod \left (
          \rho_X \rho_A^{-1} \right )\right ] \varrho_{Y|B}
    \left [ \rho_{B|A} \Sprod \rho_A \right ]}.
\end{multline}
Expanding the $\Sprod$-products gives
\begin{multline}
  \rho_{XY} = \\
  \Tr[AB]{\rho_X^{\frac{1}{2}}\rho_A^{-\frac{1}{2}}\varrho_{A|X}
    \rho_X^{\frac{1}{2}} \rho_A^{-\frac{1}{2}}
    \varrho_{Y|B} \rho_A^{\frac{1}{2}} \rho_{B|A} \rho_A^{\frac{1}{2}}}.
\end{multline}
All of the $\rho_A$ terms can be cancelled by using the fact that they
commute with $\varrho_{Y|B}$ and $\rho_X$, and by using the cyclic
property of the trace.  Then, we have
\begin{align}
  \rho_{XY} & = \Tr[AB]{\rho_X^{\frac{1}{2}} \varrho_{A|X}
    \rho_X^{\frac{1}{2}}
    \varrho_{Y|B} \rho_{B|A}} \\
  & = \Tr[AB]{\varrho_{Y|B} \rho_{B|A} \rho_X^{\frac{1}{2}}
    \varrho_{A|X} \rho_X^{\frac{1}{2}}} \\
  & =  \Tr[AB]{\varrho_{Y|B} \rho_{B|A}
    \varrho_{A|X}} \Sprod \rho_X,
  \label{eq:Bayes:JointXY}
\end{align}
where we have used the cyclic property of the trace and the fact that
$\rho_X$ commutes with $\rho_{B|A}$ and $\varrho_{Y|B}$.
Eq.~\eqref{eq:Bayes:JointXY} has the same form as the predictive
expression for a prepare-and-measure experiment with an intervening
channel given in eq.~\eqref{eq:Bayes:PredictiveInter}, except that, in
this case, $\rho_{B|A}$ is acausal.  We can also use
eq.~\eqref{eq:CCS:AcausalCausal} to define $\rho_{Y|A} =
\Tr[B]{\varrho_{Y|B}\rho_{B|A}}$, which represents the effective
measurement on region $A$ that is made by measuring region $B$.  Using
this, region $B$ can be eliminated from eq.~\eqref{eq:Bayes:JointXY}
to obtain
\begin{equation}
  \label{eq:Bayes:JointXYNoB}
  \rho_{XY} = \Tr[AB]{\rho_{Y|A} \varrho_{A|X}} \Sprod
  \rho_X.
\end{equation}
This is similar to eq.~\eqref{eq:Bayes:PredictiveTemp}, and, in fact,
is mathematically identical to it because $\rho_{Y|A}$ is a hybrid
conditional state, so its mathematical form does not depend on whether
it is acausal or causal.  Eq.~\eqref{eq:Bayes:JointXYNoB} is a useful
form to use when we want to consider the effect of measuring $B$ on
the remote region $A$, as in quantum steering.

For leftward belief propagation, a similar analysis gives the
expressions
\begin{equation}
  \rho_{XY} = \Tr[AB]{\varrho_{X|A} \rho_{A|B} \varrho_{B|Y}} \Sprod
  \rho_Y,
\end{equation}
which is analogous to eq.~\eqref{eq:Bayes:RetrodictiveInter} and
\begin{equation}
  \label{eq:Bayes:JointXYNoBLeft}
  \rho_{XY} = \Tr[A]{\rho_{X|A} \varrho_{B|Y}} \Sprod \rho_Y,
\end{equation}
which is analogous to eq.~\eqref{eq:Bayes:RetrodictiveTemp}.

As with prediction and retrodiction, there is complete symmetry
between leftward and rightward belief propagation, and there is a
strong symmetry between causal and acausal belief propagation in
general.  This represents progress towards a theory of quantum
Bayesian inference that is completely independent of causal structure.

\section{Quantum Bayesian Conditioning}
\label{Cond}

Classically, Bayesian conditioning is used to update probabilities
when new data are acquired.  Specifically, if you are interested in a
variable $R$, and you learn that some correlated variable $X$ takes
the value $x$, then the theory of Bayesian inference recommends that
you should update your probability distribution for $R$ from the prior
$P(R)$ to the posterior $P_x(R) = P(R|X=x)$\footnote{Strictly
  speaking, this is only a special case of Bayesian conditioning,
  which can be formulated more generally for arbitrary events on a
  sample space rather than just for random variables.  We restrict
  attention to the special case of conditioning one random variable
  upon another for ease of comparison with quantum theory.}.

Bayesian conditioning can be viewed as a two-step process.  First, the
observation that $X=x$ causes you to update your probability
distribution for $X$ from $P(X)$ to $P_x(X)$, where
\begin{equation}
  P_x(X=x') = \delta_{x,x'}.
\end{equation}
Secondly, assuming that the observation of $X$ does not cause you to
change your conditional probabilities $P(R|X)$, the new probability
distribution for $R$ is obtained via belief propagation as
\begin{align}
  P_x(R) & = \sum_X P(R|X)P_x(X) \\
  & = P(R|X=x).
\end{align}

This two step decomposition of conditioning has been emphasized by
Richard Jeffrey \cite{Jeffrey1983, Jeffrey2004} who showed that
whenever an observation causes the value of a random variable to
become certain, and that all probabilities conditioned on that random
variable are unchanged by the observation, then the change in the
probability distribution can be represented as Bayesian conditioning.

The reason for emphasizing this decomposition is that Jeffrey was
interested in situations in which an observation does not cause you to
believe that some variable takes a precise value.  As an example of
this, adapted from \cite{Earman1992}, suppose that $X$ is the color of
a jellybean, which has possible values `red', `green' and `yellow',
and that $R$ is the flavor, which has possible values `cherry',
`strawberry', `lime' and `lemon'.  Suppose that initially, your
probability distribution for $X$, $P(X)$, assigns a probability $1/3$
to each color, and that your observation consists of viewing the
jellybean in the light of a dim candle.  This might not be enough for
you to become certain about the color of the jellybean, but it may
reduce your uncertainty somewhat.  For example, you may now think it
reasonable to assign a probability distribution $P_{\text{post}}(X)$
that gives probability $2/3$ to $X = \text{'red'}$ and $1/6$ each to
$X = \text{'green'}$ and $X = \text{'yellow'}$.  Assuming that the
observation does not cause your conditional probabilities $P(R|X)$ to
change, Jeffrey shows that your posterior probability distribution for
$R$ is obtained by belief propagation via
\begin{equation}
  P_{\text{post}}(R) = \sum_X P(R|X) P_{\text{post}}(X),
\end{equation}
which is known as Jeffrey conditioning.

Many orthodox Bayesians reject the generalization to Jeffrey
conditioning and maintain that the rational way to update
probabilities in the light of data is always via Bayesian
conditioning, not least because most of the apparatus of Bayesian
statistics depends on this.  This position can be defended by
insisting that the sort of situations described above should really be
handled by expanding the sample space to include statements about your
perceptions.  One can show that Jeffrey conditioning can always be
represented as Bayesian conditioning on a larger space in this way.
Counter to this, Jeffrey argues that it is not realistic to construct
such a space, since you do not actually have precise descriptions of
your perceptions.  This argument has been eloquently put by Diaconis
and Zabell \cite{Diaconis1982}:
\begin{quotation}
  For example, suppose we are about to hear one of two recordings of
  Shakespeare on the radio, to be read by either Olivier or Gielgud,
  but are unsure of which, and have a prior with mass $1/2$ on
  Olivier, $1/2$ on Gielgud. After hearing the recording, one might
  judge it fairly likely, but by no means certain, to be by Olivier.
  The change in belief takes place by direct recognition of the voice;
  all the integration of sensory stimuli has already taken place at a
  subconscious level.  To demand a list of objective vocal features
  that we condition on in order to affect the change would be a
  logician's parody of a complex psychological process.
\end{quotation}

The debate over whether Jeffrey conditioning should be subsumed into
Bayesian conditioning is somewhat analogous to a similar argument in
quantum theory about whether POVMs should be regarded as fundamental,
since they can always be represented by projective measurements on a
larger Hilbert space via Naimark extension \cite{Naimark1940}.

In the conditional states formalism, the quantum analogue of Jeffrey
conditioning is straightforward.  If an observation causes you to
change the state you assign to a region $A$ from $\rho_A$ to
$\rho^{\text{post}}_A$, and if your conditional state for another
region $B$, given $A$ (which will either be an acausal state
$\rho_{B|A}$ or a causal state $\varrho_{B|A}$) is unchanged, then
your posterior state for $B$ is determined by belief propagation via
either
\begin{equation}
  \rho^{\text{post}}_B = \Tr[A]{\rho_{B|A} \rho^{\text{post}}_A}
\end{equation}
or
\begin{equation}
  \rho^{\text{post}}_B = \Tr[A]{\varrho_{B|A} \rho^{\text{post}}_A},
\end{equation}
depending on whether $A$ and $B$ are acausally or causally-related.

The question of whether there is a quantum analogue of Bayesian
conditioning is more subtle, as it depends on whether there is a
posterior quantum state for $A$ that is analogous to having certainty
about the value of a classical variable, i.e.\ the point measure
$P_x(X=x') = \delta_{x,x'}$.  Arguably, a pure state could play this
role, since it represents the smallest amount of uncertainty that one
can have about a quantum region.  However, unlike classical point
measures, pure states still assign probabilities other than $0$ and
$1$ to fine-grained measurements, e.g.\ measurements in a
complimentary basis, and there are good reasons to believe that, even
if they represent maximal knowledge, that knowledge is still
incomplete \cite{Spekkens2007a}.

We will not pursue this question further here, but instead focus on
the case of a hybrid region $XA$.  If the the data is the classical
variable (so that one can indeed learn its value) then Bayesian
conditioning has a straightforward generalization.  Upon learning that
$X=x$, the state of $A$ should be updated via
\begin{equation}
  \label{eq:Cond:Hybrid}
  \rho_A \rightarrow \sigma_{A|X=x},
\end{equation}
Recall that the general form of a state conditioned on a classical
variable is $\sigma_{A|X}=\sum_x \rho^A_x \otimes \Ket{x}\Bra{x}_X $,
where $\{ \rho^A_x\}$ is a set of normalized density operators and
that $\sigma_{A|X=x}$ is simply our notation for $\rho^A_x$.  The
elements of this set are called the components of $\sigma_{A|X}$, so
we may describe Bayesian conditioning as replacing $\sigma_{A}$ with
one of the components of $\sigma_{A|X}$.

Note that, as in the classical case, conditioning can be viewed as a
two-step process, wherein first the state of $X$ is updated to the
diagonal density operator for the point measure, $\rho^X_x =
\Ket{x}\Bra{x}_X$, and then belief propagation is used to determine
the posterior state of $A$, as follows
\begin{align}
  \rho^A_x & = \Tr[X]{\sigma_{A|X} \rho^X_x}, \\
  & =\Bra{x}_X \sigma_{A|X} \Ket{x}_X,\\
  & =\sigma_{A|X=x},
\end{align}
where we've made use of the fact that $\sigma_{A|X} = \sum_x
\sigma_{A|X=x} \otimes \Ket{x}\Bra{x}$. Note that this holds
regardless of the causal relation between $A$ and $X$ because a hybrid
conditional such as $\sigma_{A|X}$ does not distinguish between these
causal possibilities.

Recall that the rule for propagating unconditional beliefs about $X$
to beliefs about $A$ is $\rho_A = \Tr[X]{\sigma_{A|X} \rho_X}$.  In
conventional notation, this translates to $\rho_A = \sum_X P(X=x)
\sigma_{A|X=x}$.  Therefore, Bayesian conditioning is a process by
which a state $\rho_A$ is updated to an element $\sigma_{A|X=x}$
within a convex decomposition of $\rho_A$.

\subsection{Examples of Quantum Bayesian Conditioning}

\label{Cond:Exa}

In this section, we consider several examples of Bayesian
conditioning, based on the different causal scenarios discussed in
\S\ref{CS} and \S\ref{Bayes}.  In all these cases, conditioning the
state of a quantum region on a classical variable is the correct thing
to do in order to update the predictions or retrodictions that can be
made about other classical variables correlated with the region.  In
particular, in \S\ref{Cond:Remote}, we develop the application to
quantum steering, showing that the set of states of a region that can
be steered to by making remote measurements can be expressed compactly
in terms of conditioning and belief propagation.

\subsubsection{Conditioning on a Preparation Variable}

\label{Cond:Prep}

Consider again the preparation scenario depicted in
fig.~\ref{fig:CCS:QPreparation}, wherein a quantum region $A$ is
prepared in one of a set of states $\{\varrho_{A|X=x}\}$ depending on
the value of a classical variable $X$ with prior probability
distribution $P(X)$. This can alternatively be described by a
conditional state $\varrho_{A|X}$ and a diagonal state $\rho_X$.  In
this case, Bayesian conditioning of $A$ on the value $x$ of $X$
corresponds to updating from the ensemble average state $\rho_A =
\sum_x P(X=x) \varrho_{A|X=x}$ to the particular state
$\varrho_{A|X=x}$ corresponding to the value $x$ of $X$ that actually
obtains, which is clearly a reasonable thing to do.

The operational significance of this conditioning becomes apparent by
considering a measurement made on region $A$, described by a
conditional state $\varrho_{Y|A}$.  We now have a prepare-and-measure
experiment, where we are interested in making a predictive inference
about the measurement outcome from knowledge of the preparation
variable, as depicted in fig.~\ref{fig:Cond:Predict}.
\begin{figure}[htb]
  \includegraphics[scale=0.4]{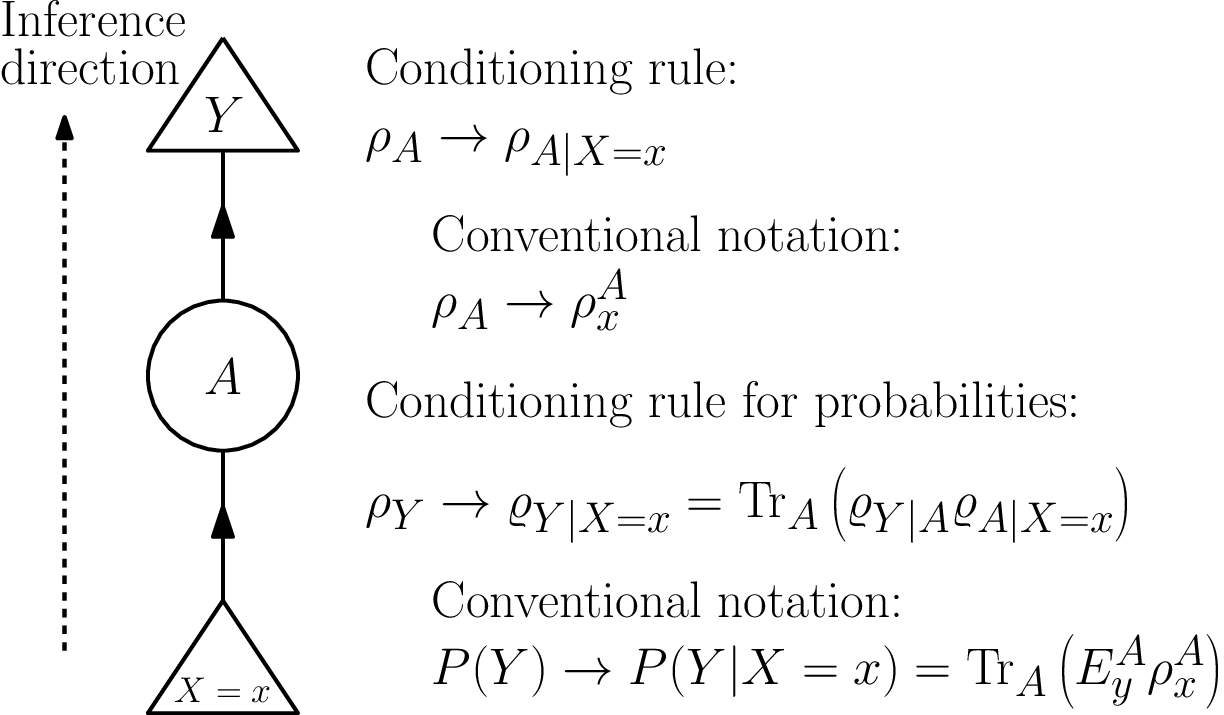}
  \caption{\label{fig:Cond:Predict}Predictive inference in a
    prepare-and-measure experiment.  We are interested in inferring
    the probability of the measurement outcome given knowledge of the
    preparation variable.}
\end{figure}

Eq.~\eqref{eq:Bayes:PredictiveTemp} gives the predictive expression
for the joint probability distribution for this experiment as
\begin{equation}
  \label{Cond:Joint}
  \varrho_{XY} = \Tr[A]{\varrho_{Y|A} \varrho_{A|X}} \Sprod \rho_X.
\end{equation}
From this, we can compute the marginal probability for $Y$ as
\begin{equation}
  \label{Cond:Born}
  \rho_Y = \Tr[A]{\varrho_{Y|A} \rho_A},
\end{equation}
where $\rho_A = \Tr[X]{\varrho_{A|X} \rho_X}$ is the ensemble
average state.  The conditional $\varrho_{Y|X} = \varrho_{XY}
\Sprod \rho_X^{-1}$ is given by
\begin{equation}
  \rho_{Y|X} = \Tr[A]{\varrho_{Y|A} \varrho_{A|X}},
\end{equation}
from which it follows that
\begin{equation}
  \label{Cond:BornCond}
  \rho_{Y|X=x} = \Tr[A]{\varrho_{Y|A} \varrho_{A|X=x}}.
\end{equation}

The transition from eq.~\eqref{Cond:Born} to eq.~\eqref{Cond:BornCond}
is just classical Bayesian conditioning of the probability for $Y$ on
the value of $X$.  Both expressions are representations of the Born
rule in terms of belief propagation, and eq.~\eqref{Cond:BornCond} can
be obtained from eq.~\eqref{Cond:Born} by replacing $\rho_A$ with
$\varrho_{A|X=x}$, which is just quantum Bayesian conditioning.  Thus,
quantum Bayesian conditioning on a preparation variable can be used as
an intermediate step in updating the probability distribution for a
measurement outcome by classical Bayesian conditioning.

Nothing changes if we consider the slightly more complicated scenario
where there is an intermediate channel between the preparation and
measurement, as depicted in fig.~\ref{fig:Bayes:Intervene}.  This is
because, as shown in \S\ref{Bayes:Retro}, the joint probability is
still given by eq.~\eqref{Cond:Joint}, where now $\varrho_{Y|A} =
\Tr[B]{\varrho_{Y|B}\varrho_{B|A}}$ describes the effective
measurement on $A$ that corresponds to the actual measurement made on
the later region $B$.  By similar reasoning, there could be an
arbitrary number of time-steps between the preparation and measurement
and conditioning the state of $A$ on the preparation variable would
still be the correct way to update the Born rule probabilities for
$Y$.

\subsubsection{Conditioning on the Outcome of a Measurement}

\label{Cond:Meas}

When conditioning on a measurement outcome, it is important to recall
that the conditional states formalism assigns states to regions rather
than to persistent systems.  Therefore, when we update the state of a
region by conditioning on a measurement outcome, the resulting
conditionalized state is assigned to the very same region that we
started with.  This is a different concept from the usual state update
rules that occur in the standard quantum formalism, such as the
projection postulate.  These standard rules apply to persistent
systems when we are interested in how the state of a system in a
region before the measurement gets mapped to its state in a region
after the measurement.  Because the updated state belongs to a
different region than the initial state, this is not an example of
pure conditioning in our framework.  Therefore, one should not think
that conditioning on a measurement outcome must reproduce the
projection postulate.  The way in which this kind of state update rule
is handled in the conditional states framework is discussed in
\S\ref{Cond:Direct}.

However, there are several other types of inference for which pure
conditioning on a measurement outcome is the correct update rule to
use.  In particular, conditioning can be used for making retrodictions
about classical variables in the past of the region of interest.  For
instance, in a quantum communication scheme, registering the outcome
of a measurement on the output of the channel leads us to infer
something about which of a set of classical messages was encoded in
the quantum state of the channel’s input.  The use of conditioning
for this sort of inference is the topic of this section.

Recall the measurement scenario depicted in
fig.~\ref{fig:CCS:QMeasurement}.  A measurement with outcomes labelled
by the classical variable $Y$ is implemented upon a quantum region
$A$.  In the predictive formalism, this experiment is described by an
input state $\rho_A$ and a causal conditional state $\varrho_{Y|A}$
(describing the measured POVM).  To condition $A$ on a value $y$ of
$Y$, Bayes' theorem must be applied to determine $\varrho_{A|Y}$, and
then the component $\varrho_{A|Y=y}$ gets picked out by conditioning.
This gives
\begin{equation}
  \label{eq:Cond:Retro}
  \rho_A \to \varrho_{A|Y=y}=\varrho_{Y=y|A}\Sprod \left ( \rho_A
  \rho_{Y=y}^{-1} \right ),
\end{equation}
or in conventional notation
\begin{equation}
  \label{eq:Cond:HybridCon}
  \rho_A  \rightarrow \rho_{y}^{A,\text{retr}} =  \frac{\rho_A^{\frac{1}{2}} E_y^A
    \rho_A^{\frac{1}{2}}}{\Tr[A]{E_y^A \rho_A}}.
\end{equation}
The state $\varrho_{A|Y=y}$ represents the state of a region $A$
\emph{prior to the measurement}, given the outcome of the measurement,
i.e.\ it is a retrodictive state.  Its operational significance is
that it allows one to make inferences about variables involved in the
preparation of $A$.

To see this, consider again the prepare-and-measure experiment, where
now we are interested in making a retrodictive inference from the
measurement outcome to the preparation variable, as depicted in
fig.~\ref{fig:Cond:Retrodict}.
\begin{figure}[htb]
  \includegraphics[scale=0.4]{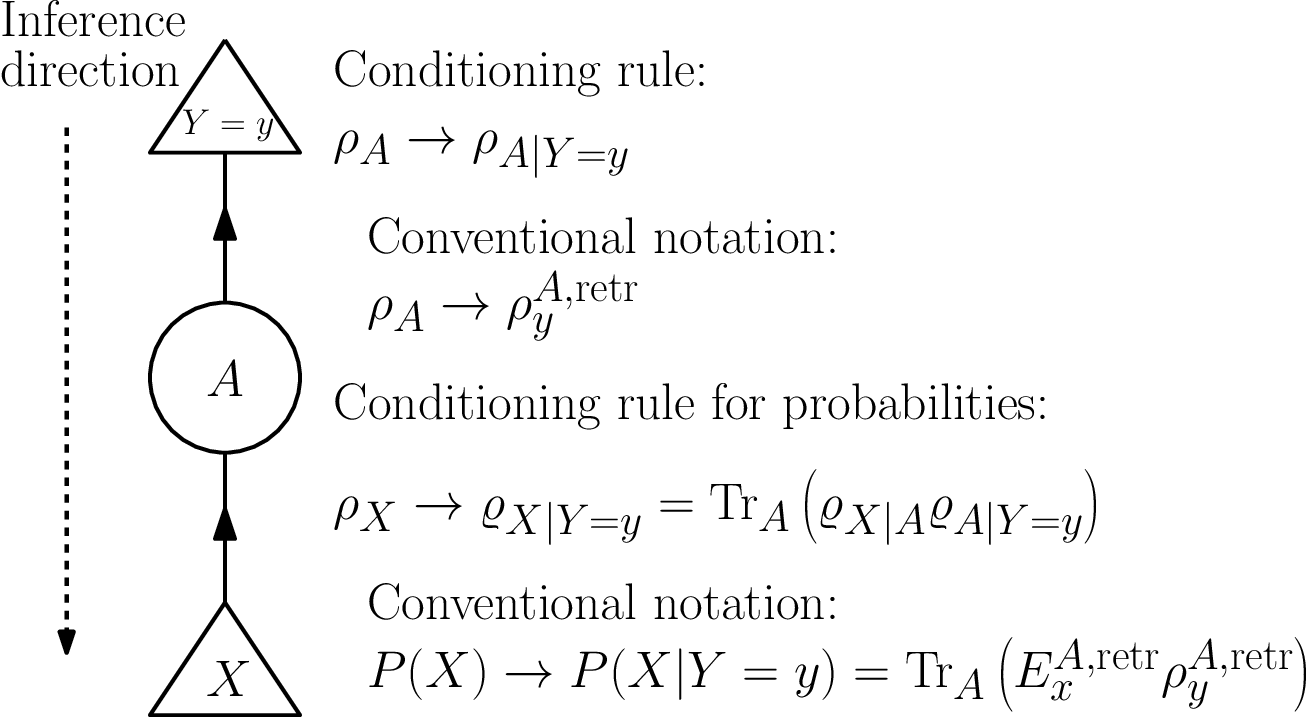}
  \caption{\label{fig:Cond:Retrodict}Retrodictive inference in a
    prepare-and-measure experiment.  We are interested in inferring the
    probability of the preparation variable given knowledge of the
    measurement outcome.}
\end{figure}

Recall from \S\ref{Bayes:Retro} that there is complete symmetry
between the predictive and retrodictive expressions for a
prepare-and-measure experiment under exchange of the preparation
variable $X$ with the measurement variable $Y$.  Thus, everything that
was said regarding the probability distribution for the measurement
outcome in the previous example, applies here to the probability
distribution for the preparation variable.  In particular, the
marginal probability distribution for $X$ is
\begin{equation}
  \rho_{X} = \Tr[A]{\varrho_{X|A} \rho_A},
\end{equation}
and conditioning this on $Y=y$ gives
\begin{equation}
  \rho_{X|Y=y} = \Tr[A]{\varrho_{X|A} \varrho_{A|Y=y}}.
\end{equation}
Both of these expressions are belief propagation representations of
the Born rule, with respect to the retrodictive POVM for $X$.  Thus,
conditioning $A$ on the outcome of the measurement can be used as an
intermediate step in updating the probability distribution for the
preparation variable in light of the measurement outcome.  The
retrodictive state appearing in the retrodictive Born rule simply gets
updated by Bayesian conditioning.

As in the previous example, nothing changes if there are intermediate
channels between the preparation and measurement.  We can simply use
conditional belief propagation to eliminate the additional regions.

\subsubsection{Conditioning on the Outcome of a Remote Measurement:
  Quantum Steering}

\label{Cond:Remote}

Finally, consider the case of a measurement made on a region $B$ that
is acausally-related to a region $A$, as depicted in
fig.~\ref{fig:CCS:Remote}.  This experiment is described by an acausal
joint state $\rho_{AB}$, and a conditional state $\varrho_{Y|B}$,
corresponding to the POVM measured on $B$.  We are interested in how
the state of the remote region $A$ is updated when we learn the
outcome of the measurement made on $B$.  The updated state of $A$
could then be used to predict the outcome of a measurement made on
$A$, corresponding to a causal conditional state $\varrho_{X|A}$.
This scenario is depicted in fig.~\ref{fig:Bayes:EPR}.  The results of
\S\ref{Bayes:Remote} establish that the mathematical description of
this experiment is formally equivalent to a prepare-and-measure
experiment with an intervening channel, the only difference being that
the causal conditional state $\varrho_{B|A}$ is replaced by an acausal
conditional state $\rho_{B|A}$.  This symmetry is enough to establish
that, as in the previous example, conditioning the state of $A$ on
$Y=y$ must be the correct way of updating the Born rule probabilities
for $X$.

However, it is worth developing this example in a little more detail,
since the ``remote collapse'' at $A$ that occurs upon measuring $B$ is
at the core of the EPR argument.  This has led to a study of the
ensembles of states for $A$ that can be obtained by measuring $B$, a
problem known as quantum steering \cite{Schrodinger1935,
  Schrodinger1936, Vujicic1988, Hughston1993, Verstraete2002,
  Wiseman2007, Jones2007, Herbut2009}.  The conditional states
formalism provides an elegant approach to this problem.

Recall from \S\ref{Bayes:Remote}, that the joint probability for $X$
and $Y$ can be computed via leftward belief propagation, which yields
eq.~\eqref{eq:Bayes:JointXYNoBLeft}, i.e.
\begin{equation}
  \label{eq:Cond:JointXYNoBLeft}
  \rho_{XY} = \Tr[A]{\varrho_{X|A} \rho_{A|Y}} \Sprod
  \rho_Y.
\end{equation}
This formula is obtained by first applying Bayesian inversion to
$\varrho_{Y|B}$ to determine
\begin{equation}
  \varrho_{B|Y} = \varrho_{Y|B} \Sprod \left ( \rho_B \rho_Y^{-1}\right ),
\end{equation}
where $\rho_Y = \Tr[B]{\varrho_{Y|B} \rho_B}$ is the Born rule
probability distribution for the measurement outcome.  Then,
conditional belief propagation is used to obtain $\rho_{A|Y} =
\Tr[B]{\rho_{A|B}\varrho_{B|Y}}$.

Eq.~\eqref{eq:Cond:JointXYNoBLeft} is formally equivalent to the
retrodictive expression for a prepare-and-measure experiment, so the
rationale for conditioning is exactly the same as in the previous
example.  Specifically, the marginal probability distribution for $X$
is
\begin{equation}
  \rho_X = \Tr[A]{\varrho_{X|A} \rho_A},
\end{equation}
and conditioning on $Y=y$ gives
\begin{equation}
  \rho_{X|Y=y} = \Tr[A]{\varrho_{X|A} \rho_{A|Y=y}}.
\end{equation}
Thus, conditioning the state of $A$ on $Y=y$ is an appropriate
intermediate step for updating the Born rule probabilities for $X$.

The following proposition summarizes this result and translates it
into conventional notation.
\begin{proposition}
  \label{prop:Cond:Steering}
  Let $\rho_{AB}$ be the joint state of two acausally-related regions.
  Suppose that the POVM corresponding to the conditional state
  $\varrho_{Y|B}$ is measured on $B$ and the outcome $Y=y$ is
  obtained.  Then, the state of region $A$ should be updated from the
  prior $\rho_A = \Tr[B]{\rho_{AB}}$ to $\rho_{A|Y=y}$, where
  \begin{equation}
    \label{eq:Cond:SteerCond}
    \rho_{A|Y} = \Tr[B]{\rho_{A|B} \varrho_{B|Y}},
  \end{equation}
  $\rho_{A|B} = \rho_{AB} \Sprod \rho_B^{-1}$, and
  \begin{equation}
    \label{eq:Cond:SteerBayesCond}
    \varrho_{B|Y} = \varrho_{Y|B} \Sprod (\rho_{B} \rho_Y^{-1}).
  \end{equation}
  Let the components of the conditional state $\varrho_{Y|B}$ be the
  elements of the POVM $\left \{ E^B_y \right \}$ and let
  $\mathfrak{E}_{A|B}$ be the map that is Jamio{\l}kowski-isomorphic
  to $\rho_{A|B}$. Then, the updated state of $A$ is
  \begin{equation}
    \label{eq:Cond:Steer}
    \rho_y^{A} = \mathfrak{E}_{A|B}(\rho_y^{B}),
  \end{equation}
  where
  \begin{equation}
    \label{eq:Cond:SteerBayes}
    \rho_y^{B} = \frac{\rho_B^{1/2} E^B_y \rho_B^{1/2}}{\Tr[B]{E^B_y \rho_B}}.
  \end{equation}
\end{proposition}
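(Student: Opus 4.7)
The plan is to argue the update rule itself by reducing this scenario, via leftward belief propagation, to the retrodictive prepare-and-measure situation of \S\ref{Cond:Meas}, and then to derive the explicit formulas by a chain of applications of earlier results: the hybrid Bayes' theorem, the acausal/causal belief-propagation rule, and the Jamio{\l}kowski isomorphism.

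First, to justify that Bayesian conditioning on $Y$ is the correct response to learning the outcome $Y=y$, I would invoke eq.~\eqref{eq:Bayes:JointXYNoBLeft}: for any causal conditional state $\varrho_{X|A}$ representing a subsequent measurement on $A$, the joint distribution $\rho_{XY}$ has exactly the same form as the retrodictive expression for a prepare-and-measure experiment. Consequently, the reasoning of \S\ref{Cond:Meas} transfers verbatim, and classical Bayesian conditioning of $\rho_X$ to $\rho_{X|Y=y}$ in light of the measurement outcome is correctly implemented by the intermediate step of replacing $\rho_A$ with the component $\rho_{A|Y=y}$ of $\rho_{A|Y}$. Since this must hold for arbitrary $\varrho_{X|A}$, the update rule $\rho_A \rightarrow \rho_{A|Y=y}$ is forced.

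Next, eqs.~\eqref{eq:Cond:SteerCond} and \eqref{eq:Cond:SteerBayesCond} are immediate consequences of general results already established. Propagating the acausal conditional state $\rho_{A|B}$ through the causal conditional state $\varrho_{B|Y}$ via eq.~\eqref{eq:CCS:AcausalCausal} gives eq.~\eqref{eq:Cond:SteerCond}, and applying the hybrid Bayes' theorem eq.~\eqref{eq:Bayes:HybridLH} to invert $\varrho_{Y|B}$ yields eq.~\eqref{eq:Cond:SteerBayesCond}.

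Finally, to extract the conventional-notation formulas eqs.~\eqref{eq:Cond:Steer} and \eqref{eq:Cond:SteerBayes}, I would take the $Y=y$ component of eq.~\eqref{eq:Cond:SteerBayesCond}. Substituting $\varrho_{Y|B} = \sum_{y'}\Ket{y'}\Bra{y'}_Y \otimes E^B_{y'}$ and $\rho_Y = \sum_{y'}P(Y=y')\Ket{y'}\Bra{y'}_Y$ into the $\Sprod$-product and sandwiching with $\Bra{y}_Y$ and $\Ket{y}_Y$ gives $\varrho_{B|Y=y} = \rho_B^{1/2} E^B_y \rho_B^{1/2}/\Tr[B]{E^B_y \rho_B} = \rho_y^B$. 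Taking the same component of eq.~\eqref{eq:Cond:SteerCond} yields $\rho_{A|Y=y} = \Tr[B]{\rho_{A|B}\, \rho_y^B}$, which by the reverse Jamio{\l}kowski formula eq.~\eqref{eq:ACS:reverseJamiol} of theorem~\ref{thm:ACS:Jamiol} equals $\mathfrak{E}_{A|B}(\rho_y^B)$, establishing eq.~\eqref{eq:Cond:Steer}. The only genuinely conceptual step is the first; all remaining manipulations are mechanical applications of the $\Sprod$-product formalism and the Jamio{\l}kowski correspondence.
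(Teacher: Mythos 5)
Your proposal is correct and follows essentially the same route as the paper: the justification of the update rule by reducing leftward belief propagation to the retrodictive prepare-and-measure case of \S\ref{Cond:Meas}, the derivation of eqs.~\eqref{eq:Cond:SteerCond} and \eqref{eq:Cond:SteerBayesCond} from mixed-causality belief propagation and the hybrid Bayes' theorem, and the passage to conventional notation via components and the Jamio{\l}kowski isomorphism are exactly the steps the paper takes (the component calculation you spell out is left implicit there, having already appeared as eq.~\eqref{eq:Bayes:HybridLHCon}). The paper additionally notes an alternative order of operations (belief propagation first, then Bayesian inversion, the acausal analog of the Heisenberg picture), but that is presented as a cross-check rather than as the proof.
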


In the above analysis, the method used to compute $\rho_{A|Y=y}$ is to
first apply Bayes' theorem to $\varrho_{Y|B}$ and then apply
conditional belief propagation.  This is the acausal analogue of
performing the calculation in the retrodictive formalism.  However, we
could equally well apply belief propagation first to obtain
\begin{equation}
  \rho_{Y|A} = \Tr[B]{\varrho_{Y|B}\rho_{B|A}},
\end{equation}
and then apply Bayes' theorem to obtain
\begin{equation}
  \rho_{A|Y=y} = \rho_{Y=y|A} \Sprod \left (  \rho_A \rho_Y^{-1} \right ).
\end{equation}
This is the acausal analogue of performing the calculation in the
Heisenberg picture, and it is straightforward to check that it gives
the same result.

In conventional notation, this amounts to first determining the
effective POVM on $A$ that is performed when $B$ is measured via
\begin{equation}
  E^A_y = \mathfrak{E}_{A|B}^{\dagger} \left ( E^B_y \right ),
\end{equation}
and then determining the updated state via
\begin{equation}
  \rho^A_y = \frac{\rho_A^{\frac{1}{2}} E^A_y
    \rho_A^{\frac{1}{2}}}{\Tr[B]{E^A_y \rho_A}}.
\end{equation}
Expressions equivalent to these conventional formulas have appeared
previously in \cite{Verstraete2002}, the only difference being the
appearance of a transpose, due to the use of the Choi isomorphism
rather than the Jamio{\l}kowski isomorphism in \cite{Verstraete2002}.

We have seen that the remote collapse $\rho_A \to \rho_{A|Y=y}$ is an
instance of quantum Bayesian conditioning.  Within interpretations of
quantum theory wherein quantum states are considered to describe
reality completely, the change of the state of $A$ as a result of a
distant measurement upon $B$ is sometimes considered to be an instance
of action at a distance.  Indeed, Einstein criticized the Copenhagen
interpretation on exactly these grounds.  On the other hand, the
analysis above shows clearly that if one views quantum theory as a
theory of Bayesian inference, then upon learning the outcome of a
measurement on $B$, all that occurs is that one's beliefs about $A$
are updated.  No change to the physical state of $A$ is required
within such an approach.  This interpretation is bolstered by the
formal equivalence to the case of conditioning a region on the outcome
of a subsequent measurement, which does not seem to imply retrocausal
influences (although for realist interpretations it has been argued
that a imposing a particular symmetry principle does imply
retrocausality in this scenario \cite{Evans2010, Wharton2011}).

Einstein anticipated such an epistemic interpretation of remote
collapse in his writing, as is argued in \cite{Spekkens2010}.
However, he most likely thought that the probabilities represented by
quantum states could be probabilities for the values of physical
variables (possibly hidden), and that these could satisfy classical
probability theory.  However, by virtue of Bell's theorem
\cite{Bell1964}, such an interpretation of remote collapse is not
possible within the standard framework for hidden variable theories.
We evade this no-go result here by understanding remote collapse as
Bayesian updating within a non-commutative probability theory rather
than within classical probability theory.  In itself, this does not
provide a viable realist interpretation of quantum theory, but it does
suggest that an acceptable ontology for quantum theory ought not to
include the quantum state.  Finally, note that our interpretation of
remote steering is broadly in harmony with that of Fuchs
\cite{Fuchs2003a}.  However, the view of quantum theory presented here
differs from that of Fuchs in that he views quantum theory as a
restriction upon classical probability theory whereas we consider it
to be a generalization thereof.

\subsection{Why the Projection Postulate is not an Instance of
  Bayesian Conditioning}

\label{Cond:Direct}

Finally in this section, we deal with the elephant in the room --- the
projection postulate.  Some authors have argued that the projection
postulate is a quantum analogue of Bayesian conditioning (see
e.g. \cite{Bub1978, Bub2007a}).  However, the projection postulate is
not an instance of quantum Bayesian conditioning as defined above.  In
this section, we discuss the relationship between quantum Bayesian
conditioning and the projection postulate (and quantum instruments in
general) at some length, in order to dispel the misconception that
projection is a kind of conditioning.  After pointing out the formal
differences between the two, we explain the different types of update
rule that are associated with a measurement in both classical
probability theory and quantum theory, pointing out where conditioning
and projection fit into this picture.  Then, we explain how, in the
conditional states formalism, the projection postulate should be
thought of as a composite operation, consisting of belief propagation
to a later region followed by Bayesian conditioning.  This is broadly
in line with the treatment of quantum measurements advocated by Ozawa
\cite{Ozawa1997, Ozawa1998}.  Finally, we deal with the possible
objection that, although the classical analogue of the projection
postulate is not just Bayesian conditioning, it can be thought of as
conditioning combined with a simple relabelling of the system
variable.  This argument does not apply in quantum theory because,
unlike in a classical theory, any informative measurement necessarily
disturbs the system being measured.  Although this is well known, we
present a formulation of information-disturbance in terms of
conditional states, which makes it clear that no quantum instrument
can be thought of as conditioning combined with relabelling.  This may
be of interest in its own right, as it emphasizes the similarity
between information-disturbance and other trade-offs in quantum
theory, such as the monogamy of entanglement.

In the conventional formalism, when a projective measurement
$\{\Pi^A_{y}\}$ is made on a system $A$, the L{\"u}ders-von Neumann
projection postulate says that, upon learning the outcome $y$, the
initial state $\rho_A$ should be updated via
\begin{equation}
  \label{eq:ProjPost}
  \rho_{A} \rightarrow \frac{\Pi_y^A\rho_A \Pi_y^A}{\Tr[A]{\Pi_y^A
      \rho_A}}.
\end{equation}
For a general POVM $\{E_y^A\}$, there is a natural generalization of
the projection postulate, given by
\begin{equation}
  \label{eq:GeneralCollapsePost}
  \rho_A \rightarrow \frac{(E_y^A)^{\frac{1}{2}} \rho_A
    (E_y^A)^{\frac{1}{2}}}{\Tr[A]{E_y^A\rho_A}}.
\end{equation}
This generalization has also been proposed as a quantum analogue of
Bayesian conditioning \cite{Jacobs2002, Jacobs2005}.

On the other hand, in \S\ref{Cond:Meas} the rule for conditioning a
state on the outcome of a measurement was found to be
\begin{equation}
  \label{eq:Bayes:HybridCon}
  \rho_A \rightarrow \frac{\rho_A^{\frac{1}{2}} E_y^A
    \rho_A^{\frac{1}{2}}}{\Tr[A]{E_y^A \rho_A}}.
\end{equation}

This is distinct from eq.~\eqref{eq:GeneralCollapsePost} because the
roles of $\rho_A$ and $E_y^A$ have been interchanged.  Furthermore,
eq.~\eqref{eq:GeneralCollapsePost} is not equivalent to an equation of
the form of eq.~\eqref{eq:Bayes:HybridCon} (even allowing a different
POVM to appear therein) because the map $\rho_A \rightarrow \left (
  E_y^A \right )^{\frac{1}{2}} \rho_A \left ( E_y^A \right
)^{\frac{1}{2}}$, like all quantum instruments, is linear on the set
of states on $\Hilb[A]$, whereas the map $\rho \rightarrow
\rho_A^{\frac{1}{2}} E_y^A \rho_A^{\frac{1}{2}}$ is nonlinear on the
state space.

If you are inclined to view the projection postulate as the correct
quantum generalization of Bayesian conditioning, then you could take
this as evidence \emph{against} the idea that the conditional states
formalism provides an adequate theory of quantum Bayesian inference.
We shall therefore go to some length to defend the claim that neither
the projection postulate nor any other quantum instrument is an analogue
of classical Bayesian conditioning.

Consider the scenario depicted in fig.~\ref{fig:Cond:QInstrument}.  A
quantum system located in region $A$ and described by the state
$\rho_A$ is subjected to a measurement with outcomes labelled by $Y$,
and the system persists after the measurement.  At this later time, it
is represented by a region $A'$ which, as always in the present
formalism, we distinguish from region $A$, but which is associated
with a Hilbert space of the same dimension.  As discussed in
\S\ref{CCS:QI}, a quantum instrument, $\{ \mathcal{E}_y^{A'|A} \}$,
determines how the state of $A'$ is related to the state of $A$, where
$\mathcal{E}_y^{A'|A} \left ( \rho_A \right )$ is the unnormalized
state of $A'$ obtained when $Y=y$.

Now consider the classical analogue of this scenario, depicted in
fig.~\ref{fig:Cond:CInstrument}.  A classical system described by the
variable $R$, and assigned a distribution $P(R)$, is subjected to a
(possibly noisy) measurement, resulting in the outcome $Y$, which is a
random variable that depends on $R$ through a conditional probability
distribution $P(Y|R)$.  The system persists after the measurement,
where it is described by a random variable $R'$.  The value of $R'$ is
presumed to depend probabilistically on $R$, and the nature of this
dependence may vary with the outcome $Y$.  This is captured by a
conditional probability $P(Y,R'|R)$, which is the classical analogue of
a quantum instrument.
\begin{figure}[htb]
  \begin{minipage}{7.8cm}
    \raggedright
    \subfigure[]{
      \includegraphics[scale=0.4]{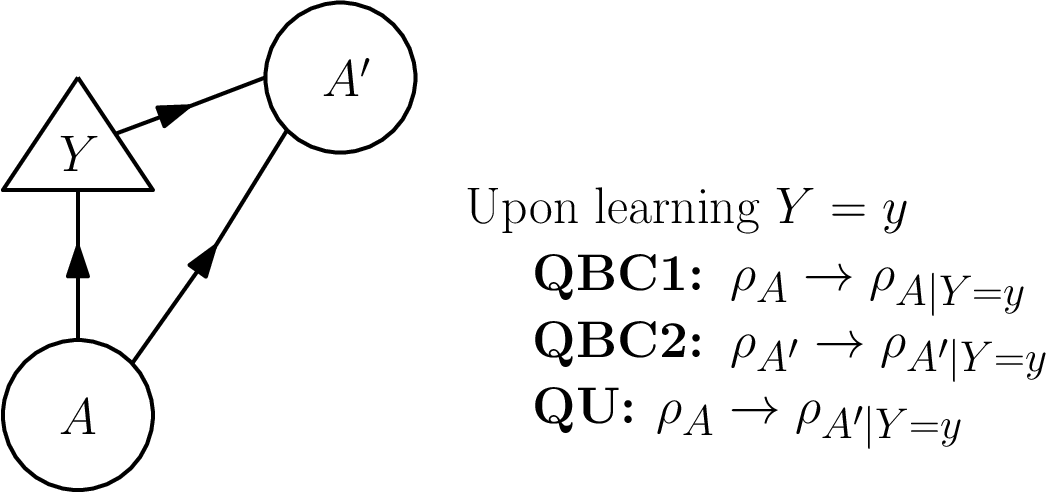}
      \label{fig:Cond:QInstrument}
    }
    \subfigure[]{
      \includegraphics[scale=0.4]{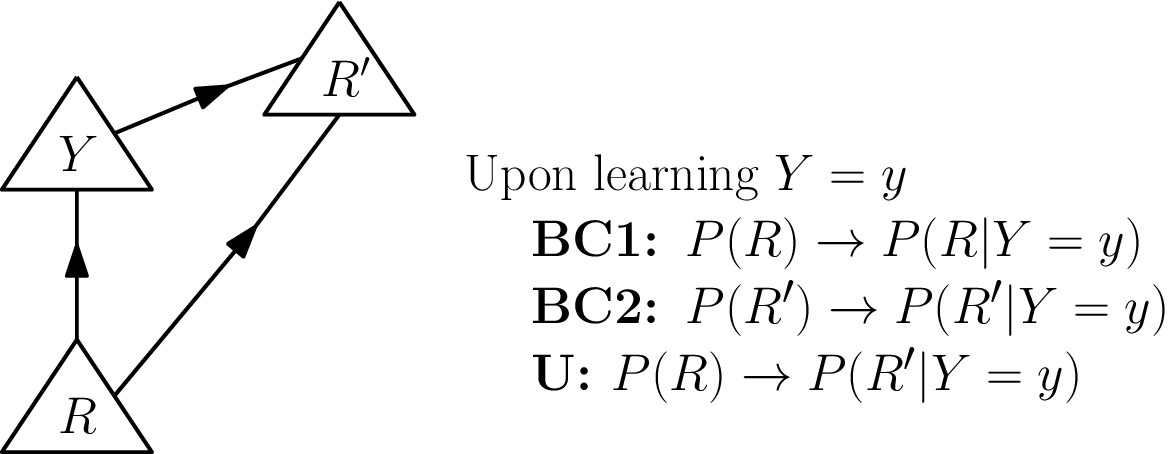}
      \label{fig:Cond:CInstrument}
    }
  \end{minipage}
  \caption{\label{fig:Cond:Instrument}Causal diagrams for the state
    update rules associated with quantum and classical
    measurements. \subref{fig:Cond:QInstrument} A quantum instrument,
    representing how the state of a quantum persistent system changes
    after a measurement.  \subref{fig:Cond:CInstrument} The classical
    analogue of a quantum instrument, representing how the state of a
    classical persistent system changes after a measurement.}
\end{figure}

It is useful to distinguish three kinds of update rules that might be
considered in this scenario, as defined in
fig.~\ref{fig:Cond:CInstrument}.  To describe the difference between
these rules, it is useful to introduce some terminology. A
distribution over $R$ or $R'$ is said to be a \emph{prior}
distribution if it is not conditioned on the value of $Y$, and it is a
\emph{posterior} distribution if it is conditioned on the value of
$Y$.  The temporal ordering that is implicit in this prior/posterior
terminology specifies whether the distribution characterizes the
knowledge you have \emph{before} learning the value of $Y$ or the
knowledge you have \emph{after} learning the value of $Y$.  In other
words, it refers to the time in your epistemological history, relative
to the event of learning $Y$.  On the other hand, the system's
configuration at the time before the occurrence of the measurement
will be called its \emph{initial} configuration and its configuration
at the time after the occurrence of the measurement its \emph{final}
configuration.  $R$ is the initial configuration and $R'$ the final.
The temporal ordering implicit in this latter distinction refers to
the time in the system's ontological history.

Strictly speaking, Bayesian conditioning is a rule that updates what
one knows about \emph{one and the same} variable upon the acquisition
of new information. In other words, it maps a prior distribution about
some variable to a posterior distribution for that same variable.
Consequently, rule \textbf{BC1}, which maps the prior distribution of
the initial configuration of the system to the posterior distribution
of the initial configuration of the system, is an instance of Bayesian
conditioning (it is analogous to updating a retrodictive state as
discussed in \S\ref{Cond:Meas}).  Rule \textbf{BC2} is also an
instance of Bayesian conditioning: it maps the prior distribution of
the final configuration of the system to the posterior distribution of
the final configuration of the system.  However, the rule \textbf{U}
is \emph{not} an instance of Bayesian conditioning because it maps the
prior distribution of the \emph{initial} configuration of the system
to the posterior distribution of the \emph{final} configuration.  In
other words, if one considers $R$ and $R'$ to be distinct variables,
then any map from a distribution over one of them to a distribution
over the other cannot be an instance of Bayesian conditioning.

We now return to the quantum scenario, this time using the conditional
states formalism and paying attention to the analogy with the
classical case.  The measurement is associated with the causal
conditional state $\varrho_{YA'|A}$ that corresponds to the quantum
instrument $\left \{ \mathcal{E}_y^{A'|A} \right \}$. In the quantum
conditional states formalism, there are analogues of each of the three
rules we considered above.  These are specified in
fig.~\ref{fig:Cond:QInstrument}.  The rule \textbf{QBC1} corresponds
to updating a retrodictive state, as considered in \S\ref{Cond:Meas}.
The von Neumann-L\"{u}ders projection postulate is clearly an instance
of rule \textbf{QU}.  If one stipulates that quantum Bayesian
conditioning is a rule that updates the quantum description of
\emph{one and the same} region upon acquiring new information, i.e.\
that it maps a prior state for a region to a posterior state for the
same region, then \textbf{QBC1} and \textbf{QBC2} are instances of
quantum Bayesian conditioning, but \textbf{QU} is not.

\subsubsection{State-Update Rules as a Combination of Belief
  Propagation and Bayesian Conditioning}

If \textbf{QU} is not an instance of Bayesian conditioning, then what
is its status within our framework?  We now show that it is a
composite of two operations: belief propagation followed by Bayesian
conditioning.

First consider the classical analogue.  The analogue of the projection
postulate (or any state update rule arising from an instrument) is
given by rule $\textbf{U}$: $P(R) \to P(R'|Y=y)$.  This can be
obtained by combining an instance of belief propagation, namely
\begin{equation}
  \label{eq:CnonselectiveBP}
  P(R) \to P(R') = \sum_R P(R'|R)P(R),
\end{equation}
followed by the rule $\textbf{BC2}$: $P(R') \to P(R'|Y=y)$, which is
an instance of Bayesian conditioning.  It is useful to express both
these steps in terms of the quantities that are given in the problem,
namely, the conditional $P(Y,R'|R)$ and the prior over $R$, $P(R)$.
The conditional probability distribution $P(R'|R)$ used in
eq.~\eqref{eq:CnonselectiveBP} is simply the marginal of $P(Y,R'|R)$,
i.e. $P(R'|R) = \sum_{Y} P(Y,R'|R)$. Meanwhile, the expression for the
conditional is $P(R'|Y)=P(R',Y)/P(Y)$, where $P(R',Y) = \sum_R
P(Y,R'|R)P(R)$.  Setting $Y=y$ gives the Bayesian conditioning step as
\begin{multline}
  P(R') \to P(R'|Y=y) = \\ \frac{\sum_{R} P(Y=y,R'|R)P(R)}{P(Y=y)}.
\end{multline}

The quantum analogue of this is straightforward.  Quantum state update
rules, such as the projection postulate, are of the form
$\textbf{QU}$: $\rho_A \to \varrho_{A'|Y=y}$.  This is simply a
sequential combination of quantum belief propagation
\begin{equation}
  \label{eq:nonselectiveBP}
  \rho_A \to  \rho_{A'} = \Tr[A]{\varrho_{A'|A} \rho_A},
\end{equation}
with quantum Bayesian conditioning via $\textbf{QBC2}$: $\rho_{A'} \to
\varrho_{A'|Y=y}$.

Again, it is useful to express each of these steps in terms of the
quantities that are given in the problem: the causal conditional state
$\varrho_{YA'|A}$ and the prior $\rho_A$.  The causal conditional
state $\varrho_{A'|A}$ used in eq.~\eqref{eq:nonselectiveBP} is simply
a reduced state of $\varrho_{YA'|A}$, i.e. $\varrho_{A'|A} =
\Tr[Y]{\varrho_{YA'|A}}$.  To gain some intuition for this step, we
translate it into conventional notation.  If the quantum instrument
associated with $\varrho_{YA'|A}$ is denoted $\{ \mathcal{E}_y^{A'|A}
\}$, then eq.~(\ref{eq:nonselectiveBP}) becomes
\begin{equation}
  \label{eq:Qnonselective}
  \rho_A \to \rho_{A'} = \mathcal{E}_{A'|A}(\rho_A),
\end{equation}
where
\begin{equation}
  \mathcal{E}_{A'|A} = \sum_y \mathcal{E}_y^{A'|A}.
\end{equation}
The map $\mathcal{E}_{A'|A}$ is the \emph{non-selective} update map.
It is the appropriate map to apply when you know that the measurement
has been performed, but you do not know which outcome occurred.  The
standard update map, which is appropriate when one also knows the
outcome, is the \emph{selective} update map.  The projection postulate
and its generalization to POVMs are instances of selective update
maps.  Applying the non-selective update map is just an instance of
quantum belief propagation.

Turning to the Bayesian conditioning step, we have
$\varrho_{A'|Y}=\rho_{A'Y} \Sprod \rho_Y^{-1}$, where $\rho_{A'Y} =
\Tr[A]{\varrho_{YA'|A} \rho_A}$.  Combining these and setting $Y=y$
gives
\begin{equation}
  \rho_{A'} \to \varrho_{A'|Y=y}= \Tr[A]{\varrho_{Y=y,A'|A} \rho_A}
  \Sprod \rho_{Y=y}^{-1}.
\end{equation}
In conventional notation, this translates into
\begin{equation}
  \rho_{A'} \to \rho_y^{A'} =
  \frac{\mathcal{E}_y^{A'|A}(\rho_A)}{P(Y=y)}.
\end{equation}
Given the expression for $\rho_{A'}$ in eq.~(\ref{eq:Qnonselective}),
we see that $\textbf{QBC2}$ is simply a transition from your prior
about the system output by the measurement, the result of applying the
\emph{non-selective} update map, to your posterior about the system
output by the measurement, the result of applying the \emph{selective}
update map and normalizing.  That this transition from non-selective
to selective updates should be regarded as analogous to Bayesian
conditioning has previously been argued by Ozawa \cite{Ozawa1997,
  Ozawa1998}.

In fact, the rule $\textbf{QBC2}$ is a particular example of the kind
of Bayesian conditioning considered in \S\ref{Cond:Prep}.  Every
measurement with output region $A'$ can be considered to define a
preparation of $A'$ for every outcome (assuming a fixed input state).
The set of states prepared is given by the components of
$\varrho_{A'|Y}$, which we can compute from the causal joint state
$\varrho_{YA'A}=\varrho_{YA'|A} \Sprod \rho_A$ by tracing over $A$ and
conditioning on $Y$.  The rule $\textbf{QBC2}$ is then just Bayesian
conditioning on $Y$, thought of as a classical preparation variable.

\subsubsection{No Information Gain Without Disturbance}

We have argued that neither the classical rule $\textbf{U}$ nor the
quantum state update rule $\textbf{QU}$ are instances of Bayesian
conditioning.  However, a skeptic might counter that our argument is
an artefact of our insistence that the system before and after the
measurement should be given different labels.  If the conditional
distribution $P(R'|R)$ in the belief propagation rule of
eq.~(\ref{eq:CnonselectiveBP}) has the form
\begin{equation}
  P(R'=r'|R=R)=\delta_{r',r},
\end{equation}
where $\delta_{r',r}$ is the Kronecker-delta function, then $R$ and
$R'$ are perfectly correlated and consequently $P(R'|Y=y)$ has
precisely the same functional form as $P(R|Y=y)$.  In this case, one
could say that the rule $\textbf{U}$ is \emph{effectively} just
Bayesian conditioning.

If $P(R'|R)$ is just a delta function, then we say that the
measurement is \emph{non-disturbing}.  Recall that for every $P(Y|R)$
that characterizes the outcome probabilities for a measurement, there
are many conditionals (i.e.\ classical instruments) $P(Y,R'|R)$ that
might characterize its transformative aspect and are consistent with
$P(Y|R)$.  It is not difficult to see that, among all such
conditionals, there is always one for which the measurement is
non-disturbing, namely,
\begin{equation}
  P(Y,R'=r'|R=r)=P(Y|R=r)\delta_{r',r}.
\end{equation}
Of course, there are also many ways of implementing a measurement of
$P(Y|R)$ such that it \emph{is} disturbing. Therefore, whilst the
update rule $\textbf{U}$ is not an instance of Bayesian conditioning
for every possible implementation of the measurement, there is always
at least one implementation such that it is effectively just Bayesian
conditioning.

The obvious question to ask at this point is whether it is possible to
implement a \emph{quantum} measurement in a non-disturbing way, such
that the associated quantum update rule $\textbf{QU}$ (perhaps the
projection postulate, perhaps some other rule) is effectively just
quantum Bayesian conditioning.  For the measurement to be
non-disturbing, the causal conditional state $\varrho_{A'|A}$ in the
belief propagation rule of eq.~(\ref{eq:nonselectiveBP}) would have to
be of the form
\begin{equation}
  \varrho_{A'|A}= \sum_{j,k} \Ket{j}\Bra{k}_A \otimes \Ket{k}\Bra{j}_{A'},
\end{equation}
i.e it would need to be the partial transpose of the (unnormalized)
maximally entangled state.  This corresponds to perfect correlation
between $A$ and $A'$ because it is Jamio{\l}kowski-isomorphic to the
identity channel.  If the belief propagation step in the rule
$\textbf{QU}$ were of this form, then $\varrho_{A'|Y=y}$ would have
the same functional form as $\varrho_{A|Y=y}$, and the overall quantum
update rule $\textbf{QU}$ would be effectively just Bayesian
conditioning.

Of course, we are only interested in the case where the measurement is
nontrivial.  If the measurement gives \emph{no} information about $A$,
then the posterior is the same as the prior and no real conditioning
has occurred.  Therefore, we restrict our attention to the case where
some information is gained.  In the language of conditional states,
the only kind of measurement that yields no information about the
input state is one associated with a causal conditional state that
factorizes, that is, one of the form $\varrho_{Y|A}= \rho_Y$ (recall
that, in our notation there is an implicit $\otimes I_A$ on the right
hand side of this equation).  In conventional notation, this
corresponds to a POVM of the form $\{P(Y=y)I_A\}$, which generates a
random outcome $Y=y$ from the distribution $P(Y)$ regardless of the
state of $A$.  We are interested in nontrivial measurements for which
$\varrho_{Y|A}$ does not factorize in this way.

With these definitions in hand, the answer to our question is a
resounding ``no'' --- a quantum state update rule of the form
$\textbf{QU}$ can \emph{never} be effectively just Bayesian
conditioning because, unlike the classical case, in quantum theory
information gain necessarily implies a disturbance. This prevents any
$\textbf{QU}$ rule, such as the projection postulate, from being pure
Bayesian conditioning.  Whilst this fact is well-known, it is
instructive to prove it in the conditional states formalism.

\begin{theorem}[No information gain without disturbance]
  \label{thm:InfoGainDisturbance}
  Consider a measurement described by an instrument associated with
  the causal conditional state $\varrho_{YA'|A}$.  It is impossible
  for this measurement to be both informative about $A$
  ($\varrho_{Y|A} \ne \rho_Y$) and non-disturbing ($\varrho_{A'|A}=
  \sum_{j,k} \Ket{j}\Bra{k}_A \otimes \Ket{k}\Bra{j}_{A'}$).
\end{theorem}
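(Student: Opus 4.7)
My strategy is to translate the hypotheses into the Jamio{\l}kowski picture and exploit the rigidity of pure-state purifications, which is the conditional-states manifestation of monogamy of entanglement. First I take the partial transpose of the causal conditional state $\varrho_{YA'|A}$ on the $A$ factor, which by definition~\ref{def:CCS:CCS} produces an honest positive operator $\rho_{YA'|A} \geq 0$. Under this same partial transpose, the non-disturbance hypothesis $\varrho_{A'|A} = \sum_{j,k} \Ket{j}\Bra{k}_A \otimes \Ket{k}\Bra{j}_{A'}$ becomes $\rho_{A'|A} = \Ket{\Phi^+}\Bra{\Phi^+}_{AA'}$, where $\Ket{\Phi^+}_{AA'} = \sum_j \Ket{jj}_{AA'}$ is the unnormalized maximally entangled vector. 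Hence $\Tr[Y]{\rho_{YA'|A}} = \rho_{A'|A}$ is rank one.

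The key ingredient I invoke is an elementary purification-rigidity lemma: if $\rho_{SE} \geq 0$ and $\Tr[E]{\rho_{SE}}$ is proportional to a rank-one projector $\Ket{\psi}\Bra{\psi}_S$, then $\rho_{SE} = \Ket{\psi}\Bra{\psi}_S \otimes \sigma_E$ for some $\sigma_E \geq 0$. This holds because each eigenvector $\Ket{\phi_i}_{SE}$ in the spectral decomposition of $\rho_{SE}$ has $\Tr[E]{\Ket{\phi_i}\Bra{\phi_i}_{SE}}$ equal to a nonnegative multiple of $\Ket{\psi}\Bra{\psi}_S$, which forces $\Ket{\phi_i}_{SE} = \Ket{\psi}_S \otimes \Ket{\xi_i}_E$. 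Applied to $\rho_{YA'|A}$ with $S = AA'$ and $E = Y$, this yields $\rho_{YA'|A} = \Ket{\Phi^+}\Bra{\Phi^+}_{AA'} \otimes \sigma_Y$ for some $\sigma_Y \geq 0$.

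Because $Y$ is classical, $\rho_{YA'|A}$ is a hybrid operator and is therefore diagonal in the preferred $Y$-basis, so $\sigma_Y = \sum_y p_y \Ket{y}\Bra{y}_Y$ with $p_y \geq 0$. The conditional-state normalization $\Tr[YA']{\rho_{YA'|A}} = I_A$ combined with $\Tr[A']{\Ket{\Phi^+}\Bra{\Phi^+}_{AA'}} = I_A$ forces $\sum_y p_y = 1$. Tracing over $A'$ then gives $\rho_{Y|A} = \rho_Y \otimes I_A$, where $\rho_Y = \sum_y p_y \Ket{y}\Bra{y}_Y$, so by theorem~\ref{thm:CCS:POVM} the POVM extracted from $\rho_{Y|A}$ is $\{ p_y I_A \}$. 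Transposing back on $A$ leaves $I_A$ unchanged, so $\varrho_{Y|A} = \rho_Y$ in the implicit-identity convention, contradicting the informativeness hypothesis $\varrho_{Y|A} \neq \rho_Y$.

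The only nontrivial ingredient is the purification-rigidity lemma; the rest amounts to careful book-keeping of partial transposes and of the classical-diagonal structure of hybrid operators. The main pitfall I anticipate is keeping the causal and acausal conditional states cleanly separated throughout and verifying that, because the POVM components end up proportional to $I_A$, the partial transpose on $A$ that relates the two does not alter the final conclusion. An alternative Kraus-operator route would instead invoke the rigidity of Kraus decompositions of the identity channel to conclude that each $\mathcal{E}_y^{A'|A}(\cdot) = p_y(\cdot)$ and thus $E_y^A = p_y I_A$, yielding the same contradiction; I prefer the purification argument because it makes the link to monogamy of entanglement manifest.
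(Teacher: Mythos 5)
Your proof is correct and follows essentially the same route as the paper's: partial-transpose to an acausal (positive) object, observe that non-disturbance forces the $AA'$ marginal to be (proportional to) a maximally entangled projector, and conclude that the state must factorize as $\Ket{\Phi^+}\Bra{\Phi^+}_{AA'}\otimes\sigma_Y$, rendering the measurement trivial. The only difference is that the paper cites monogamy of entanglement as a black box (after normalizing $\rho_{YA'|A}/d$ into a genuine tripartite state), whereas you prove the relevant special case explicitly via your purification-rigidity lemma and carry the computation through to exhibit $\varrho_{Y|A}=\rho_Y$ directly.
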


The proof is a causal analogue of the monogamy of entanglement (see
\cite{Leifer2006} for related ideas).

\begin{proof}
  The operator $\varrho_{YA'|A}$ is the partial transpose over $A$ of
  an acausal conditional state $\rho_{YA'|A}$.  Combining this with
  the maximally mixed state gives a valid tripartite acausal state via
  \begin{equation}
    \rho_{YA'A}=\rho_{YA'|A}\Sprod I_A/d =\rho_{YA'|A}/d,
  \end{equation}
  where $d$ is the dimension of $\Hilb[A]$.  The condition that the
  measurement be non-disturbing is equivalent to
  $\rho_{A'|A}=\sum_{j,k} \Ket{j}\Bra{k}_A \otimes
  \Ket{j}\Bra{k}_{A'}$, which implies that the tripartite state
  $\rho_{YA'A}$ should have a reduced state on $A'A$ that is maximally
  entangled.  Meanwhile, the condition that the measurement be
  informative is equivalent to $\rho_{Y|A} \ne \rho_Y$, which implies
  that the reduced state on $YA$ of $\rho_{YA'A}$ should not be a
  product state.  But, by the monogamy of entanglement, any tripartite
  state $\rho_{YA'A}$ for which $\rho_{AA'}$ is maximally entangled
  must have a product state for its reduced state $\rho_{YA}$.  Hence,
  both conditions cannot be satisfied simultaneously.
\end{proof}

We end with another, less obvious, disanalogy between the quantum and
classical cases that bears on the question of how to interpret the
quantum collapse rule in our framework.  In the previous section, we
showed that the classical update rule $\textbf{U}$ could be decomposed
into belief propagation followed by Bayesian conditioning.  We could
just as well have decomposed it in the opposite order: Bayesian
conditioning followed by belief propagation.  Specifically, we could
first apply $\textbf{BC1}$: $P(R) \to P(R|Y=y)$, and then propagate
the conditioned state via $P(R|Y=y) \to P(R'|Y=y) = \sum_R
P(R'|R,Y=y)P(R|Y=y)$.

It is natural to ask whether such a reverse-order decomposition is
possible in the quantum case.  That is, can $\textbf{QU}$:
$\rho_{A}\to \varrho_{A'|Y=y}$ be decomposed into $\textbf{QBC1}$:
$\rho_A \to \varrho_{A|Y=y}$, followed by belief propagation
$\varrho_{A|Y=y}\to \varrho_{A'|Y=y}$?  Perhaps surprisingly, this
cannot be done.  The belief propagation would have to have the form
$\varrho_{A'|Y=y}= \Tr[A]{\varrho_{A'|A,Y=y} \varrho_{A|Y=y}}$.  But
to compute $\varrho_{A'|A,Y=y}$ from $\varrho_{Y=y,A'|A}$ we need to
move $Y$ from the left of the conditional to the right while keeping
$A$ on the right.  Classical analogy suggests that this could be done
using a conditionalized form of the quantum Bayes' theorem, i.e.
$\varrho_{A'|A,Y=y}=\varrho_{Y=y,A'|A} \Sprod \varrho_{Y=y|A}^{-1}$.
Unfortunately, in the conditional states formalism, valid equations do
not necessarily remain valid when we conditionalize each term (this is
discussed further in \S\ref{Lim}).  In particular, the conditionalized
form of the quantum Bayes' theorem is not valid.  Not every intuition
from classical Bayesian inference carries over into the conditional
states formalism.

\section{Related Work}

\label{Related}

In this section, quantum conditional states are compared to other
proposals for quantum generalizations of conditional probability and
the conditional states formalism is compared to several recently
proposed operational reformulations of quantum theory.

\subsection{Comparison to Other Quantum Generalizations of Conditional
  Probability}

\label{Related:Comp}

Several quantum generalizations of conditional probability have been
proposed in the literature, so it is worth comparing their relative
merits to the conditional state formalism developed here.

Firstly, Cerf and Adami have proposed an alternative definition of an
acausal conditional state \cite{Cerf1997, Cerf1997a, Cerf1998} (their
definition does not extend to the causal case).  For a bipartite state
$\rho_{AB}$, the Cerf-Adami conditional state is
\begin{equation}
  \rho^{\left ( \infty \right )}_{B|A}  = \exp \left ( \log \rho_{AB} - \log
    \rho_A \otimes I_B \right ).
\end{equation}
This proposal has a close connection to the calculus of quantum
entropies, since the conditional von Neumann entropy of a state
$\rho_{AB}$ can be succinctly written as
\begin{equation}
  S(B|A) = -\Tr[AB]{\rho_{AB}\log \rho^{\left ( \infty \right )}_{B|A}},
\end{equation}
which is analogous to the classical formula for the conditional
Shannon entropy
\begin{equation}
  H(S|R) = -\sum_{R,S} P(R,S) \log P(S|R).
\end{equation}
Similar compact formulas hold for other information theoretic
quantities, such as the quantum mutual information and conditional
mutual information.

In \cite{Leifer2008}, Leifer and Poulin introduced a family of
conditional states, again restricted to the acausal case, indexed by a
positive integer $n$ and given by
\begin{equation}
  \label{eq:Disc:nCond}
  \rho^{\left ( n \right )}_{B|A} = \left ( \rho^{\frac{1}{n}}_{AB} \Sprod
    \rho_A^{-\frac{1}{n}} \right )^n.
\end{equation}
This unifies the Cerf-Adami conditional state with the definition used
in the present work in the sense that $\rho_{B|A} = \rho^{(1)}_{B|A}$
and $\rho^{(\infty)}_{B|A} = \lim_{n \rightarrow \infty}
\rho^{(n)}_{B|A}$.

The main concern of \cite{Leifer2008} was the generalization of
graphical models and belief propagation algorithms to quantum theory,
and their use in simulating many-body quantum systems and decoding
quantum error correction codes.  In this context, the $n=1$ and $n
\rightarrow \infty$ cases are particularly interesting.  The $n
\rightarrow \infty$ case is the natural one to use for simulating
many-body systems and it allows for a simple generalization of one
direction of the classical Hammersley-Clifford theorem, which
characterizes the states on Markov Networks.  On the other hand, the
$n=1$ case is more useful for decoding quantum error correction codes
and, as outlined in the present paper, it extends to the causal case
and has close connections to quantum preparations, measurements and
dynamics that are lacking for other values of $n$.  Given that
different applications work better with different definitions of the
conditional state, it is probably fair to say that there is no
uniquely compelling quantum generalization of conditional probability.

With this in mind, note that Coecke and Spekkens have outlined a broad
framework for generalizations of conditional probability within the
category theoretic approach to quantum theory \cite{Coecke2010}.
Encouragingly, it is possible to derive generalizations of Bayes'
theorem and conditioning abstractly within this framework, but it is
not yet clear what axioms within this framework are sufficient to
capture all the important aspects of conditioning.

The final quantum generalization of conditional probability to be
considered here is the \emph{quantum conditional expectation} (see
\cite{Umegaki1962} for the original paper and \cite{Petz1988,
  R'edei2007, Petz2008} for reviews).  This was proposed in the
context of quantum probability theory, which is a non-commutative
generalization of classical measure-theoretic probability within the
framework of operator algebras.  As such, it is well-defined for
systems with infinite dimensional Hilbert spaces as well as for
systems with an infinite number of degrees of freedom, for which there
is more than one unitarily inequivalent Hilbert space representation.
Also, it describes conditioning on an arbitrary algebra of
observables, rather than just on a tensor factor as has been
considered here.  Importantly for the present work, R{\'e}dei has
proposed an argument based on quantum conditional expectations
purporting to show that quantum theory cannot be understood as a
theory of Bayesian inference \cite{Redei1992} (see also
\cite{Valente2007}).

However, quantum conditional expectations have a major flaw that is
not shared by the conditional states formalism presented here.
Fortunately, the full operator-algebraic machinery is not necessary to
make the point; the case of finite dimensional Hilbert spaces and
conditioning on a tensor factor suffices.  Further details of the
general case and how the formalism used below follows from it can be
found in \cite{Petz2008}.

Classically, for a pair of random variables, $R$ and $S$, a
conditional expectation of $S$ given $R$ is a positive map
$\Phi_{R|S,R}$ from functions of $R$ and $S$ to functions of $R$ that
satisfies
\begin{equation}
  \label{eq:Disc:ClassCondExp}
  \Phi_{R|R,S}(f(R)) = f(R)
\end{equation}
for all functions $f(R)$ that are independent of $S$.  Any such map is
explicitly given by
\begin{equation}
  \label{eq:Disc:ClassCondExpExplicit}
  \Phi_{R|R,S}(f(R,S)) = \sum_S P(S|R) f(R,S),
\end{equation}
where $P(S|R)$ is a conditional probability distribution.

Starting from a joint state $P(R,S)$, one can obtain a conditional
expectation by plugging the associated conditional probability
$P(S|R)$ into eq. \eqref{eq:Disc:ClassCondExpExplicit}.  The main
point of this is that it allows the expectation value of any function
$f(R,S)$ to be computed from the marginal probability distribution
$P(R)$ via
\begin{equation}
\sum_R \Phi_{R|R,S}(f(R,S))P(R) = \sum_{R,S} f(R,S) P(R,S).
\end{equation}

The set of functions on $R$ and $S$ can be thought of as the dual
space to the set of probability distributions on $R$ and $S$, where
the linear functional $\hat{f}$ associated with $f(R,S)$ is given by
\begin{equation}
  \hat{f}(P(R,S)) = \sum_{R,S} P(R,S) f(R,S),
\end{equation}
i.e.\ it is the functional that maps the probability distribution
$P(R,S)$ to the expectation value of $f(R,S)$ with respect to
$P(R,S)$.  With respect to this identification, a conditional
expectation $\Phi_{R|R,S}$ has a dual map $\mathcal{E}_{R,S|R}$ that
maps the space of probability distributions over $R$ to the space of
probability distributions over $R$ and $S$.  This is given by
\begin{equation}
  \label{eq:Disc:Extend}
  \mathcal{E}_{R,S|R}(P(R)) = P(S|R)P(R)
\end{equation}
and is called a \emph{state extension} because every probability
distribution $P(R)$ gets mapped to a valid probability distribution
$P(R,S)= P(S|R)P(R)$ on a larger space.  In addition, state extensions
that are dual to conditional expectations satisfy
\begin{equation}
  \label{eq:Disc:Invariant}
  \sum_S \mathcal{E}_{R,S|R}(Q(R)) = Q(R)
\end{equation}
for every input distribution $Q(R)$.

In the finite dimensional, tensor factor case, the quantum conditional
expectation of $B$ given $A$ is a completely positive\footnote{The
  definition only calls for positivity, but it is a theorem that all
  quantum conditional expectations are completely positive (see
  \cite{Petz2008}).} linear map $\Phi_{A|AB}:\Lin[AB] \rightarrow
\Lin[A]$ that acts on the set of observables on $\Hilb[AB]$ and
satisfies
\begin{equation}
  \Phi_{A|AB} \left ( M_A \otimes I_B \right ) = M_A
\end{equation}
for all operators $M_A \in \Lin[A]$.  This is analogous to the
condition given in eq. \eqref{eq:Disc:ClassCondExp}.  The dual map
$\mathcal{E}_{AB|A}:\Lin[A] \rightarrow \Lin[AB]$ acts on states and
is a state extension, which means that
\begin{equation}
  \label{eq:Disc:QExtend}
  \tau_{AB} = \mathcal{E}_{AB|A} \left ( \tau_A \right )
\end{equation}
is a valid state for any state $\tau_A$ on $\Hilb[A]$.  In addition,
the state extensions that are dual to conditional expectations satisfy
\begin{equation}
  \label{eq:Disc:QInvariant}
  \Tr[B]{\mathcal{E}_{AB|A} \left ( \tau_A \right
    )} = \tau_A
\end{equation}
for every state $\tau_A$ on $\Hilb[A]$, which is analogous to
eq. \eqref{eq:Disc:Invariant}.

As in the classical case, one would like to associate every joint
state $\rho_{AB}$ with a conditional expectation, such that the dual
state extension $\mathcal{E}_{AB|A}$ satisfies
\begin{equation}
  \label{eq:Disc:Requirement}
  \mathcal{E}_{AB|A}(\rho_A) = \rho_{AB},
\end{equation}
i.e.\ it should give back the state that you started with when you
input the reduced state.  The analogous requirement is a key property
of classical conditional probability as it is what allows an arbitrary
joint state to be broken up into a marginal and a conditional that are
independent of one another.  Unfortunately, the fact that
eq. \eqref{eq:Disc:QInvariant} holds for every input state means that
this requirement can only be met for product states, i.e.\ states of
the form $\rho_{AB} = \rho_A \otimes \rho_B$.  This severely restricts
the applicability of quantum conditional expectations for describing
the correlations present in quantum states.  Indeed, they are
incapable of representing any correlations at all.  This fact is known
in the quantum probability literature (see \cite{Petz2008} example
9.6), but here is an elementary proof.

Ironically, the easiest way to show that $\rho_{AB}$ has to be a
product state is to use the conditional states formalism as outlined
in this paper.  The state extension $\mathcal{E}_{AB|A}$ is
Jamio{\l}kowski isomorphic to a causal conditional state
$\varrho_{AB|A'}$, where $A'$ has the same Hilbert space as $A$ and
the $'$ is just used to distinguish the input and output spaces.
Then, eq. \eqref{eq:Disc:Requirement} can be rewritten as
\begin{equation}
  \label{eq:Disc:CondRequirement}
  \rho_{AB} = \Tr[A']{\varrho_{AB|A'}\rho_{A'}},
\end{equation}
where $\rho_{A'}$ is the same state as $\rho_A$.  Similarly,
eq. \eqref{eq:Disc:QInvariant} can be rewritten as
\begin{align}
  \tau_A & = \Tr[A'B]{\varrho_{AB|A'}\tau_{A'}} \\
  & = \Tr[A']{\varrho_{A|A'}\tau_{A'}},\label{eq:Disc:CondInvariant}
\end{align}
for all $\tau_A$, where $\tau_{A'}$ is the same state as $\tau_A$ and
$\varrho_{A|A'} = \Tr[B]{\varrho_{AB|A'}}$. Since,
eq. \eqref{eq:Disc:CondInvariant} has to hold for every input state,
$\varrho_{A|A'}$ has to be Jamio{\l}kowski isomorphic to the identity
superoperator, so $\varrho_{A|A'} =
\Ket{\Phi^+}\Bra{\Phi^+}_{A|A'}^{T_{A'}}$.  Since this is pure,
monogamy of entanglement\footnote{The monogamy of entanglement is well
  known for positive operators.  The fact that it also applies to
  locally positive operators follows one of the results of
  \cite{Barnum2006}, which shows that monogamy applies to more general
  probabilistic theories, including one in which states are locally
  positive operators.\label{fn:Monogamy}} entails that
$\varrho_{AB|A'}$ must be of the form $\varrho_{AB|A'} =
\Ket{\Phi^+}\Bra{\Phi^+}_{A|A'}^{T_{A'}} \otimes M_B$ for some
operator $M_B$ on $\Hilb[B]$.  Substituting this into
eq. \eqref{eq:Disc:CondRequirement} gives
\begin{equation}
  \rho_{AB} = \rho_A \otimes M_B,
\end{equation}
which shows that $\rho_{AB}$ must be a product state and $M_B = \rho_B
= \Tr[A]{\rho_{AB}}$.  Therefore, conditional expectations associated
with joint states only exist for product states $\rho_{AB} = \rho_A
\otimes \rho_B$.

It is unclear why quantum probabilists have not regarded this as a
fatal flaw in their definition of quantum conditional expectation.
However, despite this problem, quantum conditional expectations are
still worthy objects of study as they come up in a variety of
contexts.  For example, the projection onto the fixed point set of a
completely positive map is a quantum conditional expectation.  The
point is just that the terminology ``quantum conditional expectation''
is an inaccurate way of describing the way that these maps relate to
quantum states.

For comparison, the acausal conditional state defined in the present
work can also be described as a map $\mathcal{F}_{AB|A}: \Lin[A]
\rightarrow \Lin[AB]$, similar to a state extension, with the crucial
difference that $\mathcal{F}_{AB|A}$ is not linear.  For a conditional
state $\rho_{B|A}$, the map $\mathcal{F}_{AB|A}$ is defined as
\begin{equation}
  \mathcal{F}_{AB|A} \left ( \rho_A \right ) = \rho_A^{\frac{1}{2}} \rho_{B|A}
  \rho_A^{\frac{1}{2}}.
\end{equation}
The nonlinearity allows this map to satisfy
eqs.~\eqref{eq:Disc:QInvariant} and \eqref{eq:Disc:Requirement}
without running into trouble with the monogamy of entanglement.
Crucially though, because this map is nonlinear, it does not have a
dual map that could be regarded as a conditional expectation.
Therefore, the close connection between conditional probabilities and
conditional expectations breaks down within this formalism.

R{\'e}dei's argument against a Bayesian interpretation of quantum
probabilities is based on a variant of Jeffrey conditioning with
respect to a quantum conditional expectation.  In the present context,
if the state of a region $A$ is updated from $\rho_A$ to
$\rho^{\text{post}}_A$, then R{\'e}dei asserts that the state for $AB$
should be updated to $\rho^{\text{post}}_{AB} = \mathcal{E}_{AB|A}
\left ( \rho^{\text{post}}_A \right )$, where $\mathcal{E}_{AB|A}$ is
a state extension derived from the prior state $\rho_{AB}$.  He then
shows that this update rule fails to satisfy an important stability
criterion in the infinite dimensional case.  Since his argument is
crucially based on the linearity of conditional expectations and their
duality with state extensions, it does not apply to the nonlinear maps
$\mathcal{F}_{AB|A}$ associated with quantum conditional states.
However, since the failure only occurs for infinite dimensional
systems, a full refutation will have to wait until the conditional
states formalism has been extended beyond the finite dimensional case
treated here.

To reiterate, it seems unlikely that there is a unique quantum
generalization of conditional probability that has properties
analogous to every single property of classical conditional
probability that is traditionally regarded as important.  For this
reason, it is important to keep applications in mind when defining
quantum conditionals, rather than working in a formal mathematical
vacuum.

\subsection{Operational Formalisms for Quantum Theory}

\label{Related:Op}

Recent efforts to replace the conventional formalism of quantum theory
with a new operational formulation --- typically in an effort to
provide an axiomatic derivation of quantum theory --- have much in
common with the work presented here.  Particularly cognate to our
approach is the work of Hardy \cite{Hardy2005, Hardy2007, Hardy2011},
the Pavia group \cite{Chiribella2009, Chiribella2011}, Oreshkov
\emph{et al.}  \cite{Oreshkov2011} and Coecke's group
\cite{Coecke2009, Coecke2009a}.

The reformulation of quantum theory presented by the Pavia group makes
heavy use of the Choi isomorphism between quantum operations and
bipartite states, and leverages this to represent quantum operations
by operators rather than maps.  Mathematically, this is also how we
achieve a unification of the treatment of acausally-related and
causally-related regions.  In particular, our
proposition~\ref{prop:CCS:Jamiol}, which specifies how to represent a
quantum channel in terms of conditional states, is the counterpart to
the expression of the action of a quantum channel in terms of the
\emph{link product}. Note that the Pavia group uses the Choi
isomorphism whereas we use the Jamio{\l}kowski-isomorphism.  The
latter has the advantage of being basis-independent, so that the
partial transposes that appear in the link product are absent in our
approach.

Hardy's latest work on reformulating quantum theory, using the
\emph{duotensor framework } \cite{Hardy2011}, also represents quantum
operations as operators in much the same way as is done in the quantum
combs framework and our own. Furthermore, Hardy's notion of
\emph{circuit trace} (an example of the causaloid product introduced
in \cite{Hardy2005, Hardy2011}) provides a unified way of representing
a composition of maps as well as tensor products of system states,
which is to say, a unified way of representing correlations between
acausally-related and causally-related regions.  The motivation for
Hardy's work on the causaloid product is very similar to the
motivation for our own, namely, to formulate quantum theory in a
manner that is even handed with regard to possible causal structures.

Recent work by Oreshkov, Costa and Brukner \cite{Oreshkov2011} also
represents correlations between acausally-related and causally-related
regions is a uniform manner by appealing to the Choi isomorphism.

One notable way in which we depart from all of these approaches is
that we treat classical systems internally to the formalism, on a par
with quantum systems, rather than as indices on operators representing
preparations and measurements.

Finally, we compare our approach to the categorical approach of
Coecke, where much of quantum theory (in particular the non-metrical
parts) is reformulated using the mathematical framework of symmetric
monoidal categories and the graphical calculus that can be defined for
these \cite{Coecke2009, Coecke2009a}.  Systems are the objects of the
category, while quantum states and quantum operations are the
morphisms.  The isomorphism between bipartite states and operations
also features prominently in this approach and arises from having a
\emph{compact structure} in the category.  Furthermore, classical
systems can be treated internally within this framework.  It should
also be noted that although Coecke's categorical framework is
typically used to formulate quantum theory as a theory of physical
processes, it can also be used to express our formulation of quantum
theory as a theory of inference.  For inference among acausally
related regions, this was done in \cite{Coecke2010}.  An extension of
this work to the case of inference among causally-related regions
should be instructive.

The mathematics of all of these approaches and our own are quite
analogous.  It is in the interpretational aspect that the greatest
differences are to be found. In the reformulations considered above,
quantum theory is given a rather minimalist interpretation -- it is
viewed as a framework for making predictions about the outcomes of
certain measurements given certain preparations.  Quantum states ---
however they are reformulated mathematically --- are taken to be
representations of preparation procedures, while quantum operations
are taken to be representations of transformation procedures.  These
approaches follow the interpretational tradition of operationalism.
By contrast, our work takes quantum states to represent the beliefs of
an agent about a spatio-temporal region and takes quantum operations
to represent belief propagation; it has an \emph{epistemological
}flavor rather than an \emph{operational }one.  For instance, the
notions that we deem to be most promising for making sense of the
quantum formalism are those one finds in textbooks on statistics and
inductive inference, such as Bayes' theorem, conditional
probabilities, statistical independence, conditional independence, and
sufficient statistics and not the notions that are common to the
operational approaches, such as measurements, transformations and
preparations.  In this sense, our approach is more closely aligned in
its philosophical starting point with \emph{quantum Bayesianism}, the
view developed by Caves, Fuchs and Schack \cite{Caves2002a, Caves2006,
  Fuchs2003a, Fuchs2010, Fuchs2010a}\footnote{Unlike the quantum
  Bayesians, however, we are not committed to the notion that the
  beliefs represented by quantum states concern the outcomes of future
  experiments. Rather, the picture we have in mind is of the quantum
  state for a region representing beliefs about the physical state of
  the region, even though we do not yet have a model to propose for
  the underlying physical states.}.

The particular merit of our epistemological approach is the strong
analogy that it affords between quantum inference and classical
probabilistic inference. It makes the conceptual content of various
quantum expressions more transparent than they would otherwise be.  It
also bolsters the view that quantum states ought to be interpreted as
states of knowledge.

Our take on how to incorporate causal assumptions into quantum theory
also has a rather different starting point than the works described
above.  The relation we posit between the notions of causation and
correlation is most informed by the work on causal networks (also
known as Bayesian networks), summarized in the textbooks of Pearl
\cite{Pearl2009} and of Glymour, Scheines and Spirtes
\cite{Spirtes2001}.  We ultimately hope to generalize this analysis of
causality by replacing classical probability theory with quantum
theory, understood as a theory of Bayesian inference
\cite{Leifer2008}.  If the goal is to develop a formalism for quantum
theory that is causally neutral then we argue that the causal network
approach holds an advantage over the operational approach.
Specifically, the problem with taking experimental operations as a
primitive notion is that they already have a notion of causal
structure built into them insofar as the output of an operation is
causally influenced by the input, but not vice-versa (similarly, if
the operation is a measurement, then the outcome is causally
influenced by the input but not vice-versa).  On the other hand,
elementary regions --- the primitive notion of the causal network
approach --- are precisely the sorts of thing that can enter into
arbitrary causal relations with one another, whilst not having any
intrinsic causal structure themselves.

\section{Limitations of the $\Sprod$-Product}

\label{Lim}

Using the $\Sprod$-product makes the conditional states formalism look
very similar to classical probability theory.  Often, by replacing
probabilities with operators and ordinary products with
$\Sprod$-products, one can obtain an equation that is valid in the
conditional states formalism from one that is valid for classical
conditional probabilities.  However, because the $\Sprod$-product is
non-associative and non-commutative, this does not always happen.  In
this section, the limitations of the $\Sprod$-product and the
disanalogies between classical conditional probabilities and quantum
conditional states are discussed.

\subsection{Conditionalized Equations}

\label{Lim:Cond}

In classical probability theory, if one takes a universally
valid\footnote{Universal validity means that the equation holds for
  the joint, marginal, and conditional probability distributions
  derived from \emph{any} joint probability distribution
  $P(R,S,\ldots)$, rather than holding only in special cases, e.g.\
  due to symmetries or degeneracies of a particular distribution.}
equation relating conditional, joint and marginal probability
distributions over a set of variables $R,S,\ldots$ and conditionalizes
every term on a disjoint variable $T$, then the equation that results
is still universally valid.  For example, the equation
\begin{equation}
  P(R,S) = P(S|R)P(R)
\end{equation}
generalizes to
\begin{equation}
  \label{eq:Lim:CondChain}
  P(R,S|T) = P(S|R,T)P(R|T),
\end{equation}
where a $T$ has been placed on the right of the $|$ in each term.

Unfortunately, the analogous property does not hold in the conditional
states formalism.  For example, for acausal states the analog of
eq. \eqref{eq:Lim:CondChain} would be
\begin{equation}
  \label{eq:Lim:QCondChain}
  \rho_{AB|C} = \rho_{B|AC} \Sprod \rho_{A|C}.
\end{equation}
Writing this out explicitly, the left hand side is
\begin{equation}
  \rho_C^{-\frac{1}{2}} \rho_{ABC} \rho_C^{-\frac{1}{2}},
\end{equation}
whereas the right hand side is
\begin{equation}
  \left ( \rho_C^{-\frac{1}{2}} \rho_{AC} \rho_C^{-\frac{1}{2}}\right
  )^{\frac{1}{2}} \rho_{AC}^{-\frac{1}{2}} \rho_{ABC}
  \rho_{AC}^{-\frac{1}{2}} \left ( \rho_C^{-\frac{1}{2}} \rho_{AC}
    \rho_C^{-\frac{1}{2}} \right )^{\frac{1}{2}}.
\end{equation}
Because the $\Sprod$-product is non-commutative, the terms involving
$\rho_{AC}$ and $\rho_{AC}^{-1}$ cannot be brought together and made
to cancel as they would in the classical case.  Therefore,
counterexamples to this rule can occur when $\rho_{AC}$ and $\rho_C$
do not commute.  For example, it is straightforward to verify that a
generalized W-state of the form
\begin{multline}
  \Ket{\psi}_{ABC} = \frac{1}{2} \left ( \Ket{001}_{ABC} +
    \Ket{010}_{ABC} \right ) \\ + \frac{1}{\sqrt{2}} \Ket{100}_{ABC}
\end{multline}
does not satisfy this rule.  The calculation is not especially
instructive, so it is omitted.

Note that a universally valid equation relating quantum conditional,
joint and marginal states is still universally valid if one
conditionalizes every term on a classical variable.  This follows from
the equality of the expressions for the left hand and right hand sides
of eq.~\eqref{eq:Lim:QCondChain} when $C$ is replaced by a classical
variable $T$.

\subsection{Limitations of Causal Joint States}

\label{Lim:CJS}

In \S\ref{CCS:TJS}, a causal joint state was defined as an operator of
the form $\varrho_{B|A} \Sprod \rho_A$.  This representation of two
causally-related regions highlights the symmetry with the acausal
case, since, up to a partial transpose, it is the same sort of
operator that would be used to represent two acausally-related
regions.  Based on this, the quantum Bayes' theorem was developed in a
way that is formally equivalent for acausally and causally-related
regions.  For this to work, we only needed causal joint states for two
causally-related regions.  However, since acausal states are not
limited to just two regions, it is natural to ask whether causal joint
states can be defined for more than two regions.  Unfortunately, the
naive generalization does not work for scenarios with mixed causal
structure, e.g.\ two causally-related regions that are both acausally
related to a third regions, and it also does not work for multiple
time-steps.

In the remainder of this section, these limitations are discussed and
a different definition of a causal joint state is suggested, which
works more generally, but does not exhibit the symmetry between the
acausal and causal scenarios for two regions.

\subsubsection{Mixed Causal Scenarios}

\label{Lim:CJS:MCS}

If region $B$ is in the causal future of region $A$, then they can be
assigned a causal joint state $\varrho_{AB} = \varrho_{B|A} \Sprod
\rho_A$.  The causal conditional state is Jamio{\l}kowski isomorphic
to the dynamical CPT map $\mathcal{E}_{B|A}$ and $\rho_A$ is the input
state.  Now suppose that there is a third region, $C$, that is
acausally-related to both $A$ and $B$, i.e.\ we start with a joint
state $\rho_{AC}$ of $A$ and $C$ and apply $\mathcal{E}_{B|A}$ to
region $A$, whilst doing nothing to $C$, to obtain $B$ and $C$ in a
joint state $\rho_{BC} = \left ( \mathcal{E}_{B|A} \otimes
  \mathcal{I}_C \right ) \left (\rho_{AC} \right )$.

One might think that a joint state of $ABC$ could be defined via
$\varrho_{ABC} = \rho_{B|A} \Sprod \rho_{AC}$.  We would like this
state to have the correct input and output states as marginals, so, in
particular, it should satisfy
\begin{equation}
  \label{eq:Lim:SProdWrong}
  \rho_{BC} = \mathcal{E}_{B|A} \left ( \rho_{AC} \right )=
  \Tr[A]{\varrho_{ABC}}.
\end{equation}
Unfortunately, this fails because theorem~\ref{thm:ACS:Jamiol} implies
that
\begin{equation}
  \label{eq:Lim:CJ}
  \rho_{BC} = \Tr[A]{\varrho_{B|A} \rho_{AC}},
\end{equation}
whereas
\begin{equation}
  \Tr[A]{\varrho_{ABC}} = \Tr[A]{\varrho_{B|A} \Sprod \rho_{AC}}.
\end{equation}
Because there is no trace over $C$, the cyclic property of the trace
cannot be used to equate these two expressions.

In fact, in addition to not having $\rho_{BC}$ as a reduced state,
$\varrho_{ABC}$ fails to correctly represent the correlations between
the causally-related regions $A$ and $B$ as well,
i.e. $\Tr[C]{\varrho_{ABC}} \neq \varrho_{AB}$.

To see the failure of both these conditions explicitly, consider an
example in which $\Hilb[A]$, $\Hilb[B]$ and $\Hilb[C]$ are all of
dimension $d$, the input state $\rho_{AC} = \frac{1}{d}
\Ket{\Phi^+}\Bra{\Phi^+}_{AC}$ is a maximally entangled state, and
$\mathcal{E}_{B|A}$ is the identity superoperator.  Then, the output
state is $\rho_{BC} = \frac{1}{d} \Ket{\Phi^+}\Bra{\Phi^+}_{BC}$, and
the causal joint state is $\varrho_{AB} = \frac{1}{d}
\Ket{\Phi^+}\Bra{\Phi^+}_{AB}^{T_A}$.

In this case, explicitly calculating the operator $\varrho_{ABC} =
\varrho_{B|A} \Sprod \rho_{AC}$ gives
\begin{equation}
  \varrho_{ABC} = \frac{1}{d} \Ket{\Phi^+}\Bra{\Phi^+}_{AC} \otimes
  \frac{I_B}{d}.
\end{equation}
Whilst the reduced state $\rho_{AC}$ gives the correct input state,
$\rho_{BC}$ is not the output state and $\varrho_{AB}$ is not the
causal joint state.  In fact, no operator of the form $\varrho_{ABC} =
\varrho_{B|A} \Sprod \rho_{AC}$ can ever satisfy all three conditions
$\rho_{AC} = \frac{1}{d}\Ket{\Phi^+}\Bra{\Phi^+}_{AC}$, $\varrho_{AB}
= \frac{1}{d} \Ket{\Phi^+}\Bra{\Phi^+}_{AB}^{T_A}$ and $\rho_{BC} =
\frac{1}{d} \Ket{\Phi^+}\Bra{\Phi^+}_{BC}$ simultaneously.  This is
because $\varrho_{ABC}$ is a locally positive operator, and hence it
must satisfy the monogamy of entanglement, but requiring all three
bipartite reduced states to be maximally entangled would violate
monogamy and, in fact, only one of the three conditions can be
satisfied.  Because of this, we have to move beyond locally positive
operators in order to faithfully represent all the correlations.

The validity of eq. \eqref{eq:Lim:CJ} suggests that using an ordinary
product instead of a $\Sprod$-product could be a better way of
defining a joint state in this scenario.  Indeed, the operator
$\varrho_{B|A} \rho_{AC}$ does have all the correct bipartite
marginals.  For example, in our example of maximally entangled input
and identity map dynamics, explicit calculation gives
\begin{equation}
  \varrho_{B|A} \rho_{AC} = \frac{1}{d} \sum_{j,k,m=1}^d
  \Ket{j}\Bra{k}_A \otimes \Ket{m}\Bra{j}_B \otimes \Ket{m}\Bra{k}_C.
\end{equation}
This violates the monogamy of entanglement, which is allowed because
it is not a locally positive operator.  In fact, it is not even
Hermitian.

Whilst this may turn out to be a useful representation, it has a
number of disadvantages compared to the $\Sprod$-product.  First of
all, it is non-unique because the operator $\rho_{AC} \varrho_{B|A}$
gives an equally good account of all the correlations.  One could even
take combinations of the two operators, such as
\begin{equation}
  \frac{1}{2} \left ( \varrho_{B|A} \rho_{AC} + \rho_{AC}
    \varrho_{B|A} \right ),
\end{equation}
which might be useful because it is Hermitian.

Secondly, given an arbitrary operator $M_{AB}$, it is not clear how to
check whether it is of the form $\varrho_{B|A} \rho_{A}$ without
running over all possible input states $\rho_{A}$ and checking whether
$M_{AB} \rho_{A}^{-1}$ is a valid causal conditional state.  In
contrast, when using the $\Sprod$-product, we know that an operator is
of the form $\varrho_{B|A} \Sprod \rho_A$ iff it is the partial
transpose of a valid acausal state on $AB$, which is a straightforward
condition to check.

Finally, a related issue is that, when using the ordinary product
instead of the $\Sprod$-product, the set of possible causal joint
states depends on the causal direction.  For an evolution from $A$ to
$B$, a causal joint state would be of the form $\varrho_{B|A}\rho_A$,
but for an evolution from $B$ to $A$ it would be of the form
$\varrho_{A|B} \rho_B$.  These define two different sets of operators,
so we lose the symmetry that was used to obtain Bayes' theorem in the
causal case.

\subsubsection{Multiple Time-Steps}

\label{Lim:CJS:MTS}

For three space-like separated regions, a Markovian joint state
$\rho_{ABC}$, where $A$ and $C$ are conditionally independent given
$B$, can always be decomposed via the chain rule into
\begin{equation}
  \label{eq:Lim:Markov}
  \rho_{ABC} = \rho_{C|B} \Sprod \left (\rho_{B|A} \Sprod
    \rho_A\right )
\end{equation}

For three time-like separated regions, the analogue of Markovianity is a
two time-step dynamics where the first CPT map is
$\mathcal{E}_{B|A}:\Lin[A]\rightarrow \Lin[B]$ and the second is
$\mathcal{E}_{C|B}:\Lin[B]\rightarrow \Lin[C]$, i.e.\ it has no direct
dependence on $A$.  It is natural to ask whether this situation can be
represented by a tripartite causal joint state that can be decomposed
in a manner similar to eq. \eqref{eq:Lim:Markov}.

Suppose that the causal conditional states isomorphic to
$\mathcal{E}_{B|A}$ and $\mathcal{E}_{C|B}$ are $\varrho_{B|A}$ and
$\varrho_{C|B}$.  If the input state is $\rho_A$ then the first
time-step is represented as
\begin{align}
  \rho_B & = \mathcal{E}_{B|A} \left ( \rho_A \right ) \\
  & = \Tr[A]{\varrho_{B|A} \rho_A}
\end{align}
and the second time-step is represented as
\begin{align}
  \rho_C & = \mathcal{E}_{C|B} \left ( \rho_B \right ) \\
  & = \Tr[B]{\varrho_{C|B} \rho_B}.
\end{align}
It follows that
\begin{align}
  \rho_C & = \mathcal{E}_{C|B} \circ \mathcal{E}_{B|A} \left ( \rho_A \right ) \\
  & = \Tr[B]{\varrho_{C|B} \Tr[A]{\varrho_{B|A} \rho_A}},
\end{align}
The question is whether the complete dynamics might also be
representable as
\begin{align}
  \rho_C   & = \Tr[AB]{\varrho_{C|B} \Sprod \left ( \varrho_{B|A} \Sprod \rho_A
    \right )},
\end{align}
which in turn would suggest that $\varrho_{ABC} = \varrho_{C|B} \Sprod
\left ( \varrho_{B|A} \Sprod \rho_A \right )$ might be a good
candidate for a tripartite causal joint state.

This fails because the expression $\varrho_{ABC} = \varrho_{C|B}
\Sprod \left ( \varrho_{B|A} \Sprod \rho_A \right )$ is not well
defined.  To see this, expand the first $\Sprod$-product to obtain
\begin{equation}
  \varrho_{ABC} = \sqrt{\varrho_{B|A} \Sprod \rho_A} \varrho_{C|B}
  \sqrt{\varrho_{B|A} \Sprod \rho_A}.
\end{equation}
The term $\sqrt{\varrho_{B|A} \Sprod \rho_A}$ is not well defined
because $\varrho_{B|A} \Sprod \rho_A$ is not a positive operator, but
only locally positive, so it may have negative eigenvalues.  Hence, it
does not have a unique square root.  This could be remedied by
adopting a convention for square roots of Hermitian operators, such as
demanding that the square root of each negative eigenvalue has
positive imaginary part.  The resulting tripartite operator
$\varrho_{ABC}$ would then have the correct reduced states, $\rho_A$,
$\rho_B$ and $\rho_C$, representing the state of the system at each
time-step.  However, it is not related to a tripartite state of three
acausally-related regions via partial transposes, so the symmetry that
motivates the use of the $\Sprod$-product representation is lost.
This loss of symmetry resonates with previous work showing that
tripartite ``entanglement in time'' is not isomorphic to ordinary
tripartite entanglement \cite{Taylor2004}.

As with the evolution of a subsystem, these problems can be remedied
by using the ordinary product instead of the $\Sprod$-product to
represent time evolutions, but it is subject to the same disadvantages
that were discussed in that context.

\section{Open Questions}

\label{Open}

\subsection{The Quantum Conditionals Problem}

Monogamy of entanglement is a key feature that distinguishes classical
from quantum information.  There is a closely related difference
between classical conditional probability distributions and acausal
quantum conditional states that deserves further investigation.
Classically, if $P(R)$ is a probability distribution and $P(S|R)$,
$P(T|R,S)$ are conditional probability distributions then
\begin{equation}
  P(R,S,T) = P(T|R,S) P(S|R) P(R)
\end{equation}
is always a valid probability distribution.  Furthermore, the
distribution so defined has the correct marginal and conditional
states in the sense that
\begin{align}
  \sum_{S,T} P(R,S,T) & = P(R) \\
  \frac{\sum_T P(R,S,T)}{\sum_{S,T} P(R,S,T)} & = P(S|R) \\
  \frac{P(R,S,T)}{\sum_T P(R,S,T)} & = P(T|R,S),
\end{align}
whenever the left hand sides are well defined.

In the quantum case, the analogous properties do not hold.  Although a
tripartite state $\rho_{ABC}$ can always be decomposed as
\begin{equation}
  \label{eq:Disc:Chain}
  \rho_{ABC} = \rho_{C|AB} \Sprod \rho_{B|A} \Sprod \rho_A
\end{equation}
via the chain rule, one cannot start with an arbitrary reduced state
$\rho_A$ and two arbitrary conditional states, $\rho_{B|A}$ and
$\rho_{C|AB}$, and expect there to be a joint state $\rho_{ABC}$ that
has these conditional and reduced states.

For example, suppose that $B$ and $C$ are conditionally independent of
$A$, i.e. $\rho_{C|AB} = \rho_{C|A}$.  Now, suppose that $\rho_A$ is
chosen to have more than one nonzero eigenvalue and $\rho_{B|A}$ is
chosen to be maximally entangled, e.g. $\rho_{B|A} =
\Ket{\Phi^+}\Bra{\Phi^+}_{B|A}$.  This implies that the reduced state
$\rho_{AB} = \rho_{B|A} \Sprod \rho_A$ is pure and entangled.  If, in
addition, $\rho_{C|A}$ is chosen to be $\rho_{C|A}=
\Ket{\Phi^+}\Bra{\Phi^+}_{C|A}$, then the reduced state $\rho_{AC} =
\rho_{C|A} \Sprod \rho_A$ should also be pure and entangled.  However,
monogamy of entanglement says that this is impossible, so these
choices of conditional state are not compatible.  Determining the full
set of constraints on coexistent conditional states for three
acausally-related regions would be an interesting problem, as would
determining the computational complexity of the $n$-party
generalization.

\section{Conclusions}

\label{Conc}

The formalism of quantum conditional states presented in this paper
provides a step towards a formalism for quantum theory that is
independent of causal structure, as a theory of probabilistic
inference ought to be, and provides a closer analogy between quantum
theory and classical probability theory.  There is significant
potential to use these results to simplify and generalize existing
approaches to problems in quantum information theory.  As an example
of this, in a companion paper \cite{Leifer2011} we provide an approach
to the problems of compatibility and pooling of quantum states that is
based on a principled application of Bayesian conditioning and is a
direct generalization of existing approaches to the classical versions
of these problems.  It seems unlikely that the possibility of this
approach would have been noticed within the conventional quantum
formalism.  However, this is only the beginning and we anticipate
applications to quantum estimation theory and to quantum cryptography,
such as studying the relationship between cryptography protocols that
employ different causal arrangements to achieve the same task.  As it
stands, the formalism is limited to two disjoint elementary quantum
regions and the most pressing problem is to generalize it to arbitrary
causal scenarios.  This is a topic of ongoing work.

\begin{acknowledgments}
  We thank David Poulin for useful discussions about conditional
  states.  M.L. would like to thank UCL Department of Physics and
  Astronomy for their hospitality.  Part of this work was completed
  whilst M.L. was a postdoctoral fellow at the Institute for Quantum
  Computing, University of Waterloo, Canada, where he was supported in
  part by MITACS and ORDCF.  M.L. also acknowledges support from The
  Foundational Questions Institute through the Grant RFP1-06-006.
  Research at Perimeter Institute is supported by the Government of
  Canada through Industry Canada and by the Province of Ontario
  through the Ministry of Economic Development and Innovation.
\end{acknowledgments}

\appendix

\section{Proofs of Theorems}
\label{Appendix:Proofs}

{ 
\renewcommand{\thetheorem}{\ref{thm:ACS:Jamiol}}
\begin{theorem}[Jamio{\l}kowski Isomorphism]
  Let $\mathfrak{E}_{B|A}:\Lin[A] \to \Lin[B]$ be a linear map and let
  $M_{AC} \in \Lin[AC]$ be a linear operator, where $\Hilb[C]$ is a
  Hilbert space of arbitrary dimension.  Then, the action of
  $\mathfrak{E}_{B|A}$ on $\Lin[A]$ (tensored with the identity on
  $\Lin[C]$) is given by
  \begin{equation}
    \label{eq:App:reverseJamiol}
    (\mathfrak{E}_{B|A} \otimes \mathcal{I}_C) \left ( M_{AC} \right )
    = \Tr[A]{\rho_{B|A} M_{AC}},
  \end{equation}
  where $\rho_{B|A}\in \Lin[AB]$ is given by
  \begin{equation}
    \label{eq:App:Jamiol}
    \rho_{B|A} \equiv (\mathfrak{E}_{B|A'} \otimes \mathcal{I}_{A}) \left (
      \sum_{j,k} \Ket{j}\Bra{k}_A \otimes \Ket{k}\Bra{j}_{A'} \right ).
  \end{equation}
  Here, $A'$ labels a second copy of $A$, $\mathcal{I}_A$ is the
  identity superoperator on $\Lin[A]$, and $\{\Ket{j} \}$ is an
  orthonormal basis for $\Hilb[A]$.

  Furthermore, the operator $\rho_{B|A}$ is an acausal conditional
  state, i.e.\ it satisfies definition~\ref{def:ACS:ACS}, if and
  only if $\mathfrak{E}_{B|A}\circ T_A$ is CPT, where $T_A: \Lin[A]
  \to \Lin[A]$ denotes the linear map implementing the partial
  transpose relative to some basis.
\end{theorem}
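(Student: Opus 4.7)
The plan is to prove the two parts of the theorem in sequence, reducing the second part to the standard (and well-known) Choi--Jamio\l{}kowski correspondence between CP maps and positive operators.

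For part (a), the equation to verify, $(\mathfrak{E}_{B|A} \otimes \mathcal{I}_C)(M_{AC}) = \Tr[A]{\rho_{B|A} M_{AC}}$, is linear in $M_{AC}$, and the map $\mathcal{I}_C$ simply carries the $C$-factor through both sides. Hence it suffices to verify the identity on spanning operators of the form $M_A = \Ket{m}\Bra{n}_A$. Expanding $\rho_{B|A}$ from eq.~\eqref{eq:App:Jamiol} and using $\Tr[A]{\Ket{j}\Bra{k}_A\Ket{m}\Bra{n}_A} = \delta_{km}\delta_{jn}$, one gets $\Tr[A]{\rho_{B|A}\Ket{m}\Bra{n}_A} = \sum_{j,k}\delta_{km}\delta_{jn}\,\mathfrak{E}_{B|A}(\Ket{k}\Bra{j}) = \mathfrak{E}_{B|A}(\Ket{m}\Bra{n})$, which is the right-hand side. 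This is the only computation needed for part (a).

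For part (b), the key observation is that the operator $\sum_{j,k}\Ket{j}\Bra{k}_A \otimes \Ket{k}\Bra{j}_{A'}$ appearing inside eq.~\eqref{eq:App:Jamiol} is precisely the partial transpose on $A$ (or equivalently on $A'$) of the unnormalized maximally entangled state $\Ket{\Phi^+}\Bra{\Phi^+}_{AA'} = \sum_{j,k}\Ket{j}\Bra{k}_A \otimes \Ket{j}\Bra{k}_{A'}$. Since $T_{A'}$ commutes with $\mathfrak{E}_{B|A'}$ applied to the $A$ factor and with everything on the $A$-side, one obtains
\begin{equation}
\rho_{B|A} = \bigl((\mathfrak{E}_{B|A'}\circ T_{A'}) \otimes \mathcal{I}_A\bigr)\bigl(\Ket{\Phi^+}\Bra{\Phi^+}_{AA'}\bigr),
\end{equation}
so $\rho_{B|A}$ is (up to the conventional ordering of tensor factors) exactly the Choi operator of the composed map $\mathfrak{E}_{B|A}\circ T_A$.

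With this identification in hand, the two conditions in definition~\ref{def:ACS:ACS} follow from the standard Choi--Jamio\l{}kowski theorem applied to $\mathfrak{E}_{B|A}\circ T_A$: positivity of $\rho_{B|A}$ is equivalent to complete positivity of $\mathfrak{E}_{B|A}\circ T_A$, while the constraint $\Tr[B]{\rho_{B|A}} = I_A$ reduces, via part (a) with $M_{AC} = I_A$ traced against $I_B$ on the output, to the statement that $\mathfrak{E}_{B|A}\circ T_A$ is trace preserving (since $T_A$ is trace preserving, this is equivalent to $\Tr[B]{\mathfrak{E}_{B|A}(\sigma_A)} = \Tr[A]{\sigma_A}$ for all $\sigma_A$). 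The main ``obstacle'' is purely notational: keeping the partial transposes, the primed and unprimed copies of $\Hilb[A]$, and the placement of the map $\mathfrak{E}_{B|A'}$ straight long enough to recognize the Choi operator. Once that identification is made, the rest is a direct invocation of known results.
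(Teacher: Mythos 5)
Your proof is correct and follows essentially the same route as the paper's: both hinge on recognizing $\sum_{j,k}\Ket{j}\Bra{k}_A\otimes\Ket{k}\Bra{j}_{A'}$ as the partial transpose of $\Ket{\Phi^+}\Bra{\Phi^+}_{AA'}$, so that $\rho_{B|A}$ is the Choi operator of $\mathfrak{E}_{B|A}\circ T_A$, with trace preservation read off from $\Tr[B]{\rho_{B|A}}=I_A$ and complete positivity from positivity of the Choi operator. The only differences are cosmetic --- you verify eq.~\eqref{eq:App:reverseJamiol} on matrix units rather than by inserting resolutions of the identity, and you cite Choi's theorem where the paper re-derives the needed directions inline.
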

\addtocounter{theorem}{-1}
}

To prove this theorem, it is useful to make use of the connection
between the Choi and the Jamio{\l}kowski isomorphisms.  The map that
is Choi-isomorphic to an operator $\rho_{B|A}$ is given by
\begin{equation}
  \label{eq:App:reverseChoi}
  (\mathcal{E}_{B|A} \otimes \mathcal{I}_{C}) \left ( M_{AC} \right ) = \Bra{\Phi^+}_{AA'}
  \rho_{B|A'} M_{AC} \Ket{\Phi^+}_{AA'},
\end{equation}
where $\Ket{\Phi^+}_{AA'} = \sum_j \Ket{jj}_{AA'}$ is a canonical
maximally entangled state defined with respect to a preferred basis
$\{\Ket{j} \}$ for $\Hilb[A]$.

The operator is recovered from the map via
\begin{equation}
  \label{eq:App:Choi}
  \rho_{B|A} \equiv \left ( \mathcal{E}_{B|A'} \otimes \mathcal{I}_A
  \right )\left ( \Ket{\Phi^+}\Bra{\Phi^+}_{AA'} \right ).
\end{equation}

Because
\begin{equation}
  \sum_{j,k} \Ket{j}\Bra{k}_A \otimes \Ket{k}\Bra{j}_{A'}
  = \left ( \Ket{\Phi^+}\Bra{\Phi^+}_{AA'} \right )^{T_A},
\end{equation}
eqs.~\eqref{eq:App:Choi} and \eqref{eq:App:Jamiol} differ only by
whether one uses the projector onto the maximally entangled state
(Choi) or the partial transpose thereof (Jamio{\l}kowski) and the two
isomorphic maps to $\rho_{B|A}$ are related by $\mathfrak{E}_{B|A} =
\mathcal{E}_{B|A} \circ T_A$, where $T_A$ is the partial transpose
operation with respect to the basis used to define the Choi
isomorphism.

The equivalence of eq.~\eqref{eq:App:reverseChoi} and
eq.~\eqref{eq:App:reverseJamiol} is established as follows.
\begin{multline}
  \left ( \mathcal{E}_{B|A} \otimes
    \mathcal{I}_C \right ) \left ( M_{AC} \right
  ) = \\
  \Bra{\Phi^+}_{AA'}\rho_{B|A'} M_{AC} \Ket{\Phi^+}_{AA'} \\
  = \sum_{j,k} \Bra{jj}_{AA'} \rho_{B|A'} M_{AC}
  \Ket{kk}_{AA'} \\
  = \sum_{j,k} \Bra{j}_{A'} \rho_{B|A'} \Ket{k}_{A'} \Bra{j}_A
  M_{AC} \Ket{k}_A \\
  = \sum_{j,k} \Bra{j}_{A} \rho_{B|A} \Ket{k}_{A} \Bra{k}_A
  M_{AC}^{T_A} \Ket{j}_A \\
  = \Tr[A]{\rho_{B|A} M_{AC}^{T_A}} \\
  = \left ( \left [ \mathfrak{E}_{B|A} \circ T_A \right ]
    \otimes \mathcal{I}_{C} \right ) \left ( M_{AC} \right )
\end{multline}

\begin{proof}[Proof of theorem~\ref{thm:ACS:Jamiol}]
  Eq.~(\ref{eq:App:reverseJamiol}) is derived from
  eq.~(\ref{eq:App:Jamiol}) as follows:
   \begin{multline}
     \left ( \mathfrak{E}_{B|A} \otimes \mathcal{I}_C \right ) \left (
       M_{AC} \right ) \\
     = \left ( \mathfrak{E}_{B|A'} \otimes \mathcal{I}_C \right )
     \left ( \left [ \sum_k \Ket{k}\Bra{k}_{A'} \right ] M_{A'C}
       \left [ \sum_j \Ket{j} \Bra{j}_{A'}  \right ] \right ) \\
     = \left ( \mathfrak{E}_{B|A'} \otimes \mathcal{I}_C \right )
     \left ( \sum_{j,k} \Bra{k}_{A'} M_{A'C}
       \Ket{j}_{A'} \Ket{k}\Bra{j}_{A'} \right ) \\
     = \left ( \mathfrak{E}_{B|A'} \otimes \mathcal{I}_C \right )
     \left ( \sum_{j,k} \Tr[A]{\Ket{j}\Bra{k}_{A} M_{AC}}
       \Ket{k}\Bra{j}_{A'} \right ) \\
     = \Tr[A]{\left ( \mathfrak{E}_{B|A'} \otimes \mathcal{I}_C \right
       ) \left ( \sum_{j,k} \Ket{j}\Bra{k}_A \otimes \Ket{k}\Bra{j}_{A'} \right )
       M_{AC}} \\
     = \Tr[A]{\rho_{B|A} M_{AC}}
   \end{multline}

   Now suppose that $\rho_{B|A}$ is an acausal conditional state,
   i.e.\ it is positive and $\Tr[B]{\rho_{B|A}} = I_A$.  To show that
   the Jamio{\l}kowski-isomorphic map composed with a partial
   transpose, $\mathfrak{E}_{B|A} \circ T_A$, is trace-preserving, note
   that $T_A$ is trace-preserving, so it suffices to show that
   $\mathfrak{E}_{B|A}$ is trace-preserving.  This proceeds as follows:
   \begin{align}
     \Tr[B]{\mathfrak{E}_{B|A} \left ( M_A \right )} & =
     \Tr[B]{\Tr[A]{\rho_{B|A} M_A}} \\
     & = \Tr[A]{\Tr[B]{\rho_{B|A}} M_A} \\
     & = \Tr[A]{I_A M_A} \\
     & = \Tr[A]{M_A}.
   \end{align}
   To show that $\mathfrak{E}_{B|A} \circ T_A$ is completely positive, note that
   it is equal to the Choi-isomorphic map $\mathcal{E}_{B|A}$, so it
   suffices to show that the latter is completely positive.  By
   definition
   \begin{equation}
     \mathcal{E}_{B|A} \otimes \mathcal{I}_C \left ( \rho_{AC} \right
     ) = \Bra{\Phi^+}_{AA'} \rho_{B|A'} \otimes \rho_{AC}\Ket{\Phi^+}_{AA'}
   \end{equation}
   and this is a positive operator for arbitrary positive $\rho_{AC}$,
   where $\Hilb[C]$ can have any dimension.

   Conversely, suppose $\mathfrak{E}_{B|A} \circ T_A$ is CPT.  Then,
   $\mathfrak{E}_{B|A}$ is also trace preserving, so
   \begin{align}
     \Tr[B]{\rho_{B|A}} & = \Tr[B]{\sum_{j,k} \Ket{j}\Bra{k}_A \otimes
       \mathfrak{E}_{B|A} \left ( \Ket{k}\Bra{j}_{A'} \right )} \\
     & = \sum_{j,k} \Ket{j}\Bra{k} \otimes \Tr[B]{\mathfrak{E}_{B|A}
       \left ( \Ket{k}\Bra{j}_{A'} \right )}  \\
     & = \sum_{j,k} \Ket{j}\Bra{k}_A \otimes
     \Tr[A']{\Ket{k}\Bra{j}_{A'}} \\
     & = \sum_{j,k} \Ket{j}\Bra{k}_A \delta_{j,k} \\
     & = \sum_j \Ket{j}\Bra{j}_A \\
     & = I_A.
   \end{align}
   Also,
   \begin{equation}
     \rho_{B|A} = \left ( \mathcal{E}_{B|A'} \otimes \mathcal{I}_A \right ) \left (
       \Ket{\Phi^+}\Bra{\Phi^+}_{AA'} \right ),
   \end{equation}
   and this is a CPT map acting on a positive operator, so
   $\rho_{B|A}$ is positive.
\end{proof}

{
\renewcommand{\thetheorem}{\ref{thm:CCS:States}}
\begin{theorem}
  Let $\sigma_{A|X}$ be a hybrid operator, so that by
  eq.~(\ref{eq:CCS:HybridOp}) it can be written as
  \begin{equation}
    \label{eq:App:States}
    \sigma_{A|X} = \sum_x \rho^A_x \otimes \Ket{x}\Bra{x}_X,
  \end{equation}
  for some set of operators $\{ \rho^A_x\}$.  Then, $\sigma_{A|X}$
  satisfies the definition of both an acausal and a causal
  conditional state for $A$ given $X$, iff each of the components
  $\rho^A_x$ is a normalized state on $\Hilb[A]$.
\end{theorem}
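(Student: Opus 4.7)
The plan is to verify directly that the defining conditions of an acausal conditional state (positivity and $\Tr[A]{\sigma_{A|X}} = I_X$) reduce, for an operator of the block-diagonal form in eq.~\eqref{eq:App:States}, to the conditions that each component $\rho^A_x$ be positive and unit-trace. I will also argue that in this case the acausal and causal conditions coincide, so no separate analysis is needed for causal conditional states.

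First I would handle the causal/acausal equivalence. Since the partial transpose on $X$ is taken in its preferred basis $\{\Ket{x}\}$ by convention, it acts trivially on operators that are diagonal in that basis, and in particular on every hybrid operator of the form \eqref{eq:CCS:HybridOp}. Hence $\sigma_{A|X}^{T_X}=\sigma_{A|X}$, so $\sigma_{A|X}$ satisfies definition~\ref{def:ACS:ACS} iff it satisfies definition~\ref{def:CCS:CCS}, and it suffices to work with the acausal condition throughout.

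Next, I would check positivity. Because the expression in \eqref{eq:App:States} is block-diagonal in $X$, with the $x$-block equal to $\rho^A_x$ acting on the subspace $\Hilb[A]\otimes \text{span}\{\Ket{x}\}$, we have that $\sigma_{A|X} \geq 0$ iff every block $\rho^A_x\geq 0$. Concretely, for any $\Ket{\psi}_{AX} = \sum_x \Ket{\phi_x}_A \otimes \Ket{x}_X$,
\begin{equation}
\Bra{\psi}_{AX} \sigma_{A|X} \Ket{\psi}_{AX} = \sum_x \Bra{\phi_x}_A \rho^A_x \Ket{\phi_x}_A,
\end{equation}
which is non-negative for all choices of $\{\Ket{\phi_x}\}$ iff each $\rho^A_x \geq 0$.

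Finally, I would verify the normalization condition. A direct computation gives
\begin{equation}
\Tr[A]{\sigma_{A|X}} = \sum_x \Tr[A]{\rho^A_x} \Ket{x}\Bra{x}_X,
\end{equation}
and this equals $I_X = \sum_x \Ket{x}\Bra{x}_X$ iff $\Tr[A]{\rho^A_x} = 1$ for every $x$. Combining the two equivalences, $\sigma_{A|X}$ is a valid (acausal, hence also causal) conditional state iff each $\rho^A_x$ is both positive and unit-trace, i.e.\ a normalized density operator on $\Hilb[A]$. There is no real obstacle here; the argument is essentially a block-decomposition computation, with the only conceptual point being the invariance of hybrid operators under partial transpose on $X$ in its preferred basis, which collapses the causal and acausal cases.
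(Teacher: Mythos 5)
Your proposal is correct and follows essentially the same route as the paper's own proof: direct verification that positivity and the condition $\Tr[A]{\sigma_{A|X}} = I_X$ reduce blockwise to positivity and unit trace of each $\rho^A_x$, together with the observation that hybrid operators are invariant under partial transpose on $X$ in its preferred basis, which collapses the causal case onto the acausal one. The only differences are organizational (you front-load the partial-transpose invariance, and you spell out both directions of the positivity equivalence where the paper only details the converse), so nothing further is needed.
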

\addtocounter{theorem}{-1} }
\begin{proof}
  Suppose $\sigma_{A|X}$ has the form of eq.~\eqref{eq:App:States} for
  a set of normalized states $\{\rho^A_x\}$.  Then it is clearly
  positive and satisfies $\Tr[A]{\sigma_{A|X}} = I_X$ because
  $\Tr[A]{\rho^A_x} = 1$ for every $x$.  Therefore, $\sigma_{A|X}$ is
  an acausal conditional state.  On the other hand,
  $\sigma_{A|X}$ is invariant under partial transpose on $X$, so it
  is also a causal conditional state.

  Conversely, suppose that $\sigma_{A|X}$ is an acausal conditional
  state.  This means that it is positive and satisfies
  $\Tr[A]{\sigma_{A|X}} = I_X$.  Positivity means that
  $\Bra{\psi}_{AX} \sigma_{A|X} \Ket{\psi}_{AX} \geq 0$ for all
  $\Ket{\psi}_{AX} \in \Hilb[AX]$.  If $\sigma_{A|X}$ has the form of
  eq.~\eqref{eq:App:States} then taking $\Ket{\psi}_{AX} =
  \Ket{\phi}_A \otimes \Ket{x}_A$ gives $\Bra{\phi}_A \rho^A_x
  \Ket{\phi}_A \geq 0$.  By varying over all $\Ket{\phi}_A \in
  \Hilb[A]$, this implies that each $\rho^A_x$ is a positive operator.
  To prove that these operators are normalized note that
  \begin{equation}
    \Tr[A]{\sigma_{A|X}} = \sum_x \Tr[A]{\rho^A_x} \Ket{x}\Bra{x}_X.
  \end{equation}
  This is an eigendecomposition of $\Tr[A]{\sigma_{A|X}}$ with
  eigenvalues $\Tr[A]{\rho^A_x}$ and if this is the identity operator
  then each of these eigenvalues must be $1$.

  On the other hand, if $\sigma_{A|X}$ is a causal conditional state,
  then its partial transpose over $X$ must be positive and satisfy
  $\Tr[A]{\varrho^{T_X}_{A|X}} = I_X$.  However, operators of the form
  of eq.~\eqref{eq:App:States} are invariant under partial
  transpose on $X$, so the same argument applies.
\end{proof}

{
\renewcommand{\thetheorem}{\ref{thm:CCS:POVM}}
\begin{theorem}
  Let $\sigma_{Y|A}$ be a hybrid operator so that it can be written in
  the form
  \begin{equation}
    \label{eq:App:POVM}
    \sigma_{Y|A} = \sum_y \Ket{y}\Bra{y}_Y \otimes  E^A_y,
  \end{equation}
  for some set of operators $\{ E^A_y \}$.  Then, $\sigma_{Y|A}$
  satisfies the definition of both an acausal and a causal
  conditional state for $Y$ given $A$ iff the components $E^A_y$ form
  a POVM on $\Hilb[A]$, i.e.\ each $E^A_y$ is positive and $\sum_y
  E^A_y = I_A$.
\end{theorem}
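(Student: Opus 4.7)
The plan is to mirror the structure of the proof of Theorem~\ref{thm:CCS:States} given in the appendix, exploiting the symmetry between the two cases: the roles of the classical and quantum systems are simply swapped, so the arguments about positivity and the trace condition carry over with minimal modification. I would prove both the acausal and causal cases simultaneously by noting that operators of the form in eq.~\eqref{eq:App:POVM} have a nice behavior under partial transpose on $A$, since $\sigma_{Y|A}^{T_A} = \sum_y \Ket{y}\Bra{y}_Y \otimes (E^A_y)^{T_A}$ has the same form with $\{(E^A_y)^{T_A}\}$ in place of $\{E^A_y\}$, and the transpose maps POVMs to POVMs bijectively.

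For the forward direction, assuming $\{E^A_y\}$ is a POVM, I would verify positivity of $\sigma_{Y|A}$ by observing that it is a sum of tensor products of positive operators supported on orthogonal subspaces of $\Hilb[Y]$, and verify $\Tr[Y]{\sigma_{Y|A}} = \sum_y E^A_y = I_A$ directly. The acausal conditional state definition is thus satisfied. For the causal case, the same argument applied to $\sigma_{Y|A}^{T_A}$, together with the fact that $\{(E^A_y)^{T_A}\}$ is also a POVM, shows that $\sigma_{Y|A}^{T_A}$ is acausal, hence $\sigma_{Y|A}$ is a valid causal conditional state by definition~\ref{def:CCS:CCS}.

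For the converse, assuming $\sigma_{Y|A}$ is an acausal conditional state, positivity of $\sigma_{Y|A}$ combined with the sandwich identity $\Bra{y}_Y \sigma_{Y|A} \Ket{y}_Y = E^A_y$ immediately gives positivity of each $E^A_y$; the trace condition $\Tr[Y]{\sigma_{Y|A}} = \sum_y E^A_y = I_A$ completes the POVM conditions. For the causal case, I would apply exactly the same argument to the partial transpose $\sigma_{Y|A}^{T_A}$, obtaining that $\{(E^A_y)^{T_A}\}$ is a POVM, and then transpose once more to conclude that $\{E^A_y\}$ is a POVM.

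There is no substantive obstacle here: the whole argument is bookkeeping, with the only subtlety being to handle the causal case by going through the partial transpose, and to remember that the transpose preserves the set of POVMs (since it preserves positivity and the identity). The proof is essentially a direct adaptation of theorem~\ref{thm:CCS:States}, with $A$ and $X$ swapped and the roles of the two trace conditions correspondingly exchanged.
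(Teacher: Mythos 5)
Your proposal is correct and follows essentially the same route as the paper's own proof: direct verification of positivity and the partial-trace condition for the forward direction, the sandwich identity $\Bra{y}_Y\sigma_{Y|A}\Ket{y}_Y = E^A_y$ for the converse, and reduction of the causal case to the acausal one via the partial transpose together with the fact that the transpose maps POVMs to POVMs.
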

\addtocounter{theorem}{-1} }
\begin{proof}
  Suppose $\sigma_{Y|A}$ has the form of eq.~\eqref{eq:App:POVM}
  for a POVM $\{E^A_y\}$.  Then it is clearly positive and satisfies
  $\Tr[Y]{\sigma_{Y|A}} = \sum_y E^A_y = I_A$.  Therefore,
  $\sigma_{Y|A}$ is an acausal conditional state.  On the other
  hand
  \begin{equation}
    \sigma_{Y|A}^{T_A} = \sum_y \Ket{y}\Bra{y}_Y \otimes \left (E^A_y
    \right )^{T_A}.
  \end{equation}
  is also positive because the positive operators $E^A_y$ remain
  positive under the transpose.  Also, $\Tr[Y]{\sigma_{Y|A}^{T_A}} =
  \sum_y \left (E^A_y\right )^{T_A} = I_A^{T_A} = I_A$.  Therefore,
  $\sigma_{Y|A}$ is also a causal conditional state.

  Conversely, suppose that $\sigma_{Y|A}$ is an acausal conditional
  state.  This means that it is positive and satisfies
  $\Tr[Y]{\sigma_{Y|A}} = I_A$.  By the same argument used in the
  proof of theorem~\ref{thm:CCS:States}, positivity implies that, if
  $\sigma_{Y|A}$ is of the form of eq.~\eqref{eq:App:POVM}, then each
  of the components $E^A_y$ must be positive.  Since
  $\Tr[Y]{\sigma_{Y|A}} = \sum_y E^A_y$, the components must form a
  POVM.

  On the other hand, if $\sigma_{Y|A}$ is an acausal conditional state
  then, by the argument just given, its partial transpose over $A$
  must be of the form of eq.~\eqref{eq:App:POVM} for some POVM
  $\{E^A_y\}$.  This means that $\sigma_{Y|A}$ itself can be written
  as
  \begin{equation}
    \sigma_{Y|A} = \sum_y \Ket{y}\Bra{y}_Y \otimes \left (E^A_y
    \right )^{T_A},
  \end{equation}
  but since the operators $\left ( E^A_y \right )^{T_A}$ form a POVM
  whenever $\{E^A_y\}$ is a POVM, $\sigma_{Y|A}$ is of the required
  form.
\end{proof}

{
\renewcommand{\thetheorem}{\ref{thm:CCS:CondBP}}
\begin{theorem}
  Let $\mathcal{E}_{B|A}$, $\mathcal{E}_{C|B}$ and $\mathcal{E}_{C|A}$
  be linear maps such that $\mathcal{E}_{C|A} = \mathcal{E}_{C|B}
  \circ \mathcal{E}_{B|A}$.  Then, the Jamio{\l}kowski isomorphic
  operators, $\varrho_{B|A}$, $\varrho_{C|B}$ and $\varrho_{C|A}$
  satisfy
  \begin{equation}
    \label{eq:App:CondBP}
    \varrho_{C|A} = \Tr[B]{\varrho_{C|B}\varrho_{B|A}}.
  \end{equation}
  Conversely, if three operators satisfy eq.~\eqref{eq:App:CondBP},
  then the Jamio{\l}kowski isomorphic maps satisfy $\mathcal{E}_{C|A}
  = \mathcal{E}_{C|B} \circ \mathcal{E}_{B|A}$.
\end{theorem}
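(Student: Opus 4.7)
The plan is to prove the forward direction by applying each side of the desired identity to an arbitrary operator $M_A \in \Lin[A]$ and invoking the Jamio{\l}kowski isomorphism (theorems~\ref{thm:ACS:Jamiol} and \ref{thm:CCS:Jamiol}, in particular the representation of a CPT map's action via eq.~\eqref{eq:CCS:reverseJamiol}) to move freely between maps and operators. Specifically, I would start from the composition hypothesis and compute
\begin{equation}
  \mathcal{E}_{C|A}(M_A) = \mathcal{E}_{C|B}\bigl(\mathcal{E}_{B|A}(M_A)\bigr) = \mathcal{E}_{C|B}\bigl(\Tr[A]{\varrho_{B|A} M_A}\bigr),
\end{equation}
then use linearity of $\mathcal{E}_{C|B}$ to pull it inside $\Tr[A]{\cdot}$ and apply the Jamio{\l}kowski representation of $\mathcal{E}_{C|B}$ once more, obtaining $\Tr[AB]{\varrho_{C|B}\varrho_{B|A} M_A}$. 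Since by definition $\mathcal{E}_{C|A}(M_A) = \Tr[A]{\varrho_{C|A}M_A}$, equality for all $M_A$ forces $\varrho_{C|A} = \Tr[B]{\varrho_{C|B}\varrho_{B|A}}$ as operators on $\Hilb[AC]$.

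For the converse, I would run essentially the same chain of equalities in reverse: starting from any three operators satisfying eq.~\eqref{eq:CCS:CondBP}, define the associated maps through the Jamio{\l}kowski formula~\eqref{eq:ACS:Jamiol}, substitute into $\Tr[A]{\varrho_{C|A} M_A}$, and use linearity plus the representation of each map's action to recognize the right-hand side as $\mathcal{E}_{C|B}\circ \mathcal{E}_{B|A}(M_A)$. Since this holds for every $M_A$, the composed map equals $\mathcal{E}_{C|A}$.

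The one step that requires mild care — and is the most likely source of confusion rather than a genuine obstacle — is the bookkeeping of implicit identities and the order in which the partial traces over $A$ and $B$ are taken. Because $\varrho_{C|B}$ acts trivially on $\Hilb[A]$ and $\varrho_{B|A}$ acts trivially on $\Hilb[C]$, the two factors commute inside the combined trace, so $\Tr[B]{\varrho_{C|B}\varrho_{B|A}}$ is unambiguously defined and the cyclic/separability of the trace justifies moving $M_A$ (which acts only on $\Hilb[A]$) past $\varrho_{C|B}$ without issue. Once this tensor-structure accounting is made explicit — most cleanly by writing everything as operators on $\Hilb[ABC]$ tensored with identities where needed — both directions reduce to the single observation that two linear maps on $\Lin[A]$ agreeing on all inputs must be equal, which is exactly the content guaranteed by the Jamio{\l}kowski isomorphism being an isomorphism.
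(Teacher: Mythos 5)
Your proposal is correct and rests on the same engine as the paper's own proof, namely the representation $\mathcal{E}(M_A)=\Tr[A]{\varrho\, M_A}$ of a map's action by its Jamio{\l}kowski operator: your converse argument is essentially verbatim the paper's chain of trace identities, and your forward argument establishes that same chain and then invokes injectivity of the isomorphism, whereas the paper instead computes $\varrho_{C|A}$ directly from the defining formula by inserting resolutions of the identity on $\Hilb[A']$. The only step worth spelling out in a full write-up is the injectivity claim itself --- that $\Tr[A]{N_{AC}M_A}=\Tr[A]{N'_{AC}M_A}$ for all $M_A$ forces $N_{AC}=N'_{AC}$, which follows at once by taking $M_A=\Ket{k}\Bra{j}_A$ --- and you correctly identify this as exactly the content of the Jamio{\l}kowski correspondence being an isomorphism.
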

\addtocounter{theorem}{-1} }

\begin{proof}
  By definition, the Jamio{\l}kowski isomorphic operator to
  $\mathcal{E}_{C|A}$ is
  \begin{align}
    \varrho_{C|A} & = (\mathcal{E}_{C|A'} \otimes \mathcal{I}_A) \left
      ( \sum_{j,k}
      \Ket{j}\Bra{k}_A \otimes \Ket{k}\Bra{j}_{A'} \right ) \\
    & = \sum_{j,k} \Ket{j}\Bra{k} \otimes \left [ \mathcal{E}_{C|B} \circ
    \mathcal{E}_{B|A'} \left ( \Ket{k}\Bra{j}_{A'} \right ) \right ]
  \end{align}
  Applying theorem~\ref{thm:CCS:Jamiol} to $\mathcal{E}_{B|A'}$ gives
  \begin{multline}
    \varrho_{C|A}  = \\ \sum_{j,k} \Ket{j}\Bra{k}_A \otimes
    \mathcal{E}_{C|B} \left ( \Tr[A']{\varrho_{B|A'}
        \Ket{k}\Bra{j}_{A'}} \right ) \\
     = \sum_{j,k} \Ket{j}\Bra{k}_A \otimes \mathcal{E}_{C|B} \left (
      \Bra{j}_{A'}\varrho_{B|A'} \Ket{k}_{A'} \right ),
  \end{multline}
  and then applying the same theorem to $\mathcal{E}_{C|B}$ gives
  \begin{multline}
    \varrho_{C|A} = \\ \sum_{j,k} \Ket{j}\Bra{k}_A \otimes
      \Bra{j}_{A'}\Tr[B]{\varrho_{C|B}\varrho_{B|A'}} \Ket{k}_{A'}.
  \end{multline}
  Since $A'$ is a dummy label in this expression, it can be changed to
  $A$ and then
  \begin{align}
    \varrho_{C|A} & = \sum_{j,k} \Ket{j}\Bra{j}_A
    \Tr[B]{\varrho_{C|B}\varrho_{B|A}} \Ket{k}\Bra{k}_A \\
    & = \Tr[B]{\varrho_{C|B}\varrho_{B|A}}.
  \end{align}

  For the converse direction, we have
  \begin{align}
    \mathcal{E}_{C|A} \left ( M_A \right ) & = \Tr[A]{\varrho_{C|A}
      M_A} \\
    & = \Tr[A]{\Tr[B]{\varrho_{C|B}\varrho_{B|A}}M_A} \\
    & = \Tr[B]{\varrho_{C|B} \Tr[A]{\varrho_{B|A}M_A}} \\
    & = \Tr[B]{\varrho_{C|B} \mathcal{E}_{B|A} \left ( M_A \right )}
    \\
    & = \mathcal{E}_{C|B} \left ( \mathcal{E}_{B|A} \left ( M_A \right
      ) \right ) \\
    & = \mathcal{E}_{C|B} \circ \mathcal{E}_{B|A} \left ( M_A \right ).
  \end{align}
\end{proof}

\bibliography{quantumpooling}

\end{document}